\let\ACMmaketitle=\maketitle
\renewcommand{\maketitle}{\begingroup\let\footnote=\thanks 
	\ACMmaketitle\endgroup}
\renewcommand\footnotetextcopyrightpermission[1]{} 
\newcommand\fw{0.32}
\newtheorem*{corollary*}{Corollary}
\newtheorem{corollary}{Corollary}
\newtheorem{lma}{Lemma}
\newcommand\Intuition[1]{\textbf{Intuition.} \emph{#1}}
\newcommand\Note[1]{\textbf{Note.} \emph{#1}}
\definecolor{darkgreen}{rgb}{0.0, 0.5, 0.13}
\DeclareMathOperator*{\argmax}{arg\,max}
\DeclareMathOperator*{\argmin}{arg\,min}
\definecolor{darkgrey}{RGB}{70,70,70}
\definecolor{lightgrey}{RGB}{200,200,200}
\bfseries\color{black!400!black},
\crefname{section}{§}{§§}
\Crefname{section}{§}{§§}
\newcommand*{\affmark}[1][*]{\textsuperscript{#1}}
\DeclareSymbolFont{matha}{OML}{txmi}{m}{it}
\DeclareMathSymbol{\varS}{\mathord}{matha}{83}
\colorlet{hlcolor}{yellow!20}
\DeclareRobustCommand{\hltext}[1]{{#1}}
\newcommand{\macb}[1]{\textbf{\textsf{#1}}}
\begin{document}
	
	\newcommand{\conflux}{CO$\mathit{nf}$LUX\xspace}
	\newcommand{\chol}{CO$\mathit{nf}$CHOX\xspace}
	\newcommand{\xparting}{\mbox{$X$-Partitioning}\xspace}
	\newcommand{\xpart}{\mbox{$X$-partition}\xspace}

\title[Near-I/O-Optimal Matrix Factorizations]{On the 
Parallel I/O Optimality of Linear Algebra Kernels: 
Near-Optimal Matrix Factorizations}         

\author{
	Grzegorz Kwasniewski\affmark[1], 
	Marko Kabic\affmark[1]\affmark[2], 
	Tal Ben-Nun\affmark[1], 
	Alexandros Nikolaos Ziogas\affmark[1],
	Jens Eirik Saethre\affmark[1], 
	André Gaillard\affmark[1],
Timo Schneider\affmark[1], 
Maciej Besta\affmark[1],
Anton Kozhevnikov\affmark[1]\affmark[2],
Joost VandeVondele\affmark[1]\affmark[2],
 Torsten Hoefler\affmark[1]\\
	{\vspace{1em}\affmark[1]Department of Computer Science, 
	ETH Zurich, 
\affmark[2]Swiss National Computing Center
\vspace{1em}}
}

\renewcommand{\shortauthors}{G. Kwasniewski et al.}

\begin{abstract}
Matrix factorizations are among the most important building 
blocks of scientific computing. However, state-of-the-art libraries 
are not communication-optimal, underutilizing current 
parallel architectures.  We present novel algorithms for 
Cholesky and LU factorizations that utilize an
asymptotically communication-optimal 2.5D decomposition. 
We first establish a theoretical framework for deriving parallel 
I/O lower bounds for linear algebra kernels, 
and then utilize its insights to derive Cholesky and LU 
schedules, both communicating $N^3/(P \sqrt{M})$ 
elements per processor, where M is the local memory size. 
The empirical results match our 
theoretical analysis: our implementations communicate 
significantly less than Intel MKL, SLATE, and the
asymptotically communication-optimal CANDMC and CAPITAL libraries.
Our code outperforms these state-of-the-art libraries
in almost all tested scenarios, with matrix sizes ranging from 2,048 to 
524,288 on up to 512 CPU nodes of the Piz Daint supercomputer, 
decreasing the time-to-solution by up to three times. Our code is 
ScaLAPACK-compatible and 
available as an open-source library.
\end{abstract}

\maketitle

\section{Introduction}
\label{sec:intro}

Matrix factorizations, such as LU and Cholesky 
decompositions, play a crucial role in many 
scientific 
computations~\cite{rectangularML, 
meyer2000matrix,krishnamoorthy2013matrix}, 
and their performance can dominate the 
overall runtime of entire applications~\cite{rpa}. 
Therefore,
accelerating these routines is of great 
significance for numerous domains~\cite{joost, cp2k}. 
The ubiquity and importance 
of LU factorization is even reflected by the fact that it is 
used to rank top 
supercomputers worldwide~\cite{TOP500_HPL}.

Since the arithmetic complexity of matrix factorizations is $\mathcal{O}(N^3)$ 
while the input size is $\mathcal{O}(N^2)$, these kernels are traditionally 
considered
compute-bound. However, the end of Dennard 
scaling~\cite{dennard1974design} puts 
increasing pressure on data movement minimization, as the cost of moving data 
far exceeds its computation cost, both in terms of power and 
time~\cite{kestor2013quantifying,padal}.
Thus, deriving 
algorithmic I/O lower bounds is a subject of both theoretical 
analysis~\cite{general_arrays, redblue, IronyMMM} and practical value for
developing I/O-efficient schedules~\cite{maciejBC, edgarTradeoff, 
choleskyQRnew}.

While asymptotically optimal matrix factorizations were proposed, among others, 
by Ballard et al.~\cite{cholesky1} and Solomonik et al.~\cite{2.5DLU, 
choleskyQRnew}, we observe two major challenges with the existing approaches: 
First, the presented algorithms are only asymptotically optimal: the I/O cost 
of these
proposed parallel algorithms can be as high as 7 times the lower bound for 
LU~\cite{2.5DLU} and up to 16 times for Cholesky~\cite{choleskyQRnew}. This 
means that they communicate less than ``standard'' 2D 
algorithms like ScaLAPACK~\cite{scalapackLayout} only for 
almost prohibitively large numbers of 
processors --- e.g., according to the LU cost 
model~\cite{2.5DLU}, it requires more 
than 15,000 processors to 
communicate less than an optimized 2D algorithm. 
Second, their time-to-solution performance can be worse than highly-optimized, 
existing 2D-parallel libraries~\cite{choleskyQRnew}.

To tackle these challenges, we first provide a 
\emph{general} 
method
for deriving \emph{precise} I/O lower bounds of Disjoint 
Array Access 
Programs 
(DAAP) --- a broad range
of programs composed of a sequence
of statements enclosed in an arbitrary number of nested loops.
We then illustrate the applicability of our 
framework to derive 
parallel I/O lower bounds of Cholesky and LU 
factorizations: $\frac{1}{3}\frac{N^3}{P \sqrt{M}}$ and 
$\frac{2}{3}\frac{N^3}{P \sqrt{M}}$ elements, respectively, 
where $N$ is the  
matrix size, $P$ is the number of processors, and $M$ is the 
local memory size.

	\begin{figure}
	\centering
	\subfloat
	{\hspace{-1.6em}
		\includegraphics[width=0.283 \textwidth]
		{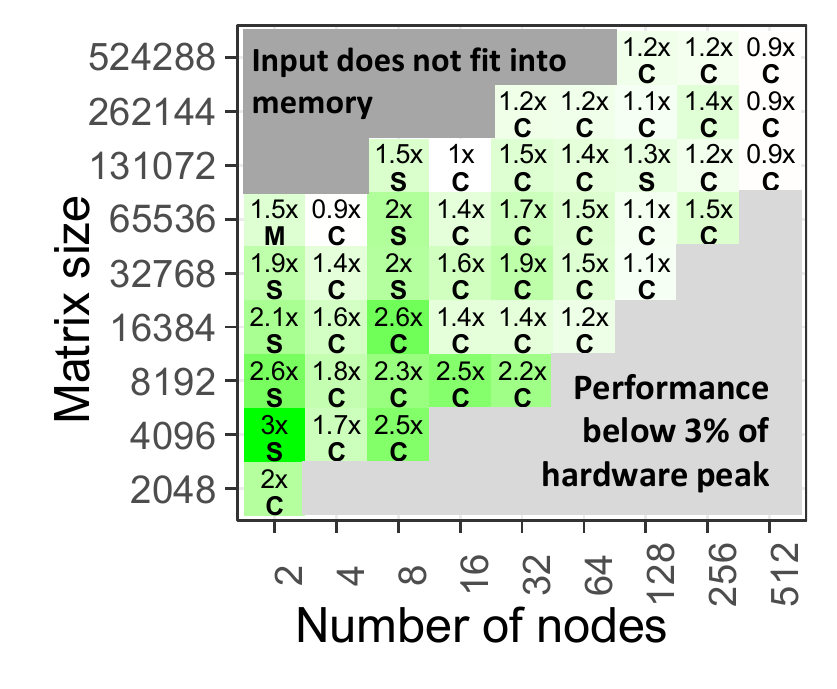}}
	%
	\subfloat
	{		
		\hspace{-1.0em}
		\includegraphics[width=0.234 \textwidth]
		{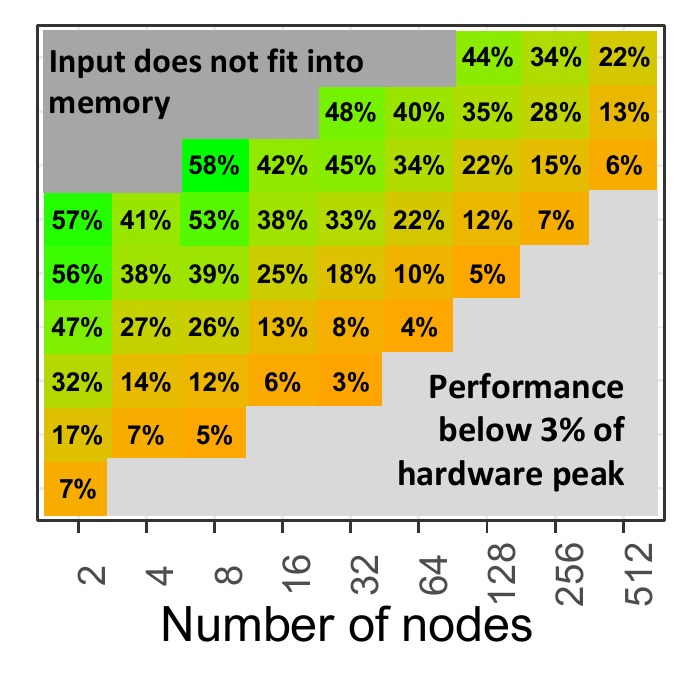}}
	\caption{
		{{\textbf{Left:} measured runtime speedup of \conflux 
				vs. 
				fastest state-of-the-art library 
				(S=SLATE~\cite{slate}, 
				C=CANDMC~\cite{candmc}, 
				M=MKL~\cite{mkl}). \textbf{Right:} \conflux's 
				achieved \% of 
				machine peak 
				performance. } }
	}
	\label{fig:heatmaps_lu}
\end{figure}

\begin{figure*}[h]
	\vspace{2em}
	\includegraphics[width=2.1\columnwidth]{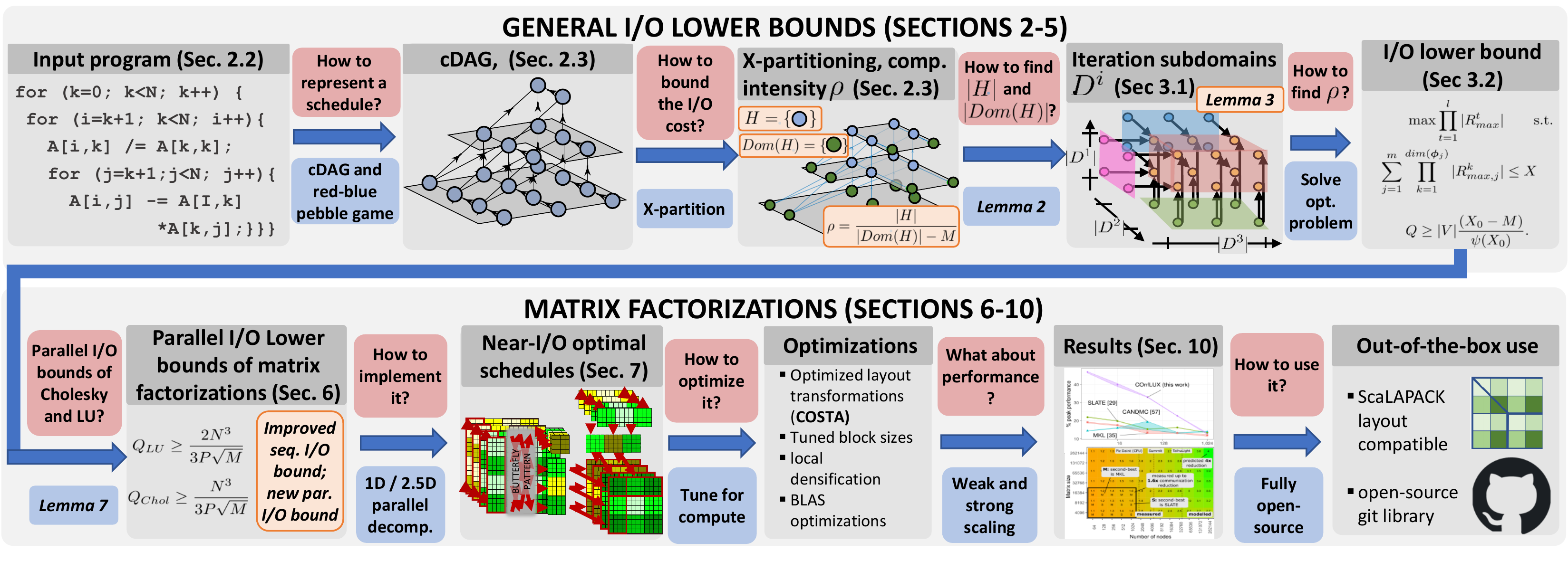}
	\caption{From the input program through the I/O lower bounds to 
		communication-minimizing parallel schedules and high performing 
		implementations. In this paper, we mainly focus on the Cholesky and LU 
		factorizations. The 
		proofs of the lemmas presented in this work can be found 
		in the AD/AE appendix.}
	\label{fig:sc_flow}
\end{figure*}

Moreover, we use the 
insights from deriving
the above lower bounds to develop \conflux and \chol, near 
communication-optimal parallel LU and Cholesky factorization algorithms
that minimize data movement across the 2.5D processor decomposition. For 
LU factorization, to further reduce the latency and bandwidth cost, we use 
a 
row-masking 
tournament pivoting strategy resulting in a communication requirement of  
$\tfrac{N^3}{P \sqrt{M}} + \mathcal{O}(\tfrac{N^2}{P})$ 
elements per
processor, where the leading order term is only 1.5 times the lower 
bound. 
Furthermore, to secure high performance, we 
carefully 
tune block sizes and communication routines to maximize the
efficiency of local 
computations such as \texttt{trsm} (triangular solve) and \texttt{gemm} (matrix 
multiplication).

We measure both communication 
volume and achieved performance of
\conflux and \chol
and compare them to state-of-the-art libraries: 
a vendor--optimized Intel MKL~\cite{mkl}, SLATE~\cite{slate} 
(a recent library targeting exascale systems), as well as
CANDMC~\cite{candmc, candmccode} and 
CAPITAL~\cite{choleskyQRnew, choleskycode} 
(codes based on the
asymptotically optimal 2.5D decomposition).
In our experiments on the Piz Daint supercomputer, we measure 
up to 1.6x 
communication reduction 
compared to the second-best implementation. Furthermore, our 2.5D 
decomposition communicates asymptotically less than SLATE and MKL, with even 
greater expected benefits on exascale machines. Compared to the 
communication-avoiding 
CANDMC library with I/O cost of $5N^3/(P\sqrt{M})$ elements~\cite{2.5DLU}, 
\conflux communicates five times less.
Most importantly, \emph{our implementations outperform all compared libraries 
in almost all scenarios}, both for strong and weak scaling, reducing the 
time-to-solution by up to three times compared to the second best performing 
library (Figure~\ref{fig:heatmaps_lu}).

\noindent
In this work, we make the following contributions:
\begin{itemize}[leftmargin=*]
	\item A general method
	for deriving parallel I/O lower bounds of a broad
	range of linear algebra kernels.
	%
	%
	\item \conflux and \chol, provably near-I/O-optimal parallel 
	algorithms for LU and Cholesky factorizations, with their full 
	communication volume analysis.
	\item Open-source and fully ScaLAPACK-compatible implementations of our 
	algorithms that outperform existing state-of-the-art libraries in almost 
	all scenarios.
\end{itemize}

\noindent
A bird's eye view of our work is presented in Figure~\ref{fig:sc_flow}. 

\section{Background}
\label{sec:background}

We now establish the background for our theoretical model 
(Sections~\ref{sec:boundsSingleStatement}-\ref{sec:parredblue}). We use it 
to derive parallel I/O lower bounds for Cholesky and LU factorizations 
(Section~\ref{sec:lu_lowerbound}) that will guide the design of our 
communication-minimizing implementations (Section~\ref{sec:conflux}).
\subsection{Machine Model}
\label{sec:machineModel}

To model algorithmic I/O complexity, we start with a model of a 
sequential machine equipped with a two-level deep memory hierarchy. 
We then outline the parallel machine model.

\noindent \macb{Sequential machine}. A computation is performed on a 
sequential machine 
with a fast memory of limited size and unlimited slow memory. The 
fast memory can hold up to $M$ elements at any given time. 
To perform any computation, all input elements must reside in fast 
memory, and the result is stored in fast memory.

\noindent \macb{Parallel machine}. The sequential model is extended to a 
machine with $P$ processors, each equipped with a private 
fast memory of size $M$. There is no global memory of unlimited 
size --- instead, elements are transferred between processors' fast 
memories. 

\subsection{Input Programs}
\label{sec:inputPrograms}

We consider a general class of programs that operate on 
multidimensional
arrays. Array
elements can be loaded from slow to fast memory, stored from fast to 
slow memory, and computed inside fast memory.  
These elements have \emph{versions} that 
are incremented every time they are 
updated. 
We model the program execution as a computational directed acyclic 
graph (cDAG, details in Section~\ref{sec:pebblegame}), where each vertex 
corresponds to a different version of an array element.
Thus, for a statement $A[i,j] 
\leftarrow f(A[i,j])$, a vertex corresponding to $A[i,j]$ \emph{after} 
applying $f$ is different from a vertex corresponding to $A[i,j]$ 
\emph{before} 
applying $f$.
In a cDAG, this is expressed as an edge from vertex $A[i,j]$ before $f$ to 
vertex $A[i,j]$ after $f$. Initial versions of each element do not have 
any incoming edges and thus form the cDAG inputs. 
\emph{The distinction between elements and vertices} is important for our I/O 
lower bounds analysis, as we will investigate how many vertices
are computed
for a 
given number of loaded vertices.

An input program is a collection of statements $S$ enclosed in loop 
nests, each of the following form (we use the loop nest notation introduced 
by Dinh and 
Demmel{~\cite{demmel2}}):
{\small
	\begin{align}
	\nonumber
	\text{\textbf{for }} \psi^1 \in \mathcal{D}^1,
	\text{\textbf{for }}  \psi^2 \in \mathcal{D}^2(\psi^1),
	\dots ,
	\text{\textbf{for }}  \psi^l \in \mathcal{D}^l(\psi^1, \dots, \psi^{l-1}): 
	\\
	\nonumber
	S: A_0[\bm{\phi_0}(\bm{\psi})] \leftarrow 
	f(A_1[\bm{\phi_1}(\bm{\psi})], A_2[\bm{\phi_2}(\bm{\psi})], \dots, 
	A_m[\bm{\phi_m}(\bm{\psi})]),
	\end{align}
}

\noindent
where (cf.~Figure~\ref{fig:prog_rep} for
a summary) for each innermost loop iteration, statement $S$ is an evaluation of 
some function $f$ 
on 
$m$ 
inputs, 
where every input is an element of array $A_j, j = 1,\dots, 
m$, 
and the result of $f$ is stored to the output array $A_0$.

Each loop has an associated \emph{iteration variable} {$\psi^t$} that 
iterates over 
its domain $\psi^t \in \mathcal{D}^t$. All $l$ iteration 
variables form the \emph{iteration vector} 
$\bm{\psi} = [\psi^1, \dots, \psi^l]$. Array elements are 
accessed by an \emph{access 
	function vector} $\bm{\phi_j} = [\phi_j^1, \dots, 
\phi_j^{dim(A_j)}]$ 
that maps $dim(A_j)$ iteration variables to a \emph{unique} element in 
array {$A_j$} (note that the access function vector is injective). Only 
vertices associated with the newest element versions can be 
referenced. Furthermore, a given vertex can be 
referenced by only one access 
function vector per statement. We refer to this as the \emph{disjoint 
	access property.} 
 The 
	\emph{access dimension} of \mbox{$A_j(\bm{\phi}_j)$}, denoted 
	\mbox{$dim(A_j(\bm{\phi}_j))$}, is the number of 
	different 
	iteration variables present in \mbox{$\bm{\phi}_j$}.
	We call such programs Disjoint Access Array 
	Programs. 
	
	\emph{Example: Consider statement $S1$ of LU factorization 
	(Figure~\ref{fig:prog_rep}). The loop 
		nest depth is $l=2$, with two iteration variables $\psi^1 = $\texttt{ 
		k} 
		and $\psi^2 = $\texttt{ i} forming the iteration vector $\bm{\psi} = $ 
		\texttt{ [k, i]}. For access $A[k,k]$, 
		the access function vector 
		$\bm{\phi}_j = [k,k]$ is a function of only one 
		iteration variable $k$. Therefore, $dim(A_j) = 2$, 
		but 
$dim(A_j(\bm{\phi_j})) = 1$.}

\begin{figure}[t]
	\includegraphics[width=\columnwidth]{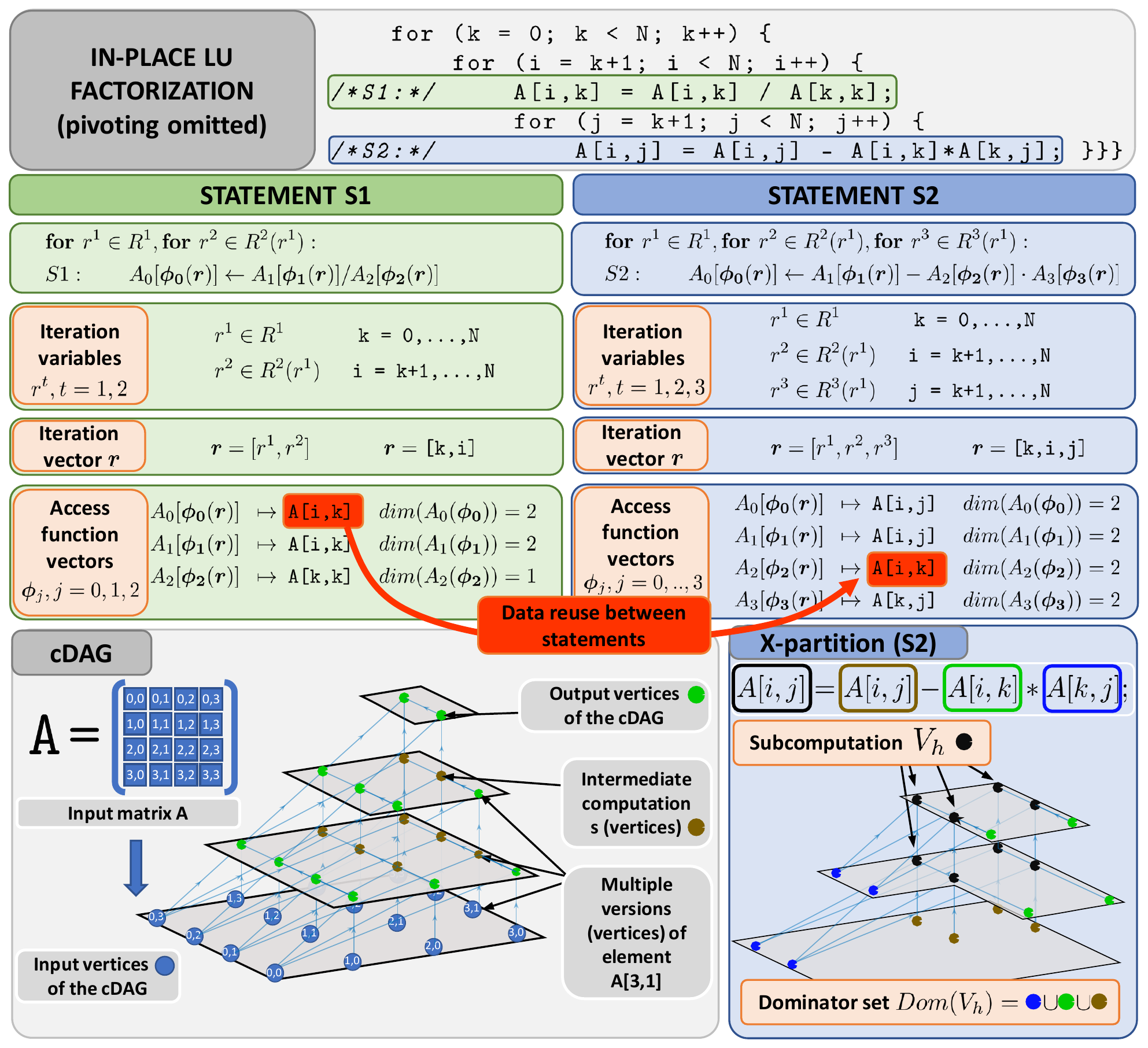}
	
	\caption{In-place LU factorization (for simplicity, 
	no pivoting is 
		performed). The algorithm contains two statements ($S1$ and $S2$), for 
		which we 
		provide key components of our program representation together with the 
		corresponding cDAG for $N=4$. For 
		statement $S2$, we also provide a graphical visualization of a single 
		subcomputation $H$ in its \xpart.}
	
	\label{fig:prog_rep}
\end{figure}

\subsection{{I/O Complexity and Pebble Games}}
\label{sec:pebblegame}
{We now establish the relationship between DAAP and the red-blue pebble game 
	- a powerful abstraction for deriving lower bounds and optimal schedules of 
	cDAG evaluation.}

\subsubsection{cDAG and red-blue pebble game}
We base our computation model on the red-blue pebble game, played on the 
computational directed acyclic graph $G=(V,E)$, as introduced by Hong and 
Kung~\cite{redblue}.
Every vertex $v \in V$ represents the result of a unique 
computation stored in some memory, and a 
directed edge  $(u,v) \in E$ represents a data dependency.
Vertices without any incoming (outgoing) edges are called \emph{inputs} 
(\emph{outputs}). To perform a computation, i.e., to evaluate the value 
corresponding to vertex $v$, all vertices that are direct 
predecessors of 
$v$ must be loaded into fast memory.
The vertices that are currently in fast memory are marked by a red pebble 
on 
the 
corresponding vertex of the cDAG. Since the size of fast 
memory is 
limited, we can never have more than $M$ 
red 
pebbles on the cDAG at any moment.
Analogously, the contents of the slow memory (of unlimited size) is 
represented by an unlimited number of blue pebbles.

\subsubsection{Dominator and Minimum Sets~\cite{redblue}}
\label{sec:redblue} 

For any subset of vertices $H 
\subset V$, a \emph{dominator set} 
$\mathit{Dom}(H)$ is a set such 
that every path in the cDAG from an input vertex to any vertex in 
$H$ must 
contain at least one vertex in 
$\mathit{Dom}(H)$. 
In general, for a 
given $H$, its ${Dom}(H)$ is not 
uniquely 
defined.
The \emph{minimum set} 
$\mathit{Min}(H)$ is the 
set of all vertices in $H$ that do not have any 
immediate successors in $H$.
In this work, to avoid the ambiguity of non-uniqueness of dominator set 
size
(in 
principle, for any subset, its valid dominator set is always the whole 
$V$), 
we will refer to
$\mathit{Dom}_{min}(H)$ as a minimum dominator set, i.e. a dominator set with 
the smallest size.

\vspace{0.5em}
\noindent
\Intuition{
	{One can think of \mbox{$H$}'s dominator set as a 
	set of 
	inputs 
	required to execute subcomputation \mbox{$H$}, 
	and of \mbox{$H$}'s
	minimum set as the output of \mbox{$H$}.
	We use the notions of \mbox{$\mathit{Dom}_{min} \left( H 
	\right)$} and 
	\mbox{$\mathit{Min}\left( H \right)$}
	when proving I/O lower bounds. Intuitively, we bound
	computation ``volume''
	(number of vertices in \mbox{$H$}) by its communication 
	``surface'', comprised 
	of 
	its inputs - vertices in
	\mbox{$\mathit{Dom}_{min}\left(H\right)$}
	- and outputs - vertices in 
	\mbox{$\mathit{Min}(H)$.}}}

\subsubsection{\xparting}
\label{sec:xpart}
Introduced by Kwasniewski et al.~\cite{COSMA},\linebreak \xparting generalizes the 
S-partitioning abstraction~\cite{redblue}.
An \xpart of a cDAG 
is a
collection of $s$ mutually disjoint subsets (referred to as 
\emph{subcomputations})
$\mathcal{P}(X) = \{H_1, 
\dots, H_s\}$,  $\bigcup_{i=1}^s H_i = V$  with two additional 
properties:
\begin{itemize}[leftmargin=0.9em]
	\item $\mathcal{P}(X)$ has no cyclic dependencies between 
	subcomputations.
	\item $\forall H$, $\left|{Dom}_{min}\left( 
	H \right)\right| \le X$ and  $\left| 
	{Min}\left(H\right) \right| \le 
	X$.
\end{itemize}

{For a given cDAG 
	and for any given \mbox{$X > M$}, let 
	\mbox{$\Pi(X)$} denote a set of all its valid \mbox{$X$}-partitions, 
	\mbox{$\mathcal{P}(X) \in \Pi(X)$}. }	
Kwasniewski et al. prove that an I/O optimal schedule of $G$ 
that 
performs $Q$ load and 
store operations has an associated \xpart $\mathcal{P}_{opt}(X) \in \Pi(X)$ 
with size 
$|\mathcal{P}_{opt}(X)| \le \frac{Q + X - M}{X - M}$ {for any \mbox{$X > 
		M$}} (\cite{COSMA}, Lemma 2).

\subsubsection{Deriving lower bounds}

{To bound the I/O cost, we further need to introduce the 
\emph{computational intensity} \mbox{$\rho$}. For each subcomputation 
\mbox{$H_i$, $\rho_i$} is defined as a ratio of the number of computations 
(vertices) in $H_i$ to the number of I/O operations required to pebble $H_i$,  
where the latter is bounded by the size of the dominator set 
\mbox{$Dom(H_i)$}}~\cite{COSMA}. Then, 
the following lemma bounds the number of I/O operations required to pebble a 
given 
cDAG:

\begin{lma}
	\label{lma:compIntensity}
	(Lemma 4 in \cite{COSMA})
	For any constant $X_c$, the number of I/O operations $Q$ required to pebble 
	a 
	cDAG $G=(V,E)$ with $|V| = n$ vertices using $M$ red pebbles is bounded 
	by $Q \ge {n}/{\rho}$, where ${\rho}  = \frac{|H_{max}|}{X_c - M}$ 
	is the 
	maximal 
	computational intensity and $H_{max} = \argmax_{H \in 
		\mathcal{P}(X_c)} |H|$ is the largest 
	subcomputation 
	among all valid \mbox{$X_c$-partitions}.
\end{lma}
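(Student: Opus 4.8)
The plan is to combine the $X$-partition size bound (Lemma~2 of~\cite{COSMA}, quoted in §\ref{sec:xpart}) with a per-subcomputation accounting of I/O operations, and then optimize over the free parameter. First I would fix an arbitrary I/O-optimal schedule that pebbles $G$ with $Q$ load/store operations, and fix the constant $X_c > M$. By the quoted Lemma~2, this schedule induces a valid $X_c$-partition $\mathcal{P}_{opt}(X_c) = \{H_1,\dots,H_s\}$ of $V$ with $s = |\mathcal{P}_{opt}(X_c)| \le \frac{Q + X_c - M}{X_c - M}$. Since the $H_i$ partition $V$, we have $n = |V| = \sum_{i=1}^s |H_i| \le s\cdot |H_{max}|$, where $H_{max}$ is the largest subcomputation over \emph{all} valid $X_c$-partitions (in particular $|H_i| \le |H_{max}|$ for each $i$ in our induced partition).

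Next I would chain these two inequalities: $n \le s\,|H_{max}| \le |H_{max}|\cdot\frac{Q + X_c - M}{X_c - M}$. Rearranging for $Q$ gives $Q \ge n\cdot\frac{X_c - M}{|H_{max}|} - (X_c - M) = (X_c - M)\left(\frac{n}{|H_{max}|} - 1\right)$. At this point one recognizes $\rho = \frac{|H_{max}|}{X_c - M}$, so the bound reads $Q \ge \frac{n}{\rho} - (X_c - M) = \frac{n}{\rho} - \rho\,|H_{max}|/|H_{max}|\cdot(X_c-M)$; more cleanly, $Q \ge \frac{n}{\rho} - (X_c - M)$. To get the clean statement $Q \ge n/\rho$ claimed in the lemma, I would argue that the additive $-(X_c-M)$ term is absorbed: either by noting it is a lower-order additive constant independent of $n$ (so asymptotically negligible, which is how the bound is used for the $N^3/(P\sqrt M)$ results), or by the standard trick of replacing $|H_{max}|$ with a slightly conservative bound so that the "$-1$" is swallowed — the precise bookkeeping here mirrors the original COSMA derivation and I would reproduce it rather than reinvent it.

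The main obstacle I anticipate is \emph{not} the algebra but the justification that the schedule's induced $X_c$-partition is legitimately covered by the definition of $H_{max}$ — i.e., that taking the max of $|H|$ over all partitions in $\Pi(X_c)$ genuinely upper-bounds every block of the one particular partition produced by the optimal schedule. This is immediate once one checks that the schedule-induced partition is itself a valid $X_c$-partition (no cyclic dependencies among subcomputations, and $|Dom_{min}(H_i)| \le X_c$, $|Min(H_i)| \le X_c$), which is exactly the content of the quoted Lemma~2 of~\cite{COSMA}. A secondary subtlety is the direction of the bound on I/O per subcomputation: each $H_i$ requires at least $|H_i|/|H_{max}| \cdot$ (something) loads — but here we sidestep per-block I/O counting entirely by routing through the partition-size bound, which already encodes the total $Q$. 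So the proof is essentially a two-line substitution plus a careful statement of which quantities are maxima over which sets; I would keep it short and defer the constant-absorption detail to the cited source.
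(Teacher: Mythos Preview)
The paper does not actually prove this lemma: it is quoted verbatim as ``Lemma~4 in~\cite{COSMA}'' and used as a black box, with only the supporting $X$-partition size bound (Lemma~2 of~\cite{COSMA}) restated in \S\ref{sec:xpart}. So there is no in-paper proof to compare against.

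That said, your derivation is exactly the standard one and is correct up to the point you flag. Chaining $n \le s\,|H_{max}|$ with $s \le \tfrac{Q + X_c - M}{X_c - M}$ and solving for $Q$ is precisely how the COSMA bound is obtained, and your observation that $H_{max}$ is a maximum over \emph{all} valid $X_c$-partitions (hence dominates every block of the schedule-induced one) is the right justification. The only substantive point is the one you already isolate: the clean statement $Q \ge n/\rho$ drops an additive $-(X_c - M)$. This is not a gap in your reasoning but a genuine feature of the cited lemma; in the applications throughout the paper (e.g., the $N^3/(P\sqrt{M})$ bounds) one has $n = \Theta(N^3)$ while $X_c - M = \Theta(M)$, so the term is lower order and is silently folded into the $\mathcal{O}(\cdot)$ remainders. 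Your instinct to either treat it as a lower-order constant or defer to the original source is appropriate; there is no ``trick'' that makes it vanish exactly.
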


\noindent

\section{{\hspace{-0.7em}General Sequential I/O Lower Bounds}}
\label{sec:boundsSingleStatement}

We now present our method for deriving 
the I/O lower bounds 
of a sequential execution of programs in the form defined in 
Section~\ref{sec:inputPrograms}. Specifically, in 
Section~\ref{sec:iobound_singlestatement} we derive I/O 
bounds for programs that contain only a single statement.
In Section~\ref{sec:mult_statements} we extend our 
analysis to capture interactions and reuse between 
multiple statements. 

In this paper, we present only the key lemmas required to establish the lower 
bounds of parallel Cholesky and LU factorizations. However, the method covers a 
much wider spectrum of 
		algorithms. 
		For curious readers, we present all proofs of provided lemmas in the 
		appendix.

We start by stating our key lemma:

\begin{restatable}{lma}{compIntensityPhi}	
	\label{lma:compIntensityPhi}
	If $|H_{max}|$ can be expressed as a closed-form function of $X$, 
	that 
	is if there exists some function $\chi$ such that
	$|H_{max}| = 
	\chi(X)$, 
	then the lower bound on $Q$ can be expressed as
	$$Q \ge n \frac{(X_0 - M)}{\chi(X_0)},$$
	where $X_0 = \argmin_X \rho = \argmin_X \frac{\chi(X)}{X-M}$.
\end{restatable}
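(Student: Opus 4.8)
The plan is to specialize Lemma~\ref{lma:compIntensity} by treating $X$ as a free parameter and optimizing over it. Lemma~\ref{lma:compIntensity} already tells us that for \emph{any} constant $X_c > M$, we have $Q \ge n/\rho(X_c)$ where $\rho(X_c) = |H_{max}(X_c)|/(X_c - M)$ and $|H_{max}(X_c)| = \max_{H \in \mathcal{P}(X_c)} |H|$ over all valid $X_c$-partitions. The key observation is that this inequality holds simultaneously for every admissible choice of $X_c$, so we are free to pick the one that makes the bound strongest, i.e., the one that minimizes $\rho(X_c)$ — equivalently, maximizes $(X_c - M)/|H_{max}(X_c)|$.

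**The first step** is to invoke the hypothesis: we are given that $|H_{max}|$ depends on $X$ through a known closed form, $|H_{max}(X)| = \chi(X)$. Substituting this into the expression for computational intensity gives $\rho(X) = \chi(X)/(X - M)$, which is now an explicit scalar function of the single real variable $X$ on the domain $X > M$. **The second step** is to define $X_0 \defeq \argmin_{X > M} \rho(X) = \argmin_{X > M} \frac{\chi(X)}{X-M}$, which is exactly the $X_0$ named in the statement; we simply assume this minimizer exists (the lemma is used in settings where $\chi$ is a nice polynomial-type function for which the minimum is attained). **The third step** is to apply Lemma~\ref{lma:compIntensity} with the particular choice $X_c = X_0$, which yields
\[
Q \ge \frac{n}{\rho(X_0)} = n \cdot \frac{X_0 - M}{\chi(X_0)},
\]
which is precisely the claimed bound.

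**The only subtlety** — and the thing worth spelling out carefully in the write-up — is that Lemma~\ref{lma:compIntensity} is stated for a \emph{constant} $X_c$, so one must argue that plugging in the optimizer $X_0$ is legitimate. This is immediate: $X_0$ is a fixed real number determined by the function $\chi$ (which is itself fixed by the cDAG and $M$), hence it is a valid constant to substitute. No sup/inf interchange or limiting argument is needed — we merely evaluate a family of valid inequalities at one member of the family. A secondary point to note is that the bound is genuinely optimized: any other choice of $X_c$ gives a weaker (or equal) bound by definition of $X_0$, so $X_0$ is the best parameter the framework of Lemma~\ref{lma:compIntensity} can offer. I would also remark that the hypothesis ``$|H_{max}|$ can be expressed as a closed-form function of $X$'' is what turns an abstract combinatorial optimization over all $X_c$-partitions into a one-dimensional calculus problem; establishing such a $\chi$ for concrete kernels (Cholesky, LU) is the content of the subsequent sections, not of this lemma.
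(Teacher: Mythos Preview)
Your proposal is correct and follows essentially the same approach as the paper: invoke Lemma~\ref{lma:compIntensity} (which holds for any constant $X_c$), substitute $|H_{max}| = \chi(X)$, and pick $X_c = X_0$ minimizing $\rho$ to obtain the tightest bound. Your write-up is more explicit about why substituting the optimizer is legitimate, but the underlying argument is identical.
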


\vspace{0.5em}
\noindent
\Intuition{$\chi(X)$ expresses computation ``volume'', while $X$ is its input 
	``surface''. The term $X-M$ bounds the required communication and it comes 
	from 
	the fact that not all inputs have to be 
	loaded (at most $M$ of them can be reused). $X_0$ corresponds to the 
	situation 
	where the ratio of this ``volume'' to the required communication is 
	minimized 
	(corresponding to a highest lower bound).}

\begin{proof}
	Note that Lemma~\ref{lma:compIntensity} is valid for any $X_c$
	(i.e.,  
	for any  
	$X_c$, it gives a valid lower bound). Yet, these bounds are not necessarily 
	tight. 
	As we want to 
	find tight I/O lower bounds, we need to maximize the lower 
	bound. $X_0$ by 
	definition minimizes $\rho$; thus, it maximizes the bound.
	Lemma~\ref{lma:compIntensityPhi} then follows directly from 
	Lemma~\ref{lma:compIntensity} 
	by 
	substituting $\rho = \frac{\chi(X_0)}{X_0 - M}$.
\end{proof}

\Note{}
If function $\chi(X)$ is differentiable and has a global 
minimum, we 
can 
find 
$X_0$ by, e.g., solving the equation $\frac{d\frac{\chi(X)}{X 
		- M}}{dX} = 0$.
The key limitation is that it is not always possible to find $\chi$, that 
is, 
to 
express $|H_{max}|$ solely as a function of $X$. However, for many 
linear 
algebra kernels $\chi(X)$ exists. Furthermore, one can relax this problem 
preserving the correctness of the lower bound, that is, by finding a 
function
$\hat{\chi}: \forall_X \hat{\chi}(X) \ge \chi(X)$.

To find $\chi(X)$, we take advantage of the DAAP structure.
Observe that every computation (and therefore, every compute vertex $v \in V$ 
in the cDAG 
$G=(V,E)$) is executed in a different iteration of the loop nest, and 
thus, 
there is a one-to-one mapping from a value of the iteration vector 
$\bm{\psi}$ to the compute vertex $v$. Moreover, each vertex accessed from 
any of the 
input arrays $A_i$ 
is also associated with some iteration vector value - however, if $dim(A_i) < 
l$, this is a one-to-many relation, as the same input vertex may be used to 
evaluate multiple compute vertices $v$. This is, in fact, the source 
of the data reuse, and exploiting this relation is a key to minimizing the I/O 
cost. If 
for 
all input arrays $A_i$ we have that $dim(A_i) = l$, then for each compute 
vertex $v$, $m$ 
different, unique input vertices are required, there is no data reuse and it 
implies a 
trivial 
computational intensity $\rho = \frac{1}{m}$.

The high-level idea of our method is to \emph{count how many different 
iteration vector values $\bm{\phi}$ can be formed if we know how many 
different values each iteration variable $\phi^1, \dots, \phi^l$ takes}. We now 
formalize this in Lemmas~\ref{lma:rectTiling}-\ref{lma:output_reuse}.

\subsection{\hspace{-0.5em}Iteration vector, iteration domain, access set}
\label{sec:access_sizes}

Each execution of statement $S$ is associated with the 
\emph{iteration vector} value
$\bm{\psi} = [\psi^1, \dots, \psi^l] \in \mathbb{N}^l$ 
representing 
the current 
iteration, 
that is, the values of iteration variables {$\psi^1$}$, \dots, 
${$\psi^l$}. 
Each 
subcomputation $H$ is uniquely defined by all iteration 
vectors' values associated 
with vertices pebbled 
in $H = \{\bm{\psi}_{1}, \dots, \bm{\psi}_{|H|}\}$.
For each iteration variable {$\psi^t$}, $t = 1, \dots, l$, denote the 
set 
of all values that 
{$\psi^t$} 
takes during $H$ as $D^t$.
We define $\bm{D} = [D^1, \dots, D^t] 
\subseteq 
\mathcal{\bm{D}}$ as 
the \emph{iteration 
	domain} of subcomputation $H$.

Furthermore, recall that each input access $A_j[\bm{\phi_j}(\bm{\psi})]$ is 
uniquely defined by $dim(\bm{\phi}_j)$ iteration variables $\psi_j^1, \dots, 
\psi_j^{dim(\bm{\phi}_j)}$. Denote the set of all values each of $\psi_j^k$ 
takes 
during 
$H$ as $D^k_{j}$.
Given $\bm{D}$, we also denote the number of 
different vertices that are 
accessed from each input array {$A_j$} as $|A_j(\bm{D})|$.

We now state the lemma which bounds $|H|$ by the iteration sets' sizes 
$|D^t|$ (the intuition behind the lemma is depicted in Figure~\ref{fig:lemma3}):

\begin{figure}
	\includegraphics[width=1\columnwidth]{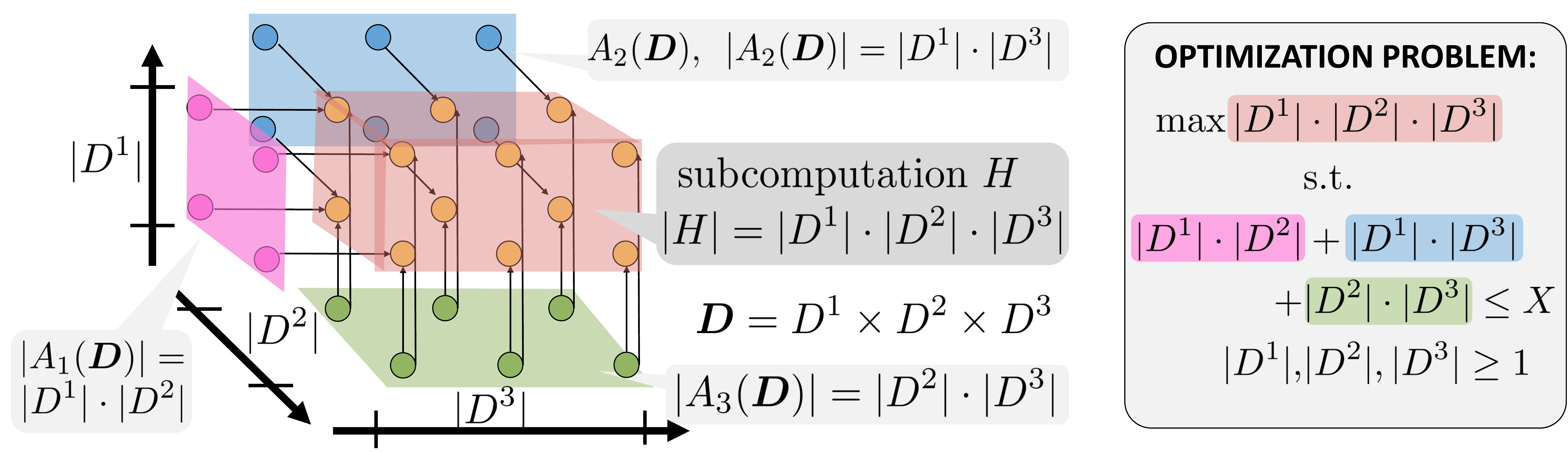}
	\caption{Lemma~\ref{lma:rectTiling} bounds the set sizes (both the 
	subcomputation's $H$ and input access sets' $|A_j(\bm{D})|$) with the 
	number of 
	values $|D^t|$ each iteration variable $\psi^t$ takes during the 
	subcomputation.}
	\label{fig:lemma3}
\end{figure}
\begin{restatable}{lma}{rectTiling}
	\label{lma:rectTiling}
	{Given the ranges of all iteration variables \mbox{$D^t, t = 
			1,\dots,l$} 
		during 
		subcomputation 
		$H$, if $|H| = \prod_{t=1}^{l}|D^t|$, then 
		$\forall j = 1, \dots, 
			m :$ $|A_j(\bm{D})| = $ $
			\prod_{k=1}^{dim(\bm{\phi}_j)}|D^k_{j}|$ and $|H|$ is 
			maximized 
		among 
		all valid subcomputations that iterate over $\bm{{D}} = 
		[D^1, 
			\dots, D^t]$.}
\end{restatable}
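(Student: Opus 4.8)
The plan is to show the statement in two parts, both flowing from the injectivity (the disjoint access property) of the access function vectors. First I would establish the claim about $|A_j(\bm{D})|$: under the hypothesis $|H| = \prod_{t=1}^l |D^t|$, the subcomputation $H$ necessarily consists of \emph{all} iteration vectors in the box $\bm{D} = D^1 \times \cdots \times D^l$ (since $H$ has exactly $\prod_t |D^t|$ vertices, is contained in that box by definition of the $D^t$, and the map from iteration vectors to vertices is a bijection, $H$ must be the full box). Given that, the set of vertices accessed from array $A_j$ is the image of the box under $\bm{\phi}_j$. Because $\bm{\phi}_j$ depends only on the $dim(\bm{\phi}_j)$ iteration variables $\psi_j^1,\dots,\psi_j^{dim(\bm{\phi}_j)}$, and is injective on their joint range, the image has size exactly $\prod_{k=1}^{dim(\bm{\phi}_j)} |D_j^k|$, where $D_j^k$ is the range of $\psi_j^k$ within $H$ — which, since $H$ is the full box, is exactly the corresponding $D^t$. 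This gives the first conclusion.

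For the maximality claim, I would argue by contradiction or by a direct counting/monotonicity argument: let $H'$ be any valid subcomputation whose iteration variables range over the same sets $\bm{D} = [D^1,\dots,D^l]$. By the definition of the $D^t$ as the sets of values taken during the subcomputation, every iteration vector of $H'$ lies in the box $D^1 \times \cdots \times D^l$, so $|H'| \le \prod_{t=1}^l |D^t| = |H|$. Hence $|H|$ attains the maximum over all subcomputations iterating over that domain. I should be careful to state precisely what "iterate over $\bm{D}$" means — I read it as: the set of values each $\psi^t$ takes during the subcomputation is exactly $D^t$ (or is contained in $D^t$); either reading makes the box-containment argument go through, so I would fix the convention at the start of the proof.

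The genuinely delicate point — and the one I expect to be the main obstacle — is the step asserting that $|A_j(\bm{D})| = \prod_{k=1}^{dim(\bm{\phi}_j)} |D_j^k|$ rather than merely $\le$. The inequality $\le$ is immediate from the fact that $A_j$'s accessed vertices are indexed by tuples of the $dim(\bm{\phi}_j)$ relevant iteration variables, each drawn from a set of size $|D_j^k|$. Equality requires two ingredients: (i) the access function vector $\bm{\phi}_j$ is injective as a map on the joint range of its iteration variables — this is exactly the disjoint access property assumed in Section~\ref{sec:inputPrograms}, so distinct tuples map to distinct array elements, and since we track \emph{versions}, to distinct vertices; and (ii) every such tuple actually occurs, which is precisely where the hypothesis $|H| = \prod_t |D^t|$ is essential, because it forces $H$ to be the full combinatorial box and hence every combination of iteration-variable values — in particular every combination of the sub-tuple relevant to $\bm{\phi}_j$ — is realized. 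Without (ii), a "thin" or irregularly shaped $H$ could touch fewer than $\prod_k |D_j^k|$ vertices of $A_j$, so the product formula would fail. I would therefore make the box-filling consequence of the hypothesis explicit and prominent, and then let both conclusions fall out of it together with injectivity.
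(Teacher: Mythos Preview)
Your proposal is correct and follows essentially the same approach as the paper: the paper factors the argument into two auxiliary lemmas giving the upper bounds $|H|\le\prod_t|D^t|$ and $|A_j(\bm{D})|\le\prod_k|D_j^k|$ by the same combinatorial counting you describe, and then argues exactly as you do that the hypothesis $|H|=\prod_t|D^t|$ forces $H$ to realize every combination of iteration-variable values, whence injectivity of the access functions makes the $|A_j(\bm{D})|$ bounds tight. Your identification of the ``box-filling'' step and the role of injectivity as the crux is precisely what the paper's proof hinges on.
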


\noindent
\Intuition{Lemma~\ref{lma:rectTiling} states that if each iteration variable 
	$\psi^t, t = 1, \dots, l$ takes $|R_h^t|$ different values, then there are 
	at 
	most 
	$\prod_{t=1}^{l}|D^t|$ different iteration vectors $\bm{\psi}$ which 
	can 
	be 
	formed in $H$. So, intuitively, to maximize $|H|$, all combinations of 
	values $\psi^t$ should be evaluated. On the other hand, this also implies 
	maximization of all access sizes $|A_j(\bm{D})| = 
	\prod_{k=1}^{dim(\bm{\phi}_j)}|D^k_{j}|$.} 

To prove it, we now introduce two auxiliary lemmas:
\begin{lma}
	\label{lma:vi_bound}
	For statement $S$, the size $|H|$ of 
	subcomputation $H$ (number 
	of 
	vertices of $S$ computed during 
	$H$) is bounded by the sizes of the iteration 
	variables' 
	sets $R_h^t, t = 
	1, \dots, l$:
	\begin{equation}
	\label{eq:vmax_vol}
	|H| \le \prod_{t=1}^{l}|D^t|.
	\end{equation}
\end{lma}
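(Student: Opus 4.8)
The plan is to exploit the one-to-one correspondence between compute vertices of the statement $S$ and iteration-vector values that was established in Section~\ref{sec:inputPrograms}. Every execution of $S$ occurs in a distinct iteration of the enclosing loop nest, so there is an injective map $v \mapsto \bm{\psi}(v) = [\psi^1(v), \dots, \psi^l(v)]$ sending each compute vertex $v$ of $S$ to the value of the iteration vector at the iteration in which $v$ is produced. Restricting this map to the vertices of $H$ keeps it injective, so $|H|$ equals the number of distinct iteration vectors in the set $\Psi_H \defeq \{\bm{\psi}(v) : v \in H\}$.

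Next I would observe that $\Psi_H$ is contained in the Cartesian product $D^1 \times D^2 \times \cdots \times D^l$. This is immediate from the definition of $D^t$ (equivalently $R_h^t$ in the statement): $D^t$ is precisely the set of \emph{all} values that the coordinate $\psi^t$ takes over the vertices of $H$, hence $\psi^t(v) \in D^t$ for every $v \in H$ and every $t = 1,\dots,l$, i.e. $\bm{\psi}(v) \in \prod_{t=1}^{l} D^t$. Taking cardinalities,
\[
|H| \;=\; |\Psi_H| \;\le\; \Bigl| \prod_{t=1}^{l} D^t \Bigr| \;=\; \prod_{t=1}^{l} |D^t|,
\]
which is the claimed bound \eqref{eq:vmax_vol}.

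There is essentially no hard step here: the argument is just that an injective image of $H$ lives inside a box with side lengths $|D^t|$. The only points requiring a little care are (i) invoking the injectivity of the vertex-to-iteration-vector map, which is exactly the disjoint-access/program-form assumption of Section~\ref{sec:inputPrograms}, and (ii) noting that the two notations $D^t$ and $R_h^t$ appearing in the lemma refer to the same object — the projection of $H$'s iteration set onto coordinate $t$ — so no separate reasoning is needed. It is worth remarking that the inequality is tight precisely when every tuple of $\prod_{t=1}^{l} D^t$ is realized by an actually-computed vertex; this is the hypothesis $|H| = \prod_{t=1}^{l}|D^t|$ under which Lemma~\ref{lma:rectTiling} draws its conclusion, which is why Lemma~\ref{lma:vi_bound} is the natural first building block toward it.
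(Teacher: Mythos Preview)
Your proof is correct and follows essentially the same combinatorial argument as the paper: each vertex in $H$ is uniquely identified by its iteration vector, and the number of such vectors is bounded by the product $\prod_{t=1}^{l}|D^t|$ of the coordinate-set sizes. Your version is spelled out a bit more formally via the injective map and Cartesian-product containment, but the substance is identical.
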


\begin{proof}
	Inequality~\ref{eq:vmax_vol} follows from a combinatorial argument: 
	each 
	computation in $H$ is uniquely defined by its iteration vector $[\psi^1, 
	\dots, 
	\psi^l]$. As each iteration variable $\psi^t$ 
	takes $|R_h^t|$ 
	different values during $H$, we have $|R_h^1| \cdot |R_h^2| 
	\cdot 
	\dots \cdot |R_h^t| 
	= \prod_{t=1}^{l}|D^t|$ ways how to uniquely choose the iteration 
	vector in 
	$H$. 
\end{proof}

Now, given $\bm{D}$, we want to assess how many 
different vertices are 
accessed for each input array {$A_j$}.
Recall that this number is denoted as access size 
$|A_j(\bm{D})|$.

We will apply the same combinatorial reasoning to 
$A_j(\bm{D})$. For each access 
$A_j[\bm{\phi}_j(\bm{\psi})]$, each 
one of $\psi_j^k$, \linebreak$k 
= 1, 
\dots, dim(\bm{\phi}_j)$ iteration variables loops over set 
$R_{h,j}^k$ during subcomputation $H$.
We can thus bound size of $A_j(\bm{D})$ 
similarly to 
Lemma~\ref{lma:vi_bound}:

\begin{lma}
	\label{lma:projection_bound}
	The access size  $|A_j(\bm{D})|$ of subcomputation 
	$H$ 
	(the number of vertices 
	from the array $A_j$ required to compute $H$) is 
	bounded by the 
	sizes of $dim(\bm{\phi}_j)$
	iteration variables' sets $R_{h,j}^k, k = 
	1, \dots, dim(\bm{\phi}_j)$:
	
	\begin{equation}
	\label{eq:a_j_vol}
	\forall_{j = 1, \dots, m} : |A_j(\bm{D})| \le 
	\prod_{k=1}^{dim(\bm{\phi}_j)}|D^k_{j}|
	\end{equation}
	
	where {$D^k_{j}$} $\ni$ {$\psi^k_j$} is the set over which 
	iteration 
	variable {$\psi^k_j$}
	iterates 
	during $H$.
\end{lma}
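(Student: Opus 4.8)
The plan is to prove Lemma~\ref{lma:projection_bound} by exactly the counting argument used for Lemma~\ref{lma:vi_bound}, but applied to the access function vector $\bm{\phi}_j$ instead of to the full iteration vector. First I would recall what it means for a vertex of $A_j$ to lie in $A_j(\bm{D})$: it is touched while executing $H$, hence it equals $A_j[\bm{\phi}_j(\bm{\psi})]$ for at least one iteration vector $\bm{\psi}$ realized in $H$. This already gives a surjection from the set of iteration vectors realized in $H$ onto $A_j(\bm{D})$, and therefore the weak bound $|A_j(\bm{D})| \le \prod_{t=1}^{l}|D^t|$ (this is just Lemma~\ref{lma:vi_bound} composed with surjectivity). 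The content of the lemma is that all factors $|D^t|$ for iteration variables $\psi^t$ \emph{not} appearing in $\bm{\phi}_j$ are spurious and can be dropped.

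Next I would invoke the definition of the access dimension. By construction $\bm{\phi}_j$ mentions only the $dim(\bm{\phi}_j)$ distinct iteration variables $\psi_j^1, \dots, \psi_j^{dim(\bm{\phi}_j)}$, so the value $\bm{\phi}_j(\bm{\psi})$ — and hence the \emph{element} of $A_j$ it indexes — depends only on the sub-tuple $(\psi_j^1, \dots, \psi_j^{dim(\bm{\phi}_j)})$ and not on the remaining components of $\bm{\psi}$. Combined with the disjoint access property, which makes $\bm{\phi}_j$ injective so that the element index pins down the referenced vertex unambiguously (the newest-version convention fixing which version is read), this shows that the map ``iteration vector $\mapsto$ accessed vertex of $A_j$'' factors through the restriction that discards all iteration variables absent from $\bm{\phi}_j$. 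Consequently $|A_j(\bm{D})|$ is at most the number of distinct sub-tuples $(\psi_j^1, \dots, \psi_j^{dim(\bm{\phi}_j)})$ that occur during $H$.

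Finally I would count those sub-tuples exactly as in Lemma~\ref{lma:vi_bound}: during $H$ the iteration variable $\psi_j^k$ ranges over $D^k_j$, i.e.\ takes at most $|D^k_j|$ distinct values, so there are at most $\prod_{k=1}^{dim(\bm{\phi}_j)}|D^k_j|$ admissible sub-tuples, giving $|A_j(\bm{D})| \le \prod_{k=1}^{dim(\bm{\phi}_j)}|D^k_j|$ for every $j = 1,\dots,m$, which is the claim.

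The only delicate step — and the one I expect to be the main obstacle — is the middle paragraph: arguing rigorously that the accessed \emph{vertex} (an element--version pair), and not merely the accessed element, is determined by the sub-tuple of iteration variables occurring in $\bm{\phi}_j$. One must rule out that two iterations which agree on $\psi_j^1,\dots,\psi_j^{dim(\bm{\phi}_j)}$ but differ elsewhere read two different versions of the same element of $A_j$; this is precisely where the disjoint access property together with the newest-version rule of the model is used. Once the model's conventions are spelled out this is immediate, but it is the hinge of the argument, so I would state it carefully rather than fold it into the combinatorics.
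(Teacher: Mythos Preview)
Your proposal is correct and follows essentially the same approach as the paper: the paper's proof simply says ``use the same combinatorial argument as in Lemma~\ref{lma:vi_bound}; each vertex in $A_j(\bm{D})$ is uniquely defined by $[\psi_j^1,\dots,\psi_j^{dim(\bm{\phi}_j)}]$, so bound the number of different access vectors.'' You have in fact been more careful than the paper, explicitly isolating the element-versus-vertex (version) subtlety that the paper's one-line proof glosses over.
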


\begin{proof}
	We use the same 
	combinatorial argument as in Lemma~\ref{lma:vi_bound}. Each vertex in 
	$A_j(\bm{D})$ is uniquely defined by $[\psi_j^1, 
	\dots, 
	\psi_j^{dim(\bm{\phi}_j)}]$. Knowing the number of different values each 
	$\psi_j^k$ 
	takes, 
	we bound the number of different access vectors 
	$\bm{\phi}_j(\bm{\psi}_h)$.
\end{proof}

\noindent \macb{Example}:	
\emph{Consider once more statement $S1$ from LU factorization in 
	Figure~\ref{fig:prog_rep}.
	We have {$\boldsymbol{\phi}_0$} = {[i, k]}, 
	{$\boldsymbol{\phi}_1$} = {[i, k]}, and 
	{$\boldsymbol{\phi}_2$} 
	= {[k, k]}. 
	Denote the iteration subdomain for subcomputation 
	$H$ as $\bm{D} = $\linebreak
	$\{${$[k^1, i^1]$}, $\dots$, 
	{$[k^{|H|}, 
		i^{|H|}]$} $\}$,
	where each variable {$k$} and {$i$} iterates over its 
	set
	{$k^g$}
	$\in \{\psi_{k,1}, \dots, \psi_{k,K}\} =$ 
	{$D^k$} and {$i^g$} 
	$\in \{\psi_{i,1}, \dots, \psi_{i,I}\} = $ 
	{$D^i$}, for $g 
	= 
	1, \dots, |H|$. Denote the sizes of these 
	sets as $|D^k| 
	= K$ and $|D^i| = I$, that is, during $H$, 
	variable 
	{$k$} takes $K$ different values and {$i$} takes $I$ 
	different values. For 
	{$\boldsymbol{\phi}_1$}, both iteration 
	variables 
	used are different: {k} and {i}.
	Therefore, we have (Equation~\ref{eq:a_j_vol}) 
	$|A_1(\bm{D})| \le K_h \cdot I_h$. On the other 
	hand, for 
	$\boldsymbol{\phi}_2$, the 
	iteration variable {$k$} is used twice. 
	Recall that the access dimension is the minimum number of different 
	iteration 
	variables that uniquely address it 
	(Section~\ref{sec:inputPrograms}), so its dimension is 
	$dim(${$A_2$}$) = 1$ and 
	the only iteration variable needed to uniquely determine 
	$\bm{\phi}_2$ is {$k$}.  Therefore,
	$|A_2(\bm{D})| \le K_h$.}

\macb{Dominator set.}
Input vertices $A_1, \dots, A_m$ form a dominator 
set of vertices $A_0$,
because any path from graph inputs to any vertex in $A_0$ 
must include \emph{at least} one 
vertex from $A_1, \dots, A_m$.
This is also the 
\emph{minimum} dominator set, because of the disjoint access property
(Section~\ref{sec:inputPrograms}): 
any path from graph inputs to any vertex in $A_0$ can 
include \emph{at most} one 
vertex from  $A_1, \dots, A_m$.

\vspace{1em}
\noindent
\emph{Proof of Lemma~\ref{lma:rectTiling}.}
For subcomputation $H$, we have 
$|\bigcup_{j=1}^{m}A_j(\bm{D})| \le X$ (by the 
definition 
of an \xpart). Again, by the disjoint access property, we have 
$\forall j_1 \ne j_2: A_{j_1}(\bm{\mathcal{D}}) \cap A_{j_2}(\bm{\mathcal{D}}) 
= \emptyset$.
Therefore, we also have 
$|\bigcup_{j=1}^{m}A_j(\bm{D})| = 
\sum_{j=1}^{m}|A_j(\bm{D})|$.
We now want to maximize $|H|$, that is to find 
$H_{max}$ to 
obtain computational intensity $\rho$
(Lemma~\ref{lma:compIntensityPhi}). 

Now we prove that to maximize $|H|$, 
inequalities~\ref{eq:vmax_vol} 
and~\ref{eq:a_j_vol} must be tight (become equalities). 

From proof of Lemma~\ref{lma:vi_bound} it follows that 
$|H|$ is maximized 
when iteration vector $\bm{\psi}$ takes all possible 
combinations of 
iteration variables $\psi^t \in D^t$  during $H$. But, as we visit each
combination of all $l$ iteration variables, for each access $A_j$
every combination of its $[\psi_j^1, \dots, \psi_j^{dim(\bm{\phi}_j)}]$ 
iteration 
variables is also visited.  
Therefore, for every $j = 1, \dots, m$, each access size 
$|A_j(\bm{\mathcal{D}})|$ is maximized  
(Lemma~\ref{lma:projection_bound}), 
as access functions are injective, which implies that for each combination 
of  $[\psi_j^1, 
\dots, \psi_j^{dim(\bm{\phi}_j)}]$, there is one access to $A_j$.
{{$\prod_{t=1}^{l}|R_h^t|$} is then the upper bound on 
	{$|H|$}, and its 
	tightness implies that all bounds on access sizes 
	{$|A_j(\bm{\mathcal{D}})| \le 
		\prod_{k=1}^{dim(\bm{\phi}_j)}|D^k_{j}|$} are also tight.}
\qed

\noindent
\Intuition{{\mbox{Lemma~\ref{lma:rectTiling}} states that if each 
iteration variable 
	\mbox{$\psi^t$}, \linebreak \mbox{$t = 1, \dots, l$} takes \mbox{$|D^t|$} 
	different 
	values, 
	then there are 
at 
	most \linebreak
	\mbox{$\prod_{t=1}^{l}|D^t|$} different iteration vector values 
	\mbox{$\bm{\psi}$} that 
can 
	be 
	formed in \mbox{$H$}. Thus, to maximize \mbox{$|H|$} all combinations of 
	values of \mbox{$\psi^t$} should be evaluated. On the other hand, this also 
	implies the
	maximization of all access sizes \mbox{$|A_j(\bm{D})| = 
	\prod_{k=1}^{dim(\bm{\phi}_j)}|D^k_{j}|$}. This result is more general 
	than, e.g., polyhedral \mbox{techniques~\cite{polyhedralModel,  
	general_arrays, 
	olivry2020automated}} as it does not require loop nests to be affine. 
	Instead, it solely relies on set algebra and combinatorial methods.
}}

\subsection{Finding the I/O Lower Bound}
\label{sec:iobound_singlestatement}
Denoting  $H_{max} = \argmax_{H \in 
	\mathcal{P}(X)} |H|$ as the largest subcomputation 
among all valid \mbox{$X$-partitions}, we use 
Lemma~\ref{lma:rectTiling} and combine it with the 
dominator set 
constraint from Section~\ref{sec:xpart}. Note that all access set sizes are 
strictly positive integers 
$|{D}^t| \in \mathbb{N}_+, t = 1, \dots, l$. Otherwise, if any of the 
sets is empty, no computation can be performed. However, as we 
only want to find the bound on the number of I/O operations, we relax 
the 
integer constraints and replace them with $|D^t| \ge 1$. Then,
we formulate finding $\chi(X)$ (Lemma~\ref{lma:compIntensityPhi}) as the 
following optimization problem:

\begin{align}
\label{eq:findingX}
\nonumber
\max \prod_{t=1}^{l}|{D}^t| & \hspace{2em}\text{s.t.}\\
\nonumber
\sum_{j = 1}^{m} \prod_{k=1}^{dim(\bm{\phi}_j)}|{D}^k_{j}|& \le X \\
\forall 1 \ge t \ge l : |{D}^t|& \ge 1 
\end{align} 

We then find $|H_{max}| = \chi(X)$ as a function of $X$ using 
Karush –Kuhn–Tucker (KKT) conditions~\cite{kkt}. Next, we 
solve 
\begin{equation}
\label{eq:drhodx}
\frac{d\frac{\chi(X)}{X - M}}{dX} = 0.
\end{equation}
Denoting $X_0$ as the solution to Equation~(\ref{eq:drhodx}), 
we finally obtain
\begin{equation}
\label{eq:Qlowerbound}
Q \ge |V| \frac{(X_0 - M)}{\chi(X_0)}.
\end{equation}

\noindent
\macb{Computational intensity and out-degree-one vertices.}
There exist cDAGs where every non-input vertex
has 
at least $u \ge 0$ direct predecessors that are input 
vertices 
with 
out-degree 1. We can use this fact to put an additional bound on the 
computational 
intensity.

\begin{restatable}{lma}{roOnecase}
	\label{lma:rho1case}
	If in a cDAG $G=(V,E)$ every non-input vertex has at least $u$ direct 
	predecessors with out-degree one that are graph inputs, then the maximum 
	computational intensity $\rho$ of this cDAG is bounded by $\rho \le 
	\frac{1}{u}$.
\end{restatable}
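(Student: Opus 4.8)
The plan is to show that the out-degree-one hypothesis forces every subcomputation to perform at least $u$ dedicated load operations per vertex it computes, which immediately caps its intensity at $1/u$. First I would establish a disjointness fact: fix any computed (non-input) vertex $v$; by hypothesis it has $u$ direct predecessors $w_1^v,\dots,w_u^v$ that are graph inputs of out-degree one, i.e.\ each $w_j^v$ has $v$ as its \emph{unique} successor. Consequently, for two distinct computed vertices $v\ne v'$ the private-input sets $\{w_1^v,\dots,w_u^v\}$ and $\{w_1^{v'},\dots,w_u^{v'}\}$ are disjoint, so any collection of $k$ computed vertices has at least $uk$ pairwise-distinct private graph-input predecessors.

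Next I would turn this into an I/O bound. A graph input has no incoming edges, so in the red-blue pebble game (Section~\ref{sec:redblue}) the only rule that can place a red pebble on it is a load. To pebble a computed vertex $v$ inside a subcomputation $H$, all predecessors of $v$, and in particular $w_1^v,\dots,w_u^v$, must simultaneously carry red pebbles at the moment $v$ is pebbled; any additional (non-private) predecessors only add to the cost and can be ignored. Combining this with the disjointness fact, pebbling a subcomputation $H$ that computes $k$ vertices requires that $uk$ distinct graph inputs each be loaded during the pebbling of $H$, so the number of I/O operations required to pebble $H$ is at least $u\,|H|$. By the definition of computational intensity (Section~\ref{sec:xpart}, Lemma~\ref{lma:compIntensity}), $\rho_H=|H|\big/(\text{I/O to pebble }H)\le |H|/(u|H|)=1/u$; taking the maximum over all subcomputations of all valid $X$-partitions gives $\rho\le 1/u$, as claimed.

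\textbf{Main obstacle.} The delicate point is the charging in the second step: a schedule is free to preload a private input and hold its red pebble until the subcomputation that consumes it executes, so the $uk$ loads are not \emph{a priori} attributable to $H$ rather than to an earlier subcomputation. I would resolve this by arguing globally — over the whole schedule each of the $u\,n_c$ private inputs (where $n_c$ is the number of computed vertices) is loaded at least once, and these loads target pairwise-distinct graph inputs, so the total I/O satisfies $Q\ge u\,n_c$ — and then reconciling this with how $\rho$ enters Lemma~\ref{lma:compIntensity}: distributing the $Q\ge u\,n_c$ loads across the subcomputations (each of size at most $|H_{max}|$) and feeding this into the same bookkeeping used to prove Lemma~\ref{lma:compIntensity} recovers $\rho\le 1/u$ up to the $X-M$ slack already present there. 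Making this reconciliation precise, so that the bound is stated cleanly as $\rho\le 1/u$, is the crux of the argument.
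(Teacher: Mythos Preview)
Your proposal is correct and its core argument --- out-degree-one input predecessors are pairwise disjoint across computed vertices, each must be loaded at least once since inputs start with only blue pebbles, hence total I/O is at least $u$ times the number of computed vertices --- is exactly the paper's proof. The paper argues globally from the outset and simply reads off $\rho\le 1/u$ from this count, sidestepping the per-subcomputation attribution issue you flag; your ``Main obstacle'' paragraph correctly identifies and resolves a subtlety the paper glosses over.
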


\begin{proof}
	By the definition of the red-blue pebble game, all inputs start in slow 
	memory, 
	and therefore, have to be loaded. By the assumption on the cDAG, to compute 
	any 
	non-input vertex $v \in V$, at least $u$ input vertices need to have red 
	pebbles placed on them using a load operation.
	Because these vertices do not have any other direct successors (their 
	out-degree is 1), they cannot be used to compute any other non-input vertex 
	$w$. Therefore, each computation of a non-input vertex requires at least 
	$u$ 
	unique input vertices to be loaded. 
\end{proof}

\emph{Example:} Consider Figure~\ref{fig:cdagsRho1}. In~a), each
compute vertex $C[i,j]$ has two input vertices: $A[i,j]$ with out-degree 1, and
$b[j]$ with out-degree $n$, thus $u=1$. As both 
array $A$ and vector $b$ start in the slow memory (having blue pebbles on 
each 
vertex), for each computed vertex from $C$, at least one vertex from $A$ 
has 
to be loaded, therefore $\rho \le 1$. In~b), each computation needs 
two 
out-degree 1 vertices, one from vector $a$ and one from vector $b$, 
resulting 
in $u=2$. Thus, $\rho \le \frac{1}{2}$.
\begin{figure}
	\includegraphics[width=0.9\columnwidth]{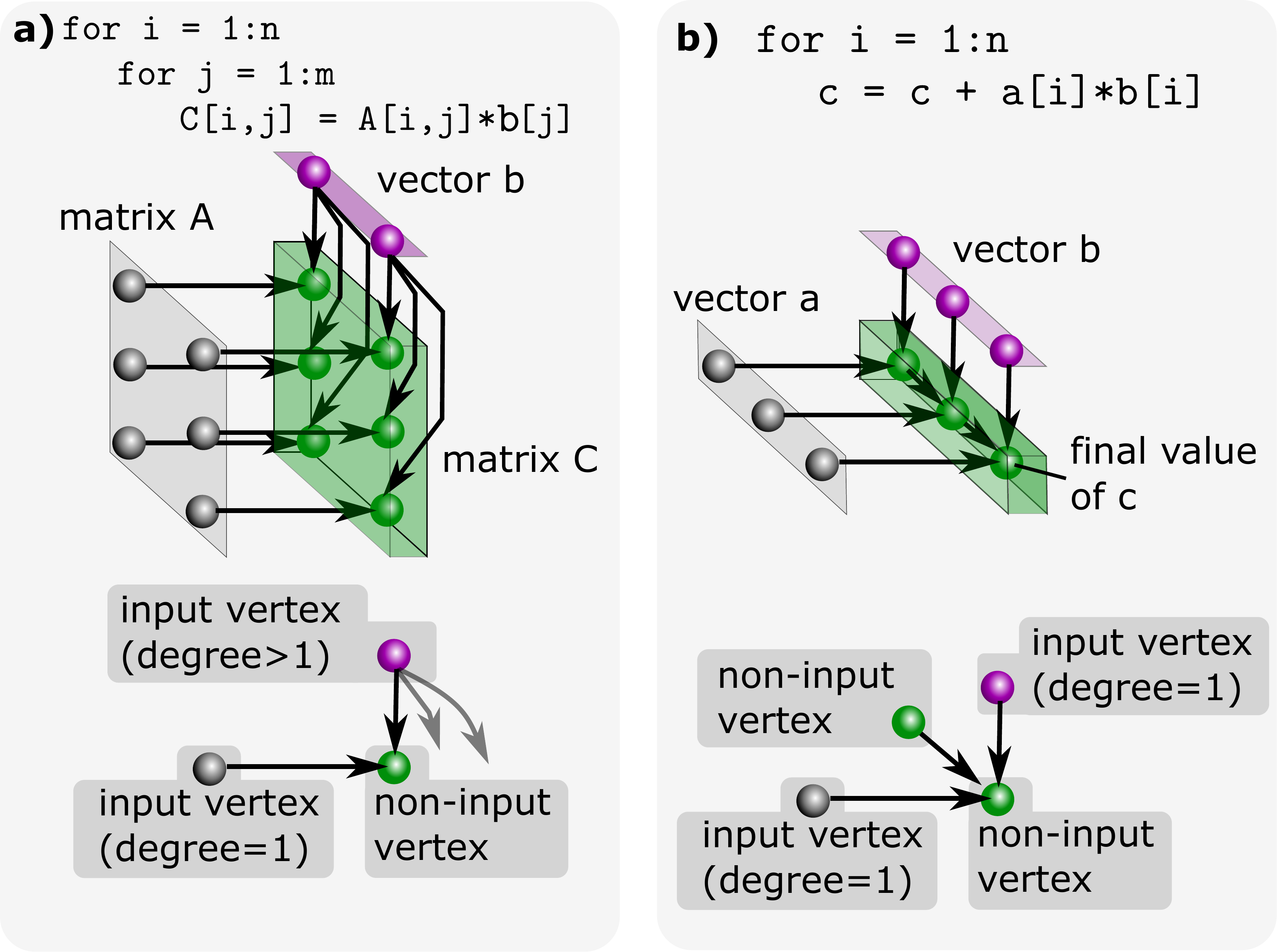}
	\caption{cDAGs with out-degree 1 input vertices. a) $u_a = 1$, $\rho_a 
		\le 
		1$. b) $u_b = 2$, $\rho_b \le \frac{1}{2}$.}
	\label{fig:cdagsRho1}
\end{figure}

\section{Data Reuse Across Multiple Statements}
\label{sec:mult_statements}

	Until now, we have analyzed each statement 
	separately. However, almost all computational kernels contain multiple 
	statements connected by data dependencies --- e.g., a column update 
	(\mbox{$S1$}) 
	and 
	a trailing matrix update (\mbox{$S2$}) in LU factorization 
	(Figure{~\ref{fig:prog_rep}}). The challenge here is that, 
	in general, I/O cost $Q$ is not composable: due to the 
	data reuse, the total I/O 
	cost of the program may be smaller than the sum of I/O 
	costs of its constituent kernels.
	In this section we examine how these dependencies influence the total I/O 
	cost 
	of a program.

We derive I/O lower bounds for programs with 
$w$ statements $S_1, \dots, S_w$ in two steps. First, we 
analyze each 
statement $S_i$ separately, as in 
Section~\ref{sec:boundsSingleStatement}. Then, we derive 
 how many loads could be avoided at most during one 
statement if another statement owned shared data. 
{There are two cases where data reuse can occur: %
	\textbf{I)} input overlap, where shared 
	arrays are inputs for multiple statements, and \textbf{II)} output 
	overlap, where the output array of one statement is the 
	input array of another. }

{\textbf{Case I)}. Assume there are \mbox{$w$} statements in the program, 
	and there are 
	\mbox{$k$} arrays \mbox{$A_j, j = 1,\dots,k$} that are shared between at 
	least 
	two 
	statements. We still evaluate each statement 
	separately, but we will subtract the upper bound on 
	shared loads $Q_{tot} \ge \sum_{i 
			= 1}^{w} Q_i - $ $\sum_{j=1}^{k} |Reuse(A_j)|$, where 
	\mbox{$|Reuse(A_j)|$} is the reuse bound on array \mbox{$A_j$}
	(Section{~\ref{sec:reuseSetSize}}).}
\textbf{Case II)}. Consider each pair of ``producer-consumer'' 
statements $S$ 
and $T$, that is, the output of $S$ is the input of 
$T$. The I/O lower bound $Q_S$ of statement $S$ does not change due to the 
reuse, as the 
same 
number of loads has to be performed to evaluate $S$. On the other hand, 
it may invalidate $Q_T$, as the  
dominator set of $T$ formulated in Section~\ref{sec:access_sizes}
may not be minimal --- inputs of a statement may not be 
graph inputs anymore. For each ``consumer'' statement $T$, we reevaluate 
$Q_T' \le Q_T$ using Lemma~\ref{lma:output_reuse}. Finally, for a 
program consisting of $w$ statements in total, connected by the output 
overlap, we have $Q_{tot} \ge 
\sum_{i = 1}^{w} Q_i'$. Note that for each ``producer'' statement $i$, 
$Q_i' = 
Q_i$, i.e. output overlap does not change their I/O lower bound.

\subsection{Case I: Input Reuse and Reuse Size}
\label{sec:reuseSetSize}

Consider two statements $S$ and $T$ that share one input array $A_i$.
Let $|A_i(\bm{R}_S)|$ denote the total number of accesses to $A_i$ during the 
I/O optimal execution of a program that contains only statement $S$. 
Naturally, $|A_i(\bm{R}_T)|$ denotes the same for a program containing only $T$.
Define $Reuse(A_i) \coloneqq \min\{|A_i(\bm{R}_S)|, |A_i(\bm{R}_T)|\}$. We then 
have: 

\begin{restatable}{lma}{inputreuse}
	\label{lma:inputreuse}
	The I/O cost of a program containing statements $S$ and 
	$T$ that share the 
	input array $A_i$ is bounded by
	$$Q_{\mathit{tot}} \ge Q_{S} + Q_{T} - Reuse(A_i),$$
	where $Q_S$, $Q_T$ are the I/O costs of a program containing only statement 
	$S$ or $T$, respectively.
\end{restatable}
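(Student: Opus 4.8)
The plan is to prove the bound $Q_{\mathit{tot}} \ge Q_S + Q_T - \mathit{Reuse}(A_i)$ by a counting argument over load operations in an arbitrary optimal schedule of the combined program, attributing each load either to ``work needed for $S$'' or ``work needed for $T$'' and bounding the amount of double-counting by $\mathit{Reuse}(A_i)$. First I would fix an I/O-optimal schedule $\mathcal{E}$ of the program containing both statements, performing $Q_{\mathit{tot}}$ I/O operations in total. Restricting attention to the subsequence of operations that pebble vertices of statement $S$ (together with all loads of their dominators) yields a valid schedule for the single-statement program $S$; hence the number of loads used ``for $S$'' in $\mathcal{E}$ is at least $Q_S$, and symmetrically at least $Q_T$ for $T$. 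The only way the sum $Q_S + Q_T$ can overcount the true cost $Q_{\mathit{tot}}$ is when a single load operation in $\mathcal{E}$ serves both a computation in $S$ and a computation in $T$ — which, by the disjoint-access structure, can only happen for a vertex belonging to the shared array $A_i$ (the only array referenced by both statements).

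The key step is then to bound the number of such ``shared'' loads. A load of a vertex $v \in A_i$ can be counted toward both $Q_S$ and $Q_T$ only if, after being brought into fast memory, it is consumed by an $S$-computation and also (without being evicted and reloaded) by a $T$-computation. Each such doubly-used load corresponds to a distinct accessed vertex of $A_i$, and the number of distinct vertices of $A_i$ touched during the $S$-portion of any schedule is at most $|A_i(\bm{R}_S)|$ (the access count in the I/O-optimal single-statement execution of $S$, which by the analysis of Section~\ref{sec:access_sizes} upper-bounds the accesses in \emph{any} execution of $S$), and symmetrically at most $|A_i(\bm{R}_T)|$ for $T$. Therefore the number of loads that can be attributed to both statements is at most $\min\{|A_i(\bm{R}_S)|, |A_i(\bm{R}_T)|\} = \mathit{Reuse}(A_i)$. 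Subtracting this overcount gives $Q_{\mathit{tot}} \ge Q_S + Q_T - \mathit{Reuse}(A_i)$.

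More carefully, I would make the attribution precise via an injection argument: define a map from the multiset of $\{$loads counted for $S\} \cup \{$loads counted for $T\}$ (a multiset of size $\ge Q_S + Q_T$) into the set of actual load operations of $\mathcal{E}$ (size $Q_{\mathit{tot}}$), such that the only operations with preimage of size $2$ are loads of $A_i$-vertices that are reused across the two statements, and each such $A_i$-vertex appears as such at most once. The number of fibers of size $2$ is then bounded by $\mathit{Reuse}(A_i)$, and $Q_S + Q_T - \mathit{Reuse}(A_i) \le Q_{\mathit{tot}}$ follows from $|$domain$| \le |$range$| + \#\{$size-2 fibers$\}$.

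The main obstacle I anticipate is making the ``restriction of an optimal combined schedule is a valid schedule for the single statement'' claim fully rigorous while respecting the red-blue pebble-game rules: when we delete the $T$-only operations from $\mathcal{E}$, we must check that the remaining sequence never violates the $M$-red-pebble bound and that every $S$-computation still has all its predecessors red-pebbled at the moment it fires — this is immediate since deleting operations only frees memory and we keep all loads feeding $S$, but it must be stated. A secondary subtlety is justifying that $|A_i(\bm{R}_S)|$, defined via the \emph{I/O-optimal} single-statement execution, legitimately bounds the shared-load count arising in the \emph{arbitrary} combined schedule $\mathcal{E}$; this requires the observation (implicit in Lemmas~\ref{lma:vi_bound}--\ref{lma:rectTiling}) that the relevant quantity is really a bound on distinct accessed $A_i$-vertices per unit of useful $S$-work, which holds for any schedule, so one should phrase $\mathit{Reuse}(A_i)$ as such a per-schedule-independent bound rather than tie it to one particular optimal run.
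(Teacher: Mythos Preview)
Your approach is the same counting argument as the paper's, framed dually. The paper starts from the separate optimal schedules of $G_S$ and $G_T$ and argues that combining them can \emph{avoid} at most $\mathit{Reuse}(A_i)$ loads (those of $A_i$ that both statements need); you start from an optimal combined schedule $\mathcal{E}$ and \emph{restrict} it to obtain valid schedules for $S$ alone and $T$ alone, then bound the overcount via an injection. Your restriction framing makes the inequalities $\ge Q_S$ and $\ge Q_T$ immediate (any valid pebbling of $S$ costs at least $Q_S$), which is crisper than the paper's somewhat informal ``avoided loads'' language. You also correctly flag the delicate point that $|A_i(\bm{R}_S)|$ is defined via the single-statement optimum yet must bound the shared-load count in the arbitrary combined schedule; the paper does not fully resolve this either and explicitly remarks, just after the proof, that $\mathit{Reuse}(A_i)$ is taken as a conservative overapproximation of the actual reuse.
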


\begin{proof}
	Consider an optimal sequential schedule of a cDAG $G_S$ containing 
	statement 
	$S$ only. For any subcomputation $H_s$ and its associated 
	iteration domain $\bm{R}_s$ its minimum dominator set is
	$\mathit{Dom}(H_s) = \bigcup_{j=1}^m A_j(\bm{R}_s)$. To compute $H_{S}$, 
	at least $\sum_{i=1}^{m}|A_j(\bm{R}_s)| - M$ vertices have to 
	be loaded, as only 
	$M$ vertices can be reused from previous subcomputations.
	
	We seek if any loads can be avoided in the common schedule if we add 
	statement 
	$T$, denoting its 
	cDAG $G_{S+T}$. 
	Consider a subset  
	$A_i(\bm{R}_x)$ of vertices in $A_i$. 
	
	Consider some subset of vertices in $A_i$ which potentially could be reused 
	and 
	denote it $\Theta_i$. Now 
	denote all vertices in $A_0$ (statement 
	$S$) which depend on any vertex from $\Theta_i$ as $\Theta_S$, and, 
	analogously, set $\Theta_T$ for statement $T$. Now consider these two 
	subsets $\Theta_S$ and $\Theta_T$ separately. If $\Theta_S$ is 
	computed before $\Theta_T$, then it had to load all vertices from 
	$\Theta_i$, 
	avoiding no loads compared to the schedule of $G_S$ only. Now, computation 
	of 
	$\Theta_T$ may take benefit of some vertices from $\Theta_i$, which can 
	still 
	reside 
	in fast memory, avoiding up to $|\Theta_i|$ loads. 
	
	The total number of avoided loads is bounded by the number of loads from 
	$A_i$ 
	which are shared by both $S$ and $T$. Because statement $S$ loads at most  
	$|A_i(\bm{R}_S)|$ vertices from $A_i$ during optimal schedule of $G_S$, and 
	$T$ 
	loads at most $|A_i(\bm{R}_T)|$ of them for $G_T$, the upper bound of 
	shared, 
	and possibly avoided loads is $Reuse(A_i) = \min\{|A_i(\bm{R}_S)|, 
	|A_i(\bm{R}_T)|\}$.
	
\end{proof}

The \textbf{reuse size}
is defined as $Reuse(A_i) = \min\{|A_i(\bm{R}_S)|,$  
$|A_i(\bm{R}_T)|\}$. Now, how to find $|A_i(\bm{R}_S)|$ and $|A_i(\bm{R}_T)|$?

Observe that $|A_i(\bm{R}_S)|$ is a property of $G_S$, that is, the cDAG 
containing statement $S$ only. Denote the I/O optimal schedule 
parameters of $G_S$: $V^S_{max}$, $X^S_0$, and $|A_i(\bm{R}^S_{max}(X^S_0))|$ 
(Section~\ref{sec:iobound_singlestatement}). Similarly, for $G_T$: 
$V^T_{max}$, $X^T_0$, and $|A_i(\bm{R}^T_{max}(X^T_0))|$.
We now derive: 1) at least how 
many subcomputations does the optimal schedule have: $s \ge 
\frac{|V|}{|H_{max}|}$, 2) at least how many accesses to $A_i$ are 
performed 
per optimal subcomputation $|A_i(\bm{R}_{max}(X_0))|$. Then:

\vspace{-1em}
\begin{align}
Reuse(A_i) = \min\{&|A_i(\bm{R}^S_{max}(X^S_{0}))| 
\frac{|V^S|}{|V^S_{max}|}, \\
\nonumber
|&A_i(\bm{R}^T_{max}(X^T_{0}))| \frac{|V^T|}{|V^T_{max}|}\}
\end{align}

{Note that \mbox{$Reuse(A_i)$} is an overapproximation of the actual reuse. 
Since finding the optimal schedule is 
\mbox{PSPACE-complete~\cite{redblueHard_}}, we conservatively assume that only 
the minimum number of loads from \mbox{$A_i$} is performed. Thus, 
\mbox{Lemma~\ref{lma:inputreuse}} generalizes to any number of statements $S_1, 
\dots, S_w$ sharing array  \mbox{$A_i$} --- the total number of loads from 
\mbox{$A_i$} is lower-bounded by a maximum number of loads from \mbox{$A_i$} 
among \mbox{$S_j$, $\max_{j = 1, \dots, w}|A_i(\bm{R}_{S_j})|$}.  }

\subsection{Case II: Output Reuse and Access Sizes}
\label{sec:output_reuse}

Consider the case where the \emph{output} {$A_{0}$} of 
 statement 
$S$ is 
also the \emph{input} {$B_{j}$} of statement $T$. 
 Furthermore, consider subcomputation $H$ of statement $T$ (and 
its associated iteration domain $\bm{{D}}$).
Any path from the graph 
inputs to vertices in {$B_0(\bm{{D}})$} must pass 
through 
vertices in {$B_{j}(\bm{{D}})$}. 
The following question arises:
Is there a smaller set of vertices $B_{j}'(\bm{{D}})$,
$|B_{j}'(\bm{{D}})| < |B_{j}(\bm{{D}})|$ that 
every 
path from graph inputs to {$B_{j}(\bm{{D}})$}
must pass through?

Let $\rho_S$ denote computational intensity of statement $S$. With that, we can 
state the following lemma:
\begin{restatable}{lma}{outputReuse}
	\label{lma:output_reuse}
	Any dominator set of set $B_{j}(\bm{{D}})$ must be of size at 
	least $|\mathit{Dom}(B_{j}(\bm{{D}}))| \ge 
	\frac{|B_{j}(\bm{{D}})|}{\rho_S}$.
\end{restatable}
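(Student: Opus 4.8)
The plan is to view $B_j(\bm{D})$ — the set of vertices of the producer statement $S$ that feed subcomputation $H$ of $T$ — as the \emph{output} of some sub-cDAG of $G_S$, and then invoke the computational-intensity bound for $S$. Concretely, let $W$ be a minimum dominator set of $B_j(\bm{D})$ in the full cDAG of the combined program. By definition, every path from a graph input to any vertex of $B_j(\bm{D})$ passes through $W$. Now consider the sub-cDAG $G'$ consisting of $B_j(\bm{D})$ together with all ancestors within the producer statement $S$. In $G'$, the vertices of $W$ that are relevant act as the loaded inputs, and $B_j(\bm{D})$ is a subset of what must be computed.

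**Applying computational intensity.** The key step is the observation that computing all of $B_j(\bm{D})$ from scratch — i.e., with only the vertices of $W$ available in fast memory — is itself an instance of executing a portion of statement $S$. Since $\rho_S = |H^S_{max}|/(X^S_0 - M)$ is the \emph{maximum} computational intensity of $S$ over all valid $X$-partitions (Lemma~\ref{lma:compIntensity} and Lemma~\ref{lma:compIntensityPhi}), any collection of $|B_j(\bm{D})|$ vertices of $S$ requires at least $|B_j(\bm{D})|/\rho_S$ I/O operations to pebble, hence needs a dominator set of at least that size. More carefully: if $|\mathit{Dom}(B_j(\bm{D}))| = |W| < |B_j(\bm{D})|/\rho_S$, then we could pebble $B_j(\bm{D})$ inside $G_S$ by first loading the $\le |W|$ vertices of $W$, giving a subcomputation of $S$ whose ratio of computed vertices to dominator-set size exceeds $\rho_S$ — contradicting maximality of $\rho_S$. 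This yields $|W| = |\mathit{Dom}(B_j(\bm{D}))| \ge |B_j(\bm{D})|/\rho_S$, and since $W$ was an arbitrary dominator set, the bound holds for all dominator sets.

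**The main obstacle.** The delicate point is the passage from ``$W$ dominates $B_j(\bm{D})$ in the combined cDAG'' to ``$W$ bounds an honest subcomputation of $S$ to which $\rho_S$ applies.'' One must check that restricting attention to the producer portion does not change the relevant dominator structure — i.e., that paths entering $B_j(\bm{D})$ from outside $S$ (if the output array $A_0$ of $S$ has its own graph-input versions, or is touched by statements other than $S$) are handled, and that the $X$-partition machinery for $S$ in isolation ($G_S$) legitimately covers the sub-cDAG $G'$. I expect to argue that any path from a graph input to $B_j(\bm{D})$ either lies entirely within $S$'s computation (so $\rho_S$ governs it) or enters through a vertex that is itself a graph input to the sub-cDAG and hence part of the dominator count; in either case $W$ must be large enough. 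A secondary subtlety is that $\rho_S$ was defined via the \emph{largest} subcomputation, so one should phrase the contradiction in terms of partitioning $B_j(\bm{D})$ into subcomputations each respecting the $X^S_0$ bound and summing, rather than treating $B_j(\bm{D})$ as one monolithic block — but since the per-subcomputation ratio is uniformly bounded by $\rho_S$, the aggregate bound $|B_j(\bm{D})|/\rho_S$ on total loads, and therefore on $|W|$, follows.
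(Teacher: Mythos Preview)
Your proposal is correct and follows essentially the same approach as the paper: assume a dominator set $W$ with $|W| < |B_j(\bm{D})|/\rho_S$, observe that loading only $W$ would then suffice to compute all of $B_j(\bm{D})$, and derive a contradiction with the computational-intensity bound of Lemma~\ref{lma:compIntensity}. The paper's proof is terser and does not spell out the subtleties you flag in your ``main obstacle'' paragraph (restriction to the producer sub-cDAG, partitioning versus monolithic treatment), but the core contradiction argument is identical.
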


\begin{proof}
	By Lemma~\ref{lma:compIntensity}, for one loaded vertex, 
	we may 
	compute at most $\rho_S$ vertices of $A_0$. These are 
	also vertices of $B_{j}$. Thus, to compute 
	$|B_{j}(\bm{\mathcal{D}})|$ vertices of $B_{j}$, at least 
	$\frac{|B_{j}(\bm{\mathcal{D}})|}{\rho_S}$ loads must be 
	performed. We just need to show that at least that many  
	vertices have to be in any dominator set 
	$\mathit{Dom}(B_{j}(\bm{\mathcal{D}}))$. 
	Now, consider the converse: There is a vertex set
	$D = \mathit{Dom}(B_{j}(\bm{\mathcal{D}}))$ such that $|D| < 
	\frac{|B_{j}(\bm{\mathcal{D}})|}{\rho_S}$. But that would mean, 
	that we could potentially compute all $|B_{j}(\bm{\mathcal{D}})|$ 
	vertices by only loading $|D|$ vertices,   
	violating Lemma~\ref{lma:compIntensity}.
\end{proof}

\begin{corollary}
	\label{cor:output}
	Combining Lemmas~\ref{lma:output_reuse} 
	and~\ref{lma:rectTiling},
	the data access size of $|B_{j}(\bm{{D}})|$ during 
	subcomputation $H$ is
	\begin{equation}
	\label{eq:inputOutputReuseSize}
	|\mathit{Dom}(B_{j}(\bm{{D}}))| \ge 
	\frac{\prod_{k=1}^{dim(\bm{\phi}_j)}|D_{j}^k|}{\rho_S}.
	\end{equation}
\end{corollary}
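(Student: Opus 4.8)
The plan is to chain the two cited lemmas, with the single non-obvious observation that the quantity to be bounded is the one attained at the subcomputation maximizing $|H|$. First I would fix a subcomputation $H$ of the ``consumer'' statement $T$ together with its iteration domain $\bm{D} = [D^1,\dots,D^l]$, and recall that, for the purpose of deriving an I/O lower bound, the relevant contribution of $B_j$ to the dominator-set constraint in the \xpart optimization problem (cf.\ Equation~\ref{eq:findingX}) is the one realized by $H_{max} = \argmax_{H}|H|$. By Lemma~\ref{lma:rectTiling} applied to the cDAG of $T$, this maximizer is exactly the subcomputation in which \emph{every} access set is as large as the iteration ranges allow; in particular $|B_j(\bm{D})| = \prod_{k=1}^{dim(\bm{\phi}_j)}|D_j^k|$, i.e.\ Inequality~\ref{eq:a_j_vol} holds with equality for the array $B_j$.

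Next I would invoke Lemma~\ref{lma:output_reuse}. Since the input $B_j$ of $T$ coincides with the output $A_0$ of the ``producer'' statement $S$, that lemma gives $|\mathit{Dom}(B_j(\bm{D}))| \ge |B_j(\bm{D})|/\rho_S$, where $\rho_S$ is the fixed computational intensity of $S$. Substituting the equality from the previous step into this lower bound yields $|\mathit{Dom}(B_j(\bm{D}))| \ge \big(\prod_{k=1}^{dim(\bm{\phi}_j)}|D_j^k|\big)/\rho_S$, which is precisely Inequality~\ref{eq:inputOutputReuseSize}. Because $\rho_S > 0$ is a constant independent of $\bm{D}$, dividing by it preserves the direction of the inequality, and nothing further is required.

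The one point that needs care — and the main (mild) obstacle — is the legitimacy of composing the two lemmas at the \emph{same} subcomputation: Lemma~\ref{lma:output_reuse} is a lower bound in terms of $|B_j(\bm{D})|$, whereas Lemma~\ref{lma:projection_bound} only supplies the upper estimate $|B_j(\bm{D})| \le \prod_k|D_j^k|$, so a careless substitution would chain the inequalities the wrong way. This is resolved exactly because we may restrict attention to $H_{max}$, at which Lemma~\ref{lma:rectTiling} forces $|B_j(\bm{D})| = \prod_k|D_j^k|$; I would state this reduction explicitly before combining the two bounds. A secondary bookkeeping check is that $dim(\bm{\phi}_j)$ and the sets $D_j^k$ appearing in the statement refer to the access of $B_j$ \emph{inside statement $T$}, and $\rho_S$ to the intensity of $S$ as computed in Section~\ref{sec:iobound_singlestatement}; with those identifications fixed, the corollary is immediate.
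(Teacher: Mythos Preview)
Your proposal is correct and follows the same approach the paper intends: the paper offers no explicit proof beyond citing Lemmas~\ref{lma:output_reuse} and~\ref{lma:rectTiling}, and your elaboration is precisely the chaining those citations imply. Your explicit observation that one must work at $H_{max}$ so that Lemma~\ref{lma:rectTiling} yields the equality $|B_j(\bm{D})| = \prod_k|D_j^k|$ (rather than the mere upper bound of Lemma~\ref{lma:projection_bound}) is exactly the point needed to make the inequality directions compose correctly, and it matches how the corollary is actually used downstream in the optimization problem of Equation~\ref{eq:findingX}.
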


{Similarly to \textbf{case I}, this result also generalizes to multiple 
``consumer'' statements that reuse the same output array of a ``producer'' 
statement, as well as any combination of input and output reuse for multiple 
arrays and statements. Since the actual I/O optimal schedule is unknown, the 
general strategy to ensure correctness of our lower bound is to consider each 
pair of interacting statements separately as one of these two cases. Since both 
\mbox{Lemma~\ref{lma:inputreuse} and~\ref{lma:output_reuse}} overapproximate 
the reuse, the final bound may not be tight - the more inter-statement reuse, 
the more overapporixmation is needed. Still, this method can be successfully 
applied to derive \emph{tight} I/O lower bounds for many linear algebra 
kernels, such as matrix factorizations, tensor products, or solvers.}

\section{General Parallel I/O Lower Bounds}
\label{sec:parredblue}

{We now establish how our method applies to a parallel machine with $P$ 
	processors (Section{~\ref{sec:machineModel}})}.
{Since we target large-scale distributed systems, our parallel pebbling 
model differs from the one introduced e.g. by Alwen and 
Serbinenko\mbox{~\cite{parallelPebbling}}, which is inspired by shared-memory 
models 
like \mbox{PRAM~\cite{pram}}. Instead, we disallow sharing memory (pebbles) 
between the processors, and enforce explicit communication --- analogous to the 
load/store operations --- using red and blue pebbles. This allows us to better 
match the behavior of real, distributed applications that use, e.g., MPI.}

Each processor $p_i$ owns its private fast memory 
that can hold up to $M$ words, represented in the cDAG as 
$M$ vertices of color $p_i$.  Vertices with 
different colors (belonging to different processors) cannot be 
shared between these processors, but any number of different 
pebbles may be placed on one vertex.

All the standard red-blue pebble game rules apply with the 
following modifications:
\begin{enumerate}
	\item \textbf{compute.} Uf all direct predecessors of 
	vertex 
	$v$ have pebbles of $p_i$'s color placed on them, one 
	can place a pebble of color $p_i$ on $v$ (no sharing 
	of pebbles between processors),
	\item \textbf{communication.} If a vertex $v$ has \emph{any} 
	pebble 
	placed on it,
	a pebble of 
	any other color may be placed on this vertex.
\end{enumerate}

From this game definition it follows that from a perspective 
of a single processor $p_i$, any data is either local (the 
corresponding vertex has $p_i$'s pebble placed on it) or remote, 
{without a distinction on the remote location (remote access cost is 
	uniform).}

\begin{restatable}{lma}{parallelIO}
	\label{lma:parallelIO}
	The minimum number of I/O operations in a parallel 
 pebble game,
	played on a cDAG with $|V|$ vertices with $P$ processors 
	each equipped with $M$ pebbles, is $Q \ge \frac{|V|}{P \cdot \rho}$,
	where $\rho$ is the maximum computational intensity, which is independent 
	of $P$
	(Lemma~\ref{lma:compIntensity}).
\end{restatable}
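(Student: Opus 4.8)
The plan is to reduce the parallel lower bound to the sequential one of Lemma~\ref{lma:compIntensity} by a pigeonhole argument: some processor is responsible for at least a $1/P$ fraction of \emph{all computations}, and that processor alone must then pay at least a $1/\rho$ fraction of that in \emph{communication}.

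First I would formalize the computation-assignment step. In any legal parallel play, every non-input vertex of the cDAG eventually receives a computed pebble of some processor's color, since by the \textbf{compute} rule that is the only way such a pebble can appear. Assign each vertex to one processor that computes it (breaking ties arbitrarily and ignoring recomputations), obtaining a cover $V = V_1 \cup \dots \cup V_P$ with $\sum_{i=1}^P |V_i| \ge |V|$, hence a processor $p^\star$ with $|V_{p^\star}| \ge |V|/P$. Next I would argue that the restriction of the global play to the moves of $p^\star$ is a legal \emph{sequential} red-blue pebbling that computes the vertices of $V_{p^\star}$: treat $p^\star$'s pebbles (never more than $M$) as the red pebbles, treat every other processor's pebble and every input initially residing off $p^\star$ as a blue pebble, read each \textbf{communication} move that deposits $p^\star$'s color as a load, each removal of a $p^\star$ pebble as a deletion, and each \textbf{compute} move of $p^\star$ as a compute. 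Communication moves by which other processors pick up $p^\star$'s data need not be charged --- they can only raise $p^\star$'s memory pressure, never lower its I/O --- so the number of communications attributed to $p^\star$ is at least the number of loads in this sequential play.

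It then remains to invoke Lemma~\ref{lma:compIntensity} (together with the \xpart size bound of Section~\ref{sec:xpart}) on $p^\star$'s sequential play. The point to stress is that the maximal computational intensity $\rho = |H_{max}|/(X_c - M)$ is a function of the cDAG and of $M$ only, and in particular is independent of $P$: any subcomputation arising in $p^\star$'s play is a subset of $V$ and still obeys the $\mathit{Dom}_{min}$- and $\mathit{Min}$-size bounds, so it is among the subcomputations over which $|H_{max}|$ is taken, whence $|H| \le |H_{max}|$ with the \emph{same} $|H_{max}|$. Lemma~\ref{lma:compIntensity} therefore gives that $p^\star$ performs at least $|V_{p^\star}|/\rho \ge |V|/(P\rho)$ communication operations, which --- $Q$ being the per-processor I/O volume (the cost of the busiest processor, equivalently ``some processor must pay at least'') --- is exactly the claimed bound; if $Q$ is instead read as the total over all processors, the identical chain yields the a fortiori weaker statement, since the total is at least $p^\star$'s share.

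The step I expect to be the main obstacle is this last transfer: making rigorous that projecting a parallel play onto a single processor produces a \emph{legal} single-processor play on (a sub-cDAG of) $G$ to which the sequential intensity $\rho$ applies \emph{unchanged} --- i.e., that a partial pebbling cannot exhibit a larger computational intensity than $|H_{max}|$, and that the parallel ``communication'' moves play precisely the role of the sequential ``load''/``store'' moves, including the subtle bookkeeping for data that $p^\star$ evicts and later needs again (served either by another processor's surviving pebble, counted as a fresh load, or by recomputation, which only adds work). The pigeonhole step and the $M$-pebble accounting are routine by comparison.
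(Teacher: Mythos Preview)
Your proposal is correct and follows essentially the same route as the paper: a pigeonhole argument gives a processor $p^\star$ with $|V_{p^\star}|\ge |V|/P$, and the sequential bound of Lemma~\ref{lma:compIntensity} applied to $p^\star$'s private play yields $Q\ge |V|/(P\rho)$, using that $\rho$ depends only on the cDAG and $M$. Your treatment is in fact more careful than the paper's---the paper simply asserts that ``the minimum number of I/O operations required to pebble these $|V_p|$ vertices is $|V_p|/\rho$'' without spelling out the projection of the parallel play onto a legal sequential red-blue pebbling, whereas you correctly identify this as the substantive step and outline how communication moves map to loads/stores.
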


\begin{proof}
	Following the analysis of Section~\ref{sec:boundsSingleStatement} and 
	the parallel machine 
	model (Section~\ref{sec:parredblue}), the computational 
	intensity $\rho$ is independent of a number of parallel 
	processors - it is solely a property of a cDAG and private 
	fast memory size $M$. Therefore, following 
	Lemma~\ref{lma:compIntensity}, what changes with $P$ is the 
	volume of computation $|V|$, as now at least one processor will 
	compute at least $|V_p| = \frac{|V|}{P}$ vertices. 
	By 
	the definition of the computational intensity, the 
	minimum number of I/O operations required to pebble these 
	$|V_p|$ vertices is $\frac{|V_p|}{\rho}$.
\end{proof}

\section{\hspace{-0.0em}{{I/O Lower Bounds of Parallel Factorization 
			Algorithms}}}
\label{sec:lu_lowerbound}
We gather all the insights from 
	Sections{~\ref{sec:background}} to{~\ref{sec:parredblue}} and use them to 
	obtain 
	the parallel I/O lower bounds of LU and Cholesky factorization algorithms, 
	which we use to develop our communication-avoiding implementations.

\macb{Memory size.}  Clearly, $M \ge N^2/P$, as otherwise the input cannot 
fit into the collective memory of all processors. Furthermore,
in the following sections, we analyze the \emph{memory-dependent}
communication cost~\cite{general_arrays}. That is, following Solomonik et al.~\cite{2.5DLU},
we assume $M \le N^2/P^{2/3}$. This is an upper bound on the amount of memory per processor
that can be efficiently utilized under the assumptions that 1) initially, the input is not 
replicated (every element of input matrix A resides in exactly one location of one of the processors);
2) every processor performs $\Theta(N^3/P)$ elementary computations. For larger $M$, the communication
cost transitions to the \emph{memory-independent} regime~\cite{general_arrays}. All presented
memory-dependent lower bounds and algorithmic costs can be easily translated to memory-independent
version by plugging in the upper bound on the size of the usable memory.

\subsection{LU Factorization}
\label{sec:lu_bound}

In our I/O lower bound analysis we omit the row pivoting, since swapping rows 
can increase the I/O cost by at most $N^2$, which is the cost of permuting the 
entire matrix. However, the total I/O cost of the LU factorization is 
$\mathcal{O}(N^3 / \sqrt{M})$~\cite{2.5DLU}.

LU factorization (without pivoting) contains two statements
(Figure~\ref{fig:prog_rep}).
Observe that we can use 
Lemma~\ref{lma:rho1case} (out-degree one vertices) for 
statement $S1: $\texttt{ A[i,k] 
= A[i,k] / A[k,k]}.
The 
loop nest depth is $l_{S1} =2$, with iteration variables $\psi^1 = $ \texttt{k} 
and 
$\psi^2 = $ \texttt{i}. The dimension of the access function vector 
\texttt{(k,k)} is 1, revealing potential for data reuse: every input vertex 
\texttt{A[k,k]} is accessed $N-k$ times and used to compute vertices 
\texttt{A[i,k]}, $k+1 <= i < N$. However, the access function vector 
\texttt{(i,k)} has 
dimension 2; therefore, every compute vertex has one 
direct predecessor with out-degree one, which is the previous version of 
element \texttt{A[i,k]}. Using Lemma~\ref{lma:rho1case}, we therefore have 
$\rho_{S1} \le 1$.

We now proceed to statement $S2: $\texttt{ A[i,j] = A[i,j]  - A[i,k] * A[k,j]}.
	Let	{$|D^k| = K$}, {$|D^i| = I$}, 
{$|D^j| = 
	J$}. 
Observe that there is  
an output reuse (Section{~\ref{sec:output_reuse}} and 
Figure~\ref{fig:prog_rep}, red arrow) of \texttt{A[i,k]} 
between 
statements {$S1$} and 
{$S2$}. We therefore have the following
access size in statement \textbf{S2:} 
{$|A_2(\bm{D}_{S2})| = |$ \texttt{A[i,k]}$| = 
	\frac{I K}{\rho_{S1}} \ge I K$}
(Equation{~\ref{eq:inputOutputReuseSize}}). Note that in this case where 
the 
computational intensity is {$\rho_{S1} \le 1$},
the output reuse does not 
change 
the 
access size {$|A_2(\bm{D}_{S2})|$} of statement {$S2$}. This 
follows the 
intuition that it is not beneficial to recompute vertices if the 
recomputation 
cost is higher than loading it from the memory. Denoting 
$H_{S2}$ as the maximal subcomputation for statement $S2$ 
over the subcomputation domain $\bm{D}$, we 
have the following (Lemma~\ref{lma:rectTiling}):
\begin{itemize}[leftmargin=0.75em]
	\item $|H_{S2}| = K I J$
	\item $|A_1(\bm{D})| = |$ \texttt{A[i,j]} $| = I J$
	\item $|A_2(\bm{D})| = |$ \texttt{A[i,k]} $| = I K$
	\item $|A_3(\bm{D})| = |$ \texttt{A[k,j]} $| = K J$
	\item $|\mathit{Dom}(H_{S2})| = |A_1(\bm{D})| + 
|A_2(\bm{D})| + 	|A_3(\bm{D})|=  IJ + IK + KJ$
\end{itemize}

We then solve the optimization problem from 
Section~\ref{sec:iobound_singlestatement}:
\begin{align}
\nonumber
\max \text{\hspace{0.5em}} K I J, & \hspace{2em}\text{s.t.}\\
\nonumber
 IJ + IK + K J& \le X \\
\nonumber
I \ge 1, \hspace{1em} J \ge 1&, \hspace{1em} K \ge 1 
\end{align}

Which gives $|H_{S2}| = \chi(X) = 
\Big(\frac{X}{3}\Big)^{\frac{3}{2}}$ for 
$K = I = J = \Big(\frac{X}{3}\Big)^{\frac{1}{2}}$.
Then, we find $X_0$ that minimizes the expression 
$\rho_{S2}(X) = 
\frac{|H_{max}|}{X-M}$ (Equation~\ref{eq:drhodx}), yielding $X_0 = 3M$. 
Plugging it into $\rho_{S2}(X)$, we conclude that the maximum computational 
intensity of $S2$ is bounded by $\rho_{S2} \le \sqrt{M}/2$.

We bounded the maximum computational intensities $\rho_{S1} $ and  $\rho_{S2}$, 
that 
is, the minimum number of I/O operations to compute vertices belonging 
to statements $S1$ and $S2$. As the last step, we find the total number of 
compute vertices for each statement: $|V_1| = 
\sum_{k=1}^{N}(N-k-1) = N(N-1)/2$, and $|V_2| = 
\sum_{k=1}^{N}\sum_{i=k+1}^{N}(N-k-1) = N(N-1)(N-2)/3$. Using 
Lemmas~\ref{lma:compIntensity} (bounding I/O cost with the computational 
intensity) and~\ref{lma:parallelIO} (I/O cost of the parallel machine), the 
parallel I/O lower bound for LU factorization is therefore

$${ Q_{P,LU} \ge \frac{2N^3 - 6N^2 + 4N}{3P\sqrt{M}} + 
	\frac{N(N-1)}{2P} = \frac{2N^3}{3P\sqrt{M}} + 
	\mathcal{O}\Big(\frac{N^2}{P}\Big)}.$$

Previously, Solomonik et al.~\cite{2.5DLU} 
established the asymptotic I/O bound for sequential execution 
$Q = \mathcal{O}(N^3/ \sqrt{M})$. Recently, Olivry et 
al.~\cite{olivry2020automated} derived a tight leading term 
constant $Q \ge 2N^3/ (3\sqrt{M})$. To the best of our 
knowledge, our result is the first non-asymptotic bound for 
parallel execution.
The generalization from the sequential to the parallel bound is 
straightforward. Note, however, that this is only the case due to our 
pebble-based execution model, and it may thus not apply to other parallel 
machine models.

\subsection{Cholesky Factorization}

We proceed analogously to our derivation of the LU I/O bound --- 
here we just briefly outline the steps. 
The algorithm contains three statements (Listing~\ref{lst:cholesky}).
For statements $S1$ and $S2$, we can again use 
Lemma~\ref{lma:rho1case} (out-degree-one vertices).
For $S1: $\texttt{ L(k,k) 
	= 
	sqrt(L(k,k))}, the loop nest depth is 
$l_1 =1$, 
we have a single iteration variable $\psi^1 = $ \texttt{k}, and a single input 
array $A_1=$ \texttt{L} with the access function $\phi_1(\bm{\psi}) = 
$\texttt{(k,k)}. Since there is only one iteration variable present in 
$\phi_1$, we have $dim(\phi_1)=1 = l_1$. Therefore, for every compute vertex 
$v$ we have one direct predecessor, which is the previous version of element 
\texttt{L(k,k)}. We conclude that $\rho_{S1} \le 1$ and $|V_{S1}| = N$.

\begin{lstfloat}
\begin{lstlisting}[caption={Cholesky 
Factorization},basicstyle=\small\ttfamily, 
label={lst:cholesky}]
    for k = 1:N  
S1:    L(k,k) = sqrt(L(k,k));               
       for i = k+1:N
S2:      L(i,k) = (L(i,k)) / L(k,k);  
         for j = k+1:i
S3:        L(i,j) = L(i,j)  - L(i,k) * L(j,k);
    end; end; end;
\end{lstlisting}
\end{lstfloat}

For statement $S2: $\texttt{ L(i,k) = (L(i,k)) / L(k,k)}, we also have  
output reuse of \texttt{L(k,k)} between statements $S2$ and $S1$.
However, as with the output reuse considered in the LU 
analysis, 
the computational intensity is $\rho_{S1} \le 1$. Therefore, it 
does not change 
the dominator set size of $S2$.
We then use the same reasoning as for the corresponding statement $S1$ in LU 
 factorization, yielding $\rho_{S2} \le 1$. 

For statement $S3$, we derive its bound similarly to $S2$ of LU, with 
$\rho_{S3} = \sqrt{M}/2$ and $|V_{S3}| = \sum_{k=1}^{N}\sum_{i=k+1}^{N}(i-k-1) 
= N(N-1)(N-2)/6$. Note that compared to 
LU, the only significant difference is the iteration domain $|V_3|$. 
Even though Cholesky has one statement more -- the diagonal element 
update \texttt{L(k,k)} -- its impact on the final I/O bound is 
negligible for large $N$.

Again, using Lemmas~\ref{lma:compIntensity} and~\ref{lma:parallelIO} we 
establish the Cholesky factorization's parallel I/O lower bound:

$$	Q_{Chol} \ge Q_1 + Q_2 + Q_3 = \frac{|V_1|}{P \rho_1} + \frac{|V_2|}{P 
	\rho_2} + \frac{|V_3|}{P \rho_3} \approx \frac{N^3}{3 P \sqrt{M}} + 
\frac{N^2}{2P} + \frac{N}{P}$$

\hltext{The derived I/O lower bound for a sequential machine (\mbox{$P = 
1$}) 
improves the previous bound \mbox{$Q_{chol} \ge 
N^3/(6\sqrt{M})$}
derived by Olivry et \mbox{al.~\cite{olivry2020automated}}.} 
Furthermore, to the best 
of our knowledge, this is the first parallel bound for this kernel.

\section{Near-I/O Optimal Parallel Matrix Factorization Algorithms}
\label{sec:conflux}

We now present our parallel LU and Cholesky factorization 
algorithms. We start with the former, more complex algorithm, i.e. LU 
factorization. Pivoting in LU poses several performance challenges. First, 
since pivots are not 
known upfront, additional communication and synchronization is required to 
determine them in each step. Second, the nondeterministic pivot distribution 
between the ranks may introduce load imbalance of computation routines. Third, 
to minimize the communication a 2.5D parallel decomposition must be used, i.e. 
parallelization along the reduction dimension.
We address all these challenges with 
	{\conflux} --- a 
	near \emph{Communication Optimal LU	factorization using 
	{\xparting}}. 

\subsection{LU Dependencies and Parallelization}
\label{sec:par_lu_no_pivot}

\begin{figure}
	\includegraphics[width=1\columnwidth]{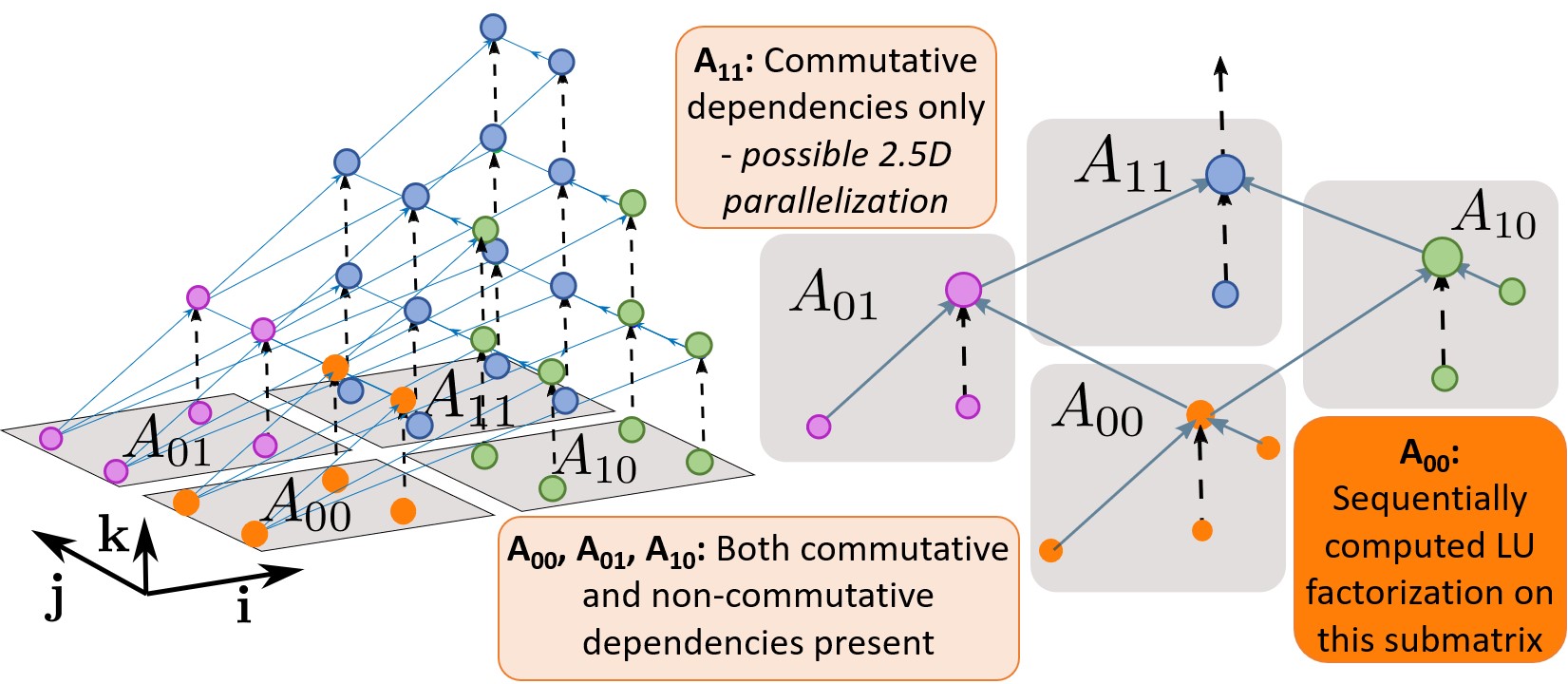}
	\caption{LU Factorization cDAG for \mbox{$n=4$} with the logical 
		decomposition into 
		\mbox{$A_{00}, A_{10}, A_{01}$}, and \mbox{$A_{11}$}. Dashed arrows 
		represent 
		commutative 
		dependencies (reduction of a value). Solid arrows 
		represent non-commutative operations, meaning that 
		any 
		parallel pebbling has to respect the induced order (e.g., no vertex 
		in 
		\mbox{$A_{11}$} can be pebbled before \mbox{$A_{00}$} is pebbled).}
	\label{fig:cdags}
\end{figure}

Due to the dependency structure of LU, the input matrix is often 
divided recursively into four submatrices $A_{00}$, $A_{10}$, 
$A_{01}$, and $A_{11}$~\cite{LUdongarra, 2.5DLU}.
{Arithmetic operations performed in LU create 
	non-commutative dependencies (Figure{~\ref{fig:cdags}}) between 
	vertices 
	in {$A_{00}$} (LU 
	factorization 
	of the top-left corner of the matrix), {$A_{10}$}, and {$A_{01}$} 
	(triangular solve 
	of left and top panels of the matrix).} Only $A_{11}$ 
	(Schur complement 
update) has no such dependencies, and may therefore be efficiently 
parallelized in the reduction dimension.
A high-level summary is presented in Algorithm~\ref{alg:conflux}. 

\begin{algorithm}
	\footnotesize
	\caption{\conflux}
	\label{alg:conflux}
	\begin{algorithmic}
			\Require \hltext{Input matrix \mbox{$A \in \mathbb{R}^{n \times n}$}}
				\Ensure \hltext{In-place factored matrix \mbox{$A$}, permutation matrix \mbox{$P$}}
		\State \hltext{\mbox{$A_1 \gets A$}} \Comment{First step}
		\State \hltext{\mbox{$P \gets I$}} \Comment{\hltext{Permutation matrix is initially identity}}
		\For{$t = 1, \dots, \frac{N}{v}$} 
		\State \textbf{1.} Reduce next block column
		\Comment{Cost: 
			$\frac{(N-t\cdot 
				v)\cdot v\cdot M}{N^2}$}
		\State \textbf{2.} $[$\emph{rows}$, P_{t+1}] \gets$ \emph{TournPivot}$(A_t, P_t)$ 
		\Comment{{\scriptsize Find next $v$ pivots.} Cost: 
			$v^2\left\lceil\log(\frac{N}{\sqrt{M}})\right 
			\rceil$}  
		\State \textbf{3.} Scatter computed $A_{00}$ and $v$ 
		pivot rows 
		\Comment{Cost: $v^2 + v$}
		\State \textbf{4.} Scatter $A_{10}$ 
		\Comment{Cost: $\frac{(N-t\cdot 
				v)v}{P}$}
		\State \textbf{5.} Reduce $v$ pivot rows 
		\Comment{Cost: 
			$\frac{(N-t\cdot 
				v)\cdot v\cdot M}{N^2}$}
		\State \textbf{6.} Scatter $A_{01}$ 
		\Comment{Cost: $\frac{(N-t\cdot 
				v)v}{P}$}
		\State \textbf{7.} \emph{Factorize}$A_{10}(A_t)$ \Comment{1D 
			parallel., 
			block-row}
		\State \textbf{8.} Send data from panel $A_{10}$ 
		\Comment{Cost: 
			$\frac{(N-t\cdot 
				v)N\cdot v}{P\sqrt{M}}$}
		\State \textbf{9.} \emph{Factorize}$A_{01}(A_t)$ \Comment{1D 
			parallel., block-column} 
		\State \textbf{10.} Send data from panel $A_{01}$ 
		\Comment{Cost: 
			$\frac{(N-t\cdot 
				v)N\cdot v}{P\sqrt{M}}$}
		\State \textbf{11.} \emph{Factorize}$A_{11}(A_t)$ 
		\Comment{2.5D parallel.} 
		\State $A_{t+1} \gets 
		A_t[$\emph{rows}$, v:$\emph{end}$]$ \Comment{Recursively process 
		remaining rows and columns}
		\EndFor		
	\end{algorithmic}
\end{algorithm}

\subsection{LU Computation Routines}

The computation is performed in $\frac{N}{v}$ steps, 
where $v$ is a tunable block size.
In each step, only submatrix $A_t$ of input matrix $A$ is updated. Initially, 
$A_t$ 
is set to $A$. \hltext{\mbox{$A_t$} can be further viewed as composed of four 
submatrices \mbox{$A_{00}$, $A_{10}$, $A_{01}$, and $A_{11}$}  
(see \mbox{Figure~\ref{fig:lu_decomp}}). These submatrices are distributed and updated by 
	routines \emph{TournPivot}, \emph{Factorize}\mbox{$A_{10}$}, \emph{Factorize}\mbox{$A_{01}$}, 
	and 
	\mbox{\emph{Factorize}$A_{11}$:}}

\begin{itemize}[leftmargin=0.75em]
	\item {{$\bm{A_{00}}$}. This {$v \times v$} submatrix contains 
		the first {$v$} 
		elements of the current {$v$} pivot rows. It is computed during 
		{\emph{TournPivot}}, and, 
		as 
		it is required to compute {$A_{10}$} and {$A_{01}$}, it is 
		redundantly copied 
		to all processors.}
	\item {{$\bm{A_{10}}$} and {$\bm{A_{01}}$}. 
		Submatrices {$A_{10}$} and {$A_{01}$} of sizes 
		{$(N-t\cdot 
			v) \times v$} and  {$v 
			\times (N-t\cdot v)$} are 
		distributed using a 1D 
		decomposition among all processors. They are updated using a triangular 
		solve. 1D decomposition 
		guarantees 
		that there are no dependencies between processors, so no communication 
		or 
		synchronization is performed during computation, as {$A_{00}$} is 
		already 
		owned 
		by every 
		processor.}
	\item 	{{$\bm{A_{11}}$} This {$(N-t\cdot v) \times (N-t\cdot 
			v)$} 
		submatrix
		is distributed using a 2.5D, block-cyclic distribution 
		(Figure{~\ref{fig:lu_decomp}}). 
		First, the updated submatrices {$A_{10}$} and {$A_{01}$} are 
		broadcast among the 
		processors. Then, {$A_{11}$} (Schur complement) is 
		updated.
		Finally, the first block column and {$v$} chosen pivot rows are 
		reduced, 
		which will 
		form {$A_{10}$} and {$A_{01}$} in the next iteration.}
\end{itemize}

\noindent \macb{Block size $\bm{v}$}. The minimum size of each block is 
the number of processor layers in the reduction dimension ($v 
\ge c = 
\frac{PM}{N^2}$). However, to secure high performance, this 
value should be 
adjusted to hardware parameters of a given machine (e.g., vector length, 
prefetch distance of a CPU, or warp size of a GPU). Throughout the analysis, we 
assume $v = a \cdot \frac{PM}{N^2}$ for some small 
constant $a$.

\begin{figure}
	\begin{center}
		\includegraphics[width=1.0\columnwidth]{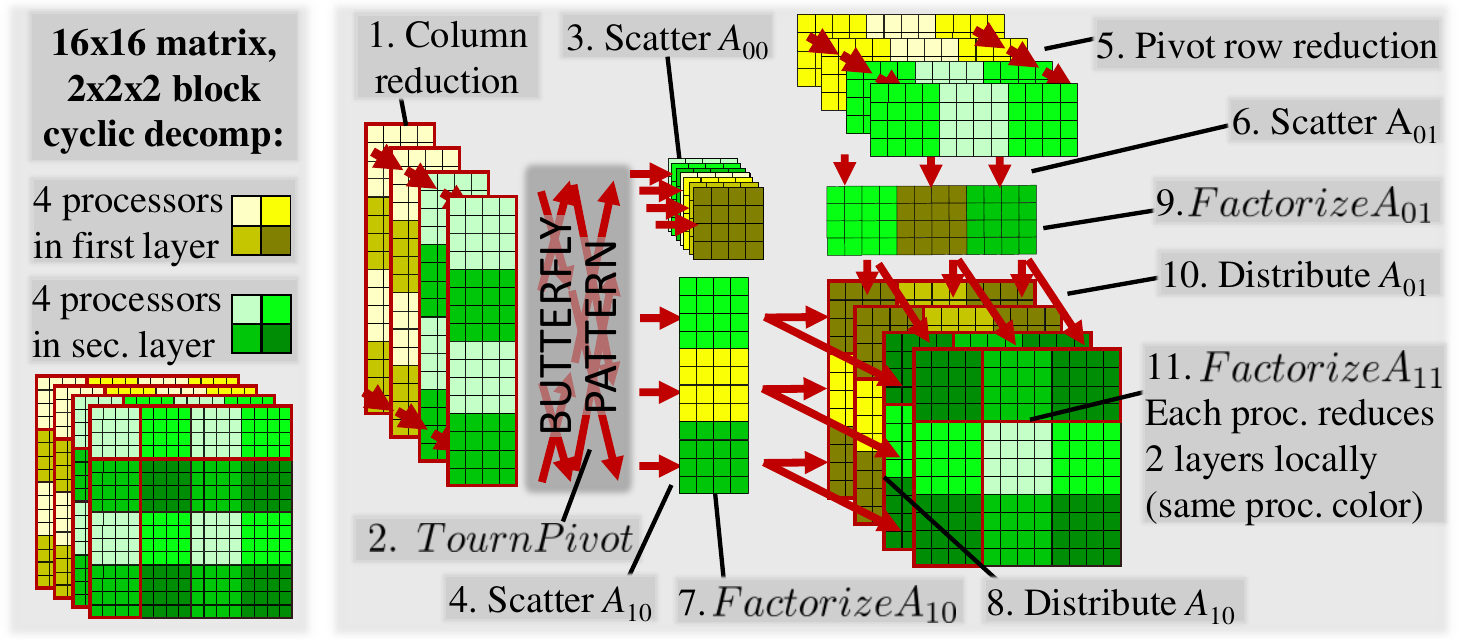}
	\end{center}
	\caption{\conflux 's parallel decomposition for $P=8$ 
		processors decomposed into a $[Px, Py, Pz] = [2, 2, 2]$ grid, 
		together with the indicated steps of   
		Algorithm~\ref{alg:conflux}. \hltext{In each iteration $t$, each 
		processor \mbox{$[pi, pj, pk]$} updates \mbox{$(2 - 
		\lfloor(t + pi)/Px \rfloor) \times (2 - \lfloor(t + 
		pj)/Py \rfloor) $} tiles of $A_{11}$. In the presented example, there 
		are \mbox{$v=4$} planes in dimension $k$ to be reduced, which are 
		distributed among $Pz=2$ processor layers (green and yellow tiles).} } 
	\label{fig:lu_decomp}
\end{figure}

\subsection{Pivoting}
\label{sec:par_lu_pivot}

Our pivoting strategy differs from state-of-the-art 
block~\cite{lapack}, tile~\cite{plasma}, or 
recursive~\cite{LUdongarra} 
pivoting
approaches in two aspects: 
\begin{itemize}[leftmargin=*]
	\item To minimize I/O cost, we do not swap pivot rows. Instead, we keep 
	track of
	which 
	rows were chosen as pivots and we use masks to update the remaining rows.
	\item To reduce latency, we take advantage of our derived block 
	decomposition 
	and use tournament pivoting~\cite{tourn_pivot}.
\end{itemize}

\begin{table*}
	\bgroup
	\small
	\def\arraystretch{1.2}
	\vspace{2em}
	\begin{tabular}{lp{2.4cm}p{1.8cm}p{3.1cm}lp{2.5cm}p{1.9cm}p{1.8cm}}
		\toprule
		& \multicolumn{3}{c}{\textbf{\conflux (LU)}} & &
		\multicolumn{3}{c}{\textbf{\chol (Cholesky)}} \\
		\cline{2-4}		\cline{6-8}
		& \textbf{description} & \textbf{comm. cost} & 
		\textbf{comp. cost} & & \textbf{description} & \textbf{comm. cost} & 
		\textbf{comp. cost} \\
		\hline 		
		\textbf{pivoting} & \emph{TournPivot} & $v^2 \lceil \log_2(\sqrt{P1}) 
		\rceil$ & 
		$v^3/3 \lceil \log_2(\sqrt{P1}) \rceil$ &  & (no pivoting) & --- & --- 
		\\
		$\bm{A_{00}}$ & local \texttt{getrf} & 0 & 0 (done 
		during 
		\emph{TournPivot}) & & \texttt{potrf} & $v^2$ & $v^3/6$ \\
		$\bm{A_{10}}$ \textbf{and} $\bm{A_{01}}$ & reduction, local 
		\texttt{trsm} & 
		$\frac{2(N-tv)vM}{N^2}$ & $\frac{2(N-tv)v^2}{2P}$ & & (similar to LU) & 
		$\frac{2(N-tv)vM}{N^2}$ & $\frac{2(N-tv)v^2}{2P}$ \\
		$\bm{A_{11}}$ & scatter, local \texttt{gemm} & $\frac{2(N-tv)v}{P}$ & 
		$\frac{(N-tv)^2 v}{P}$ & & scatter, local \texttt{gemmt} (triangular 
		\texttt{gemm}) & $\frac{2(N-tv)v}{P}$ & $\frac{(N-tv)^2 v}{2P}$ \\
		\bottomrule
		\vspace{1em}
	\end{tabular}
	\caption{Comparison of the implemented LU and Cholesky factorizations. Even 
		though Cholesky performs half as many computations (the use of 
		\texttt{gemmt} 
		instead of \texttt{gemm} in $A_{11}$), it communicates the same amount 
		of data, 
		since the number of elements needed to perform \texttt{gemm} and 
		\texttt{gemmt} 
		is the same.
	}
	\label{tab:LUvsChol}
	\egroup
\end{table*}

\noindent \macb{Tournament Pivoting.} This procedure finds $v$ 
pivot rows in each step that are then used to
mask which rows will form the new $A_{01}$ 
and then filter the non-processed rows in the next 
step. It is shown to be as stable as partial 
pivoting~\cite{tourn_pivot}, which might be an issue for, e.g., incremental 
pivoting~\cite{incrementalPivoting}.
On the other hand, it reduces the $\mathcal{O}(N)$ latency cost of partial 
pivoting, which requires step-by-step 
column reduction to find consecutive 
pivots,  to $\mathcal{O}\big(\frac{N}{v}\big)$, where $v$ is the tunable block 
size parameter.

\noindent \macb{Row Swapping vs. Row Masking}.
To achieve close to optimal I/O cost, we use a 2.5D decomposition. This, 
however, implies that in the presence of extra memory, the matrix data is 
replicated $\frac{PM}{N^2}$ times. 
This increases the row swapping cost
from  
$\mathcal{O}\big(\frac{N^2}{P})$ to 
$\mathcal{O}\big(\frac{N^3}{P\sqrt{M}}\big)$, which asymptotically matches 
the I/O lower bound of the entire factorization. Performing row swapping would 
then increase the 
constant term of the leading factor of the algorithm from 
$\frac{N^3}{P\sqrt{M}}$ to $\frac{2N^3}{P\sqrt{M}}$.
To keep the I/O cost of our algorithm as low as possible, instead of performing 
row-swapping, we only propagate pivot row indices. When the tournament pivoting 
finds the $v$ pivot rows, they are broadcast to all processors with only 
${v}$ cost per step. 

\noindent \macb{Pivoting in {\conflux}}.
In each step {$t$} of the outer loop (line 1 in 
Algorithm~\ref{alg:conflux}),
{$\frac{N}{\sqrt{M}}$} processors perform a tournament pivoting routine 
using 
a butterfly communication pattern~\cite{butterfly}. Each processor owns 
$\sqrt{M}\frac{N - vt}{N}$ rows, among which it chooses $v$ local candidate 
pivots. Then, final pivots are chosen in  {$\log(\frac{N}{\sqrt{M}})$} 
``playoff-like'' tournament rounds, after which all  
{$\frac{N}{\sqrt{M}}$} 
processors own both $v$ pivot row indices and the already factored, new 
$A_{00}$.
This result is distributed to all remaining processors (line 2).
Pivot row indices are then used to determine which processors participate in 
the reduction of the current  {$A_{01}$} (line 4).
Then, the new $A_t$ is formed by masking currently chosen rows $A_t \gets 
A_t[rows, v:end]$ (Line 12).

\subsection{I/O cost of 
		\conflux}

{We now prove the I/O cost of 
	\conflux{}, which is only a factor of $\frac{1}{3}$ higher than the 
	obtained lower 
	bound for large $N$.}

\begin{lma}
	{
		The total I/O cost of \conflux, presented in 
		Algorithm{~\ref{alg:conflux}}, is 
		{$Q_{\mathit{COnfLUX}} = \frac{N^3}{P\sqrt{M}} + 
			\mathcal{O}\left(M \right)$}.}
\end{lma}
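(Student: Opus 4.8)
The plan is to account for the I/O cost of each of the eleven steps of Algorithm~\ref{alg:conflux} separately, sum the per-step costs, and then sum over all $N/v$ outer iterations $t$. The dominant contribution will come from steps 8 and 10 (sending the factored panels $A_{10}$ and $A_{01}$ so that every processor holding a tile of $A_{11}$ receives the data needed for its local \texttt{gemm}); all other steps should contribute only lower-order terms. Concretely, I would first recall the block size assumption $v = a \cdot \frac{PM}{N^2}$ (so $v = \Theta(PM/N^2)$), and the 2.5D grid dimensions implied by the decomposition, i.e. $P_x = P_y = \Theta(N/\sqrt{M})$ and $P_z = \Theta(PM/N^2)$, chosen so that each processor stores $\Theta(M)$ elements. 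These relations are what convert the per-step communication volumes (written in the algorithm's comments in terms of $N$, $v$, $M$, $P$) into the final closed form.

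The main computation is the geometric-style sum over $t=1,\dots,N/v$ of the leading per-step cost. Each of steps 8 and 10 moves $\Theta\!\big(\frac{(N-tv)\,N\,v}{P\sqrt{M}}\big)$ elements per processor; substituting $v=\Theta(PM/N^2)$ and summing $\sum_{t=1}^{N/v}(N-tv) = \Theta(N^2/(2v))$ gives a total of $\Theta\!\big(\frac{N}{\sqrt M}\cdot\frac{N^2}{P\sqrt M}\cdot\frac{1}{2}\cdot\text{(const)}\big)$, and the constants from the two panels and the 2.5D replication must be tracked carefully so that they collapse to exactly $\frac{N^3}{P\sqrt M}$ rather than a constant multiple of it — this is the crux of the ``only a factor $1/3$ above the lower bound'' claim. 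The remaining steps are then shown to be dominated: the tournament pivoting (step 2) costs $O(v^2\log(N/\sqrt M))$ per step, i.e. $O\!\big(\frac{N}{v}v^2\log(N/\sqrt M)\big)=O\!\big(Nv\log(N/\sqrt M)\big)$ total, which under $v=\Theta(PM/N^2)$ is $O(M)$-order once one also uses $M\le N^2/P^{2/3}$; the scatters of $A_{00}$ and the pivot rows (step 3) cost $O(v^2)$ per step, hence $O(Nv)=O(PM/N)=O(M)$ total; the reductions of the block column and pivot rows (steps 1 and 5) cost $O\!\big(\frac{(N-tv)vM}{N^2}\big)$ per step, summing to $O\!\big(\frac{M}{N^2}\cdot N^2\big)=O(M)$; and the scatters of $A_{10}$, $A_{01}$ (steps 4, 6) cost $O\!\big(\frac{(N-tv)v}{P}\big)$ per step, summing to $O(N^2/P)=O(M)$. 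Collecting everything yields $Q_{\mathit{COnfLUX}} = \frac{N^3}{P\sqrt M} + O(M)$.

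The hardest part will be the constant-factor bookkeeping in the leading term: one must verify that the two panel sends together with the 2.5D broadcast structure of step~11 (the \texttt{gemm} on $A_{11}$) produce coefficient exactly $1$ on $\frac{N^3}{P\sqrt M}$. In particular, the row-masking (rather than row-swapping) pivoting strategy is essential here — as discussed in Section~\ref{sec:par_lu_pivot}, swapping would add another $\frac{N^3}{P\sqrt M}$ term — so the proof must make explicit that no pivot-row data movement beyond the $O(v)$-per-step index broadcast occurs. A secondary subtlety is confirming that in every iteration $t$ each processor indeed touches only $O(M)$ distinct elements of the shrinking submatrix $A_t$, so that the per-iteration local-memory bound is never violated and the communication volumes claimed in the algorithm's comments are the true costs; this follows from the block-cyclic layout of $A_{11}$ and the choice $v\ge c = PM/N^2$, but should be stated. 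Finally, I would remark that the additive $O(M)$ term also absorbs the $O(N^2/P)$ one-time cost of applying the final permutation $P$ to the output, consistent with the lower-bound discussion in Section~\ref{sec:lu_bound}.
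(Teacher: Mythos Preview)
Your proposal is correct and follows essentially the same approach as the paper: compute the per-iteration cost by summing the eleven steps, identify steps~8 and~10 as the leading contributions $\frac{2(N-tv)Nv}{P\sqrt{M}}$, and sum over $t=1,\dots,N/v$ to obtain $\frac{N^3}{P\sqrt{M}}$ with all remaining steps absorbed into the $\mathcal{O}(M)$ term. One small simplification: the paper treats step~11 (\texttt{Factorize}$A_{11}$) as purely local with zero I/O cost, so the constant-factor bookkeeping you flag as the ``hardest part'' reduces entirely to the arithmetic $\frac{2Nv}{P\sqrt{M}}\sum_{t=1}^{N/v}(N-tv)\approx\frac{2Nv}{P\sqrt{M}}\cdot\frac{N^2}{2v}=\frac{N^3}{P\sqrt{M}}$, with no additional broadcast contribution from the \texttt{gemm} itself.
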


\begin{proof}
	{
		We assume that the input matrix  {$A$} is already 
		distributed in the block cyclic layout imposed by the 
		algorithm. Otherwise, data reshuffling imposes only 
		{$\Omega\big(\frac{N^2}{P}\big)$} cost, which does not contribute 
		to 
		the 
		leading order term.
		We first derive the cost of a single iteration~{$t$} of the 
		main loop of the algorithm, proving that its 
		cost is {$Q_{step}(t) = \frac{2Nv(N-tv)}{P\sqrt{M}} + 
			\mathcal{O}\left(\frac{Mv}{N} \right)$}.
		The total cost after  {$\frac{N}{v}$} iterations is:}

	{\footnotesize
	$$Q_{\mathit{COnfLUX}} = \sum_{t=1}^{\frac{N}{v}}Q_{\mathit{step}}(t) = 
	\frac{N^3}{P\sqrt{M}} 
	+ 
	\mathcal{O}\left(M \right). $$
	}

		We define 
		{$P1 = \frac{N^2}{M}$ and $c = \frac{PM}{N^2}$}.
		{$P$} processors are decomposed into the 3D grid 
		{$[\sqrt{P1}, \sqrt{P1},c]$}. We refer to all processors 
		that share the same second and third coordinate as 
		{$[:, j, k]$}. We now examine each of 11 
		steps in Algorithm~\ref{alg:conflux}.
		
		{\noindent}
		\textbf{Step 2.}  Processors with coordinates {$[:, t \mod \sqrt{P1}, t 
		\mod c]$}  perform the tournament pivoting. Every 
		processor owns the first {$v$} elements of {$N - (t-1)v$} 
		rows, among which they choose the next  {$v$} pivots. 
		First, they locally perform the LUP decomposition to 
		choose the local {$v$} candidate rows. Then, in 
		{$\lceil \log_2(\sqrt{P1})\rceil$} rounds they exchange 
		{$v \times v$} blocks to decide on the final pivots. After the 
		exchange, these processors also hold the factorized 
		submatrix  {$A_{00}$}. 
		\emph{I/O cost per processor:}  {$v^2 \lceil 
		\log_2(\sqrt{P1})\rceil$}. 
		
		{\noindent}
		\textbf{Steps 3, 4, 6.} Factorized 
		{$A_{00}$}
		and  {$v$} pivot row indices 
		are broadcast. First  {$v$} 
		columns and {$v$} 
		pivot rows
		are scattered to all  {$P$}.
		\emph{I/O cost 
			per processor:}  {$v^2 + v + \frac{2(N-tv)v}{P}$}. 
		
		{\noindent}
		\textbf{Steps 1 and 5.} {$v$} columns and {$v$} 
		pivot rows are reduced. With high probability, pivots are evenly 
		distributed among 
		all 
		processors.
		 There are  {$c$} layers to reduce, 
		each 
		of size 
		{$(N-tv)v$}.
		\emph{I/O cost per processor:} 
		{$\frac{2(N-tv)vc}{P} = \frac{2(N-tv)vM}{N^2}$}. 
		
		{\noindent}
		\textbf{Steps 7, 9, 11.} The updates {\emph{Factorize}$A_{10}$}, 
		{\emph{Factorize}$A_{01}$}, and \linebreak {\emph{Factorize}$A_{11}$} 
		are 
		local and 
		incur no additional I/O cost.
		
		{\noindent}
		\textbf{Steps 8 and 10.} Factorized  {$A_{10}$} and {$A_{01}$} 
	are scattered among all processors.
		Each processor requires  {$\frac{v(N-tv)}{c\sqrt{P1}}$} elements 
		from  
		{$A_{10}$} and {$A_{01}$}.
		\emph{I/O cost per processor:} 
		{$\frac{2(N-tv)Nv}{P\sqrt{M}}$}. 
		
		{\noindent}
		Summing steps 1 -- 11:
		{$Q_{step}(t) = \frac{2Nv(N-tv)}{P\sqrt{M}} + 
			\mathcal{O}\left(\frac{Mv}{N} \right)$}.
\end{proof}

{Note that this cost is a factor 1/3 over the  
lower bound established in \mbox{Section~\ref{sec:lu_bound}}.
This is due to the fact that any processor can only maximally utilize its local memory
in the first iteration of the outer loop. In this first iteration, a processor updates
a total of \mbox{$\sqrt{M} \times \sqrt{M}$} elements of \mbox{$A$}. In 
subsequent iterations, however, 
the local domain shrinks as less rows and columns are updated, which leads to an underutilization
of the resources. Since the shape of the iteration space is determined by the algorithm, this
behavior is unavoidable for \mbox{$P \ge N^2/M$}. Note that the bound is 
attainable by a sequential machine, however.}

\subsection{Cholesky Factorization}

From a data flow perspective, Cholesky factorization can be viewed as a 
special case of LU factorization 
without pivoting for symmetric, positive definite matrices. Therefore, our 
Cholesky algorithm --- \chol --- heavily bases on \conflux, using the same 2.5D 
parallel decomposition, block-cyclic data distribution, and analogous 
computation routines. 

For both algorithms, the dominant cost, both in terms of computation and 
communication, 
is the $A_{11}$ update. Due to the Cholesky factorization's iteration domain, 
which exploits 
the symmetry of the input matrix, the compute cost is twice as low, as only 
one half of the matrix needs to be updated. However, the input size required to 
perform this update is the same --- therefore, the communication cost imposed 
by 
$A_{11}$ is similar.
 We list the key differences between the two 
factorization algorithms in Table~\ref{tab:LUvsChol}.

\section{Implementation}

Our algorithms are implemented in C++, using MPI for inter-node 
communication. For static communication patterns (e.g., column 
reductions) we use dedicated, asynchronous MPI collectives.
For 
runtime-dependent 
communication (e.g., pivot index distribution) we use MPI 
one-sided~\cite{mpi3-rma-overview}. For intra-node tasks, we use OpenMP and 
local BLAS calls (provided by Intel MKL~\cite{mkl}) for 
computations. Our code is available as an open-source git 
repository\footnote{https://github.com/eth-cscs/conflux}.

\noindent
\macb{Parallel decomposition. }
Our experiments show that the parallelization in the reduction dimension, while 
reducing communication volume, does incur performance overheads. This is mainly 
due to the increased communication latency, as well as smaller buffer sizes 
used for local BLAS calls. Since formal modeling of the tradeoff between 
communication volume and performance is outside of the scope of the paper, we 
keep the depth of 
parallelization in the third dimension as a tunable parameter, while providing 
heuristics-based default values.

\noindent \macb{Data distribution. }
\conflux and \chol provide ScaLAPACK wrappers by using the highly-optimized 
COSTA 
algorithm~\cite{costa2021} to transform the matrices between different layouts. 
In addition, 
they support the COSTA API for matrix descriptors, which is more general 
than ScaLAPACK's layout, as it supports matrices distributed in arbitrary 
grid-like 
layouts, processor assignments, and local blocks orderings.

\begin{table*}
	\setlength{\tabcolsep}{2pt}
	\renewcommand{\arraystretch}{0.7}
	\centering
	\footnotesize
	\vspace{2em}
	\begin{tabular}{llllll}
		\toprule
		& \textbf{MKL~\cite{mkl}}
		& \textbf{SLATE~\cite{slate}} 			
		& \textbf{CANDMC~\cite{candmc}}			
		& \textbf{CAPITAL~\cite{choleskyQRnew}} 			
		& \textbf{\conflux} / \textbf{\chol} (this work)\\
		\midrule
		\textbf{Decomposition}
		& 
		2D, panel decomp.
		&
		2D, block decomp.
		&
		Nested 2.5D, block decomp.
		& 
		2.5D, block decomp.
		& 
		1D / 2.5D, block decomp.
		\\
		\begin{tabular}{l}
			\hspace{-0.6em}
			\textbf{Block size}
		\end{tabular}
		&
		\begin{tabular}{ll}
			\begin{tabular}{l}
				\hspace{-0.5em}
				\includegraphics[width=0.047 \textwidth]
				{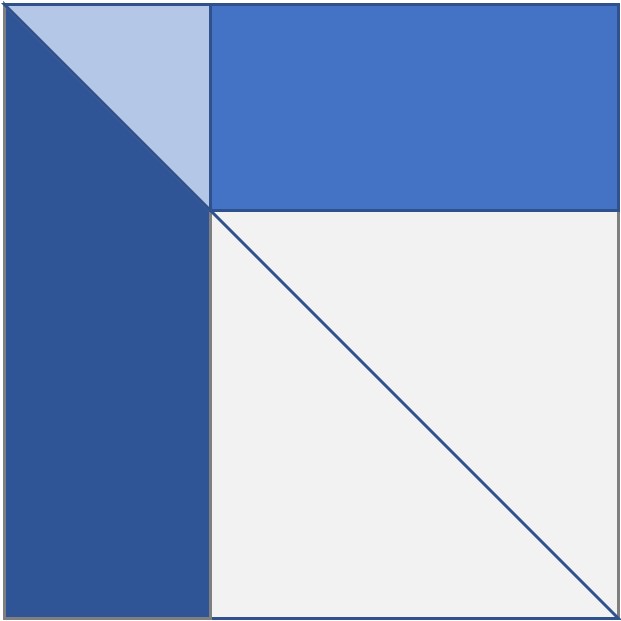}
			\end{tabular}
			&
			\begin{tabular}{l}
				user-specified
			\end{tabular}
		\end{tabular}
		&
		\begin{tabular}{ll}
			\begin{tabular}{l}
				\hspace{-0.5em}
				\includegraphics[width=0.047 \textwidth]
				{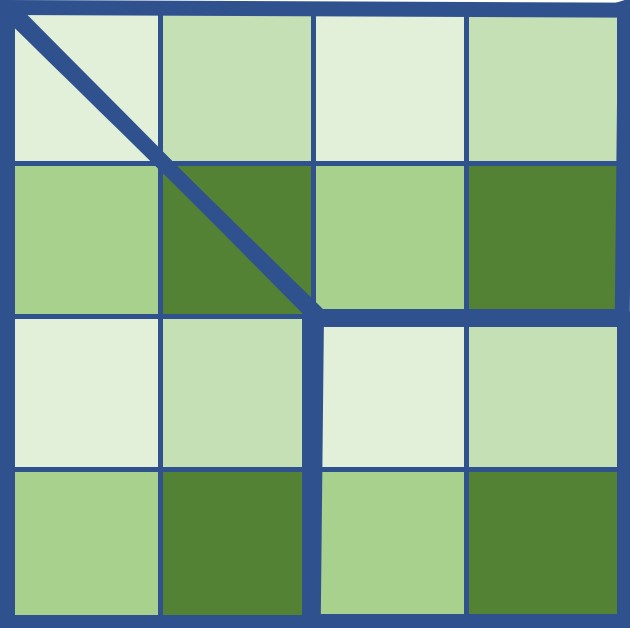}
			\end{tabular}
			&
			\begin{tabular}{l}
				user-specified,\\
				(default 16)
			\end{tabular}
		\end{tabular}
		&
		\begin{tabular}{ll}
			\begin{tabular}{l}
				\hspace{-0.5em}
				\includegraphics[width=0.047 \textwidth]
				{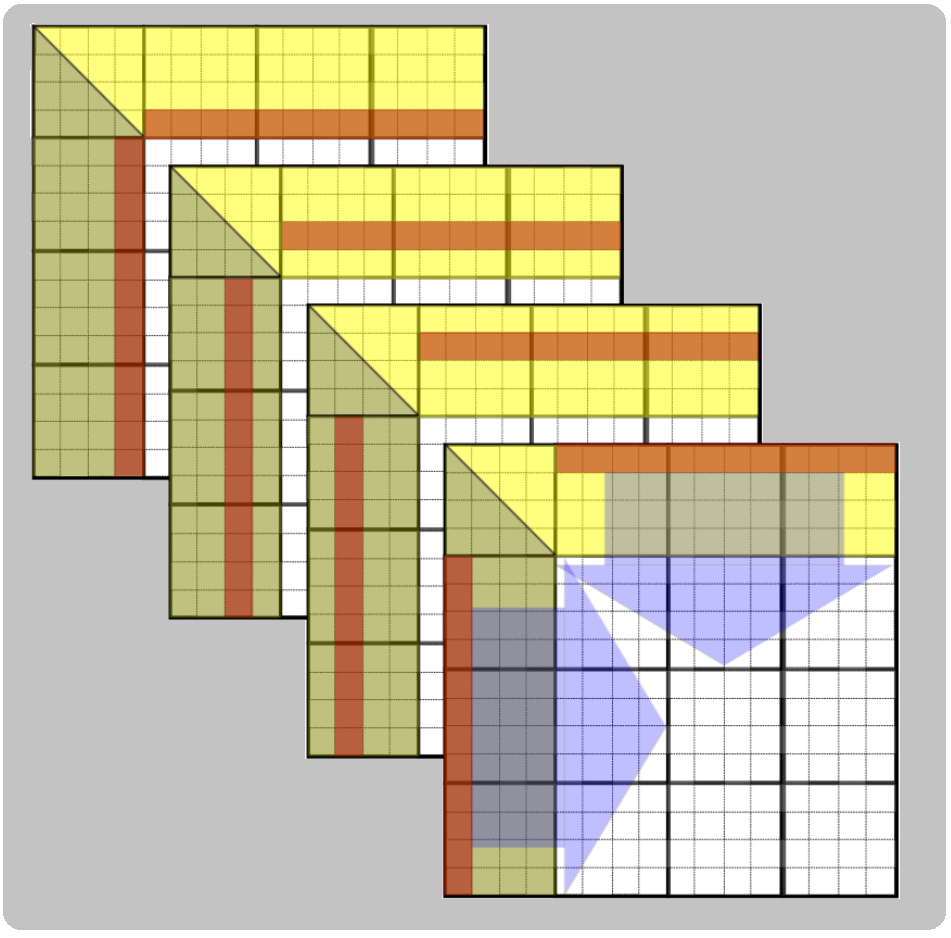}
			\end{tabular}
			&
			\begin{tabular}{l}
				$\frac{N^3}{P\cdot M}$ , 
				$\frac{N^2}{P\sqrt{M}}$
			\end{tabular}
		\end{tabular}
		&
		\begin{tabular}{ll}
			\begin{tabular}{l}
				\hspace{-0.5em}
				\includegraphics[width=0.047 \textwidth]
				{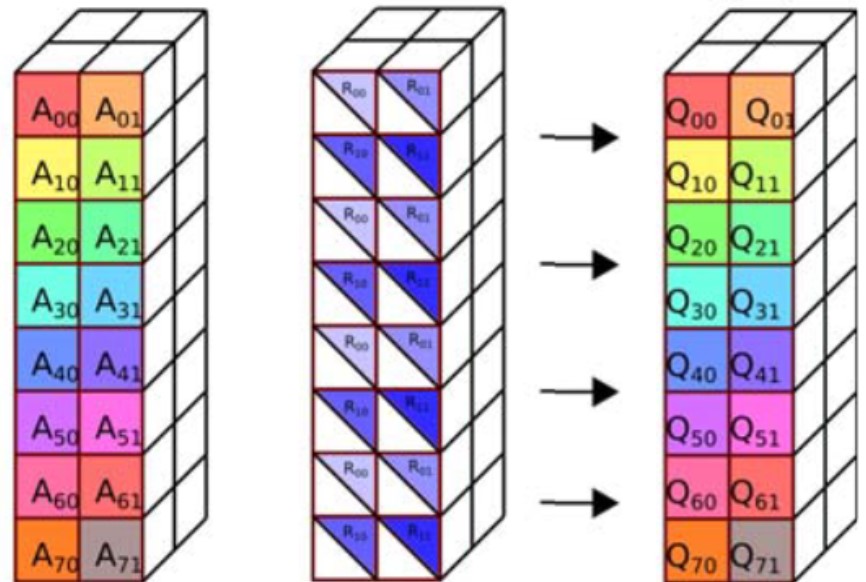}
			\end{tabular}
			&
			\begin{tabular}{l}
				user-specified
			\end{tabular}
		\end{tabular}
		&
		\begin{tabular}{ll}
			\begin{tabular}{l} 
				\hspace{-0.5em}
				\includegraphics[width=0.052 \textwidth]
				{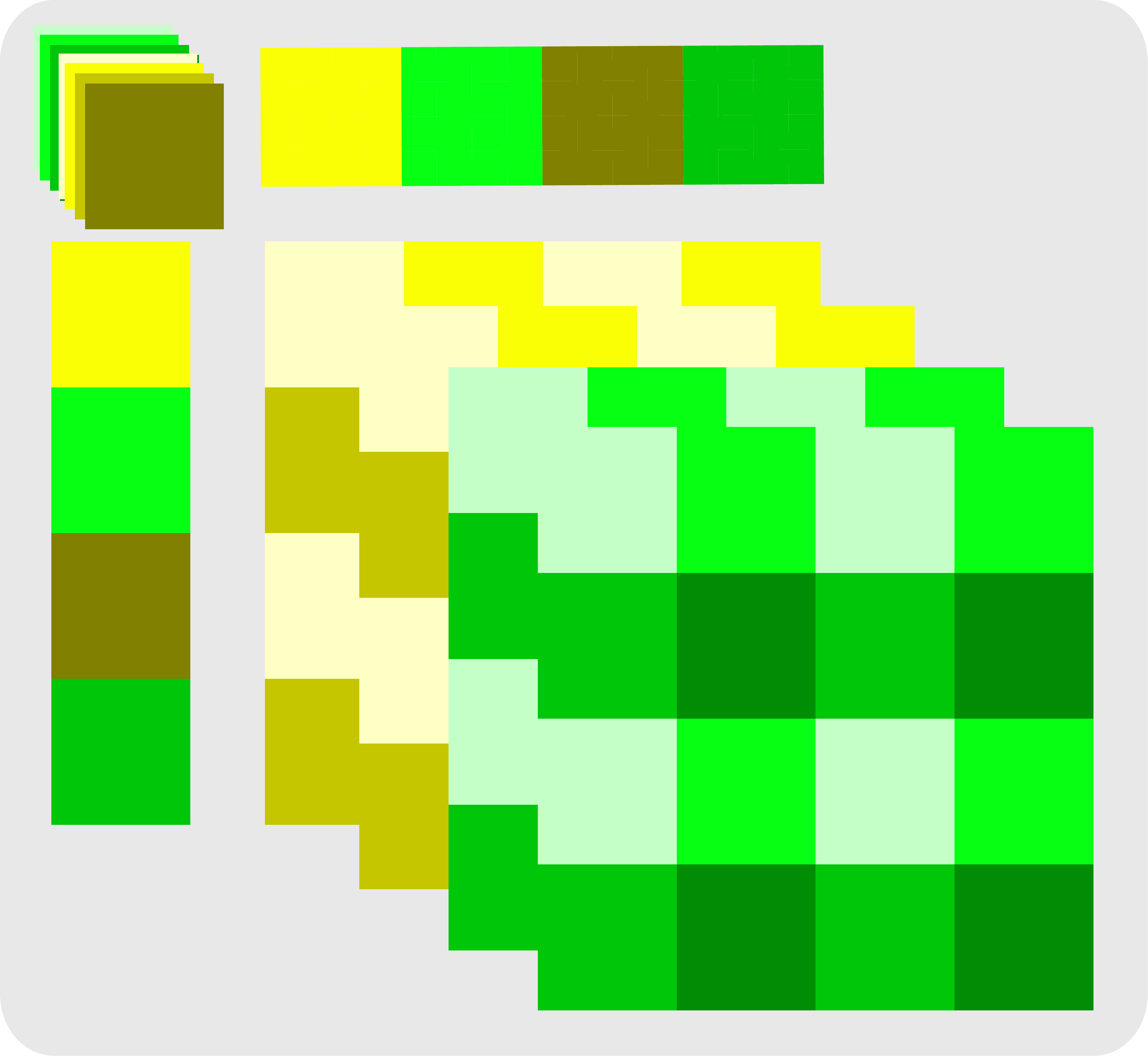}
			\end{tabular}
			&
			\begin{tabular}{l}
				optimized, $\ge \frac{P\cdot M}{N^2}$
			\end{tabular}
		\end{tabular}
		\\
		\textbf{Program parameters} & 
		required from user~\faThumbsDown & 
		required from user~\faThumbsDown & 
		provided defaults~\faThumbsOUp & 
		optimized defaults~\faThumbsOUp~\faThumbsOUp & 
		optimized defaults~\faThumbsOUp~\faThumbsOUp	
		\vspace{1em}\\
		\textbf{Parallel I/O cost} 
		& 
		$\frac{N^2}{\sqrt{P}} + 
		\mathcal{O}\Big(\frac{N^2}{P}\Big)$
		&
		$\frac{N^2}{\sqrt{P}} + 
		\mathcal{O}\Big(\frac{N^2}{P}\Big)$
		&
		$\frac{5N^3}{P\sqrt{M}} + 
		\mathcal{O}\Big(\frac{N^2}{P\sqrt{M}}\Big)$~\cite{2.5DLU}
		&
		$\frac{45N^3}{8 P\sqrt{M}} + 
		\mathcal{O}\Big(\frac{N^2}{P\sqrt{M}}\Big)$~\cite{choleskyQRnew}
		&
		$\frac{N^3}{P\sqrt{M}} + 
		\mathcal{O}\Big(\frac{N^2}{P\sqrt{M}}\Big)$
		\\
		\bottomrule
		\vspace{1em}
	\end{tabular}
	\caption{
		{Parallelization strategies and I/O cost models of 
			the 
			considered matrix 
			factorization implementations. \hltext{MKL and 
			SLATE require a user to specify the processor 
			decomposition and the block size. CANDMC provides 
			default values, but our experiments show that the 
			performance was significantly improved when we 
			tuned the parameters. \mbox{\conflux and \chol} 
			outperform all state-of-the-art libraries with 
			out-of-the-box parameters. We validated our 
			parallel I/O cost models: for MKL, 
			SLATE,\mbox{\conflux, and \chol}, the error was 
			within +/-3\%. For CANDMC and CAPITAL, we used 
			the models derived by the 
			authors\mbox{~\cite{2.5DLU, 
						choleskyQRnew}}, which 
						overappoximated the measured values 
						by approx. 30-40\%.} 
		}
	}
	
	\label{tab:comparison}
\end{table*}

\begin{figure*}[t]
	\centering	
	\subfloat[Communication volume per node for varying node 
	counts $P$ and a 
	fixed  $N=$16,384. Only the leading factors of the models 
	are shown. The models are scaled by the element size (8 
	bytes).]
	{\includegraphics[width=0.30 
	\textwidth]{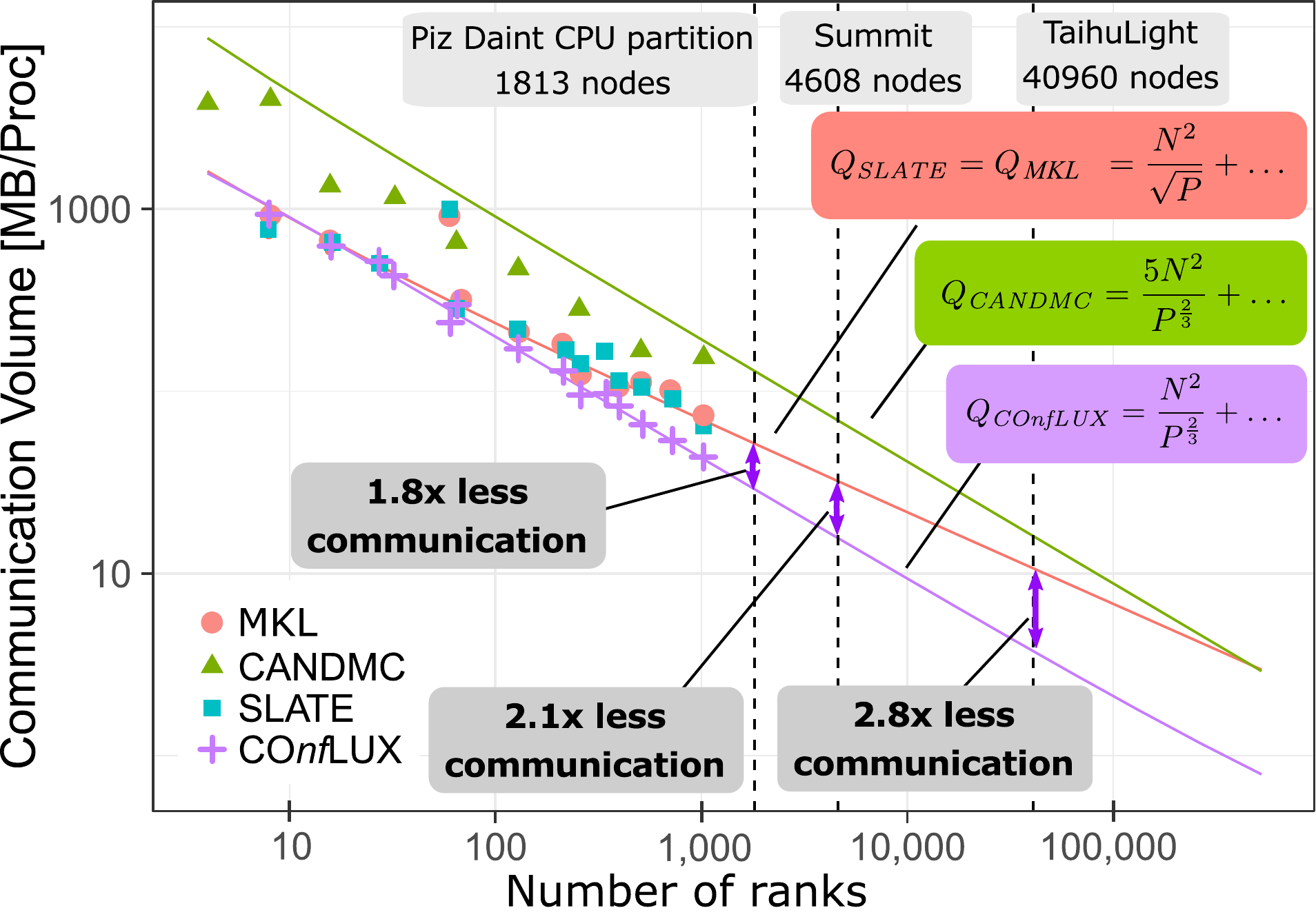}
		\label{fig:square_strong}}
	\hspace{1em}	
	\subfloat[Communication volume per node for weak scaling 
	(constant work per 
	node), 
	{\mbox{$N=3200 \cdot \sqrt[3]{P}$}}. 2.5D algorithms 
	(CANDMC and \conflux) retain constant 
	communication volume per processor.]
	{\includegraphics[width=0.31
		\textwidth]{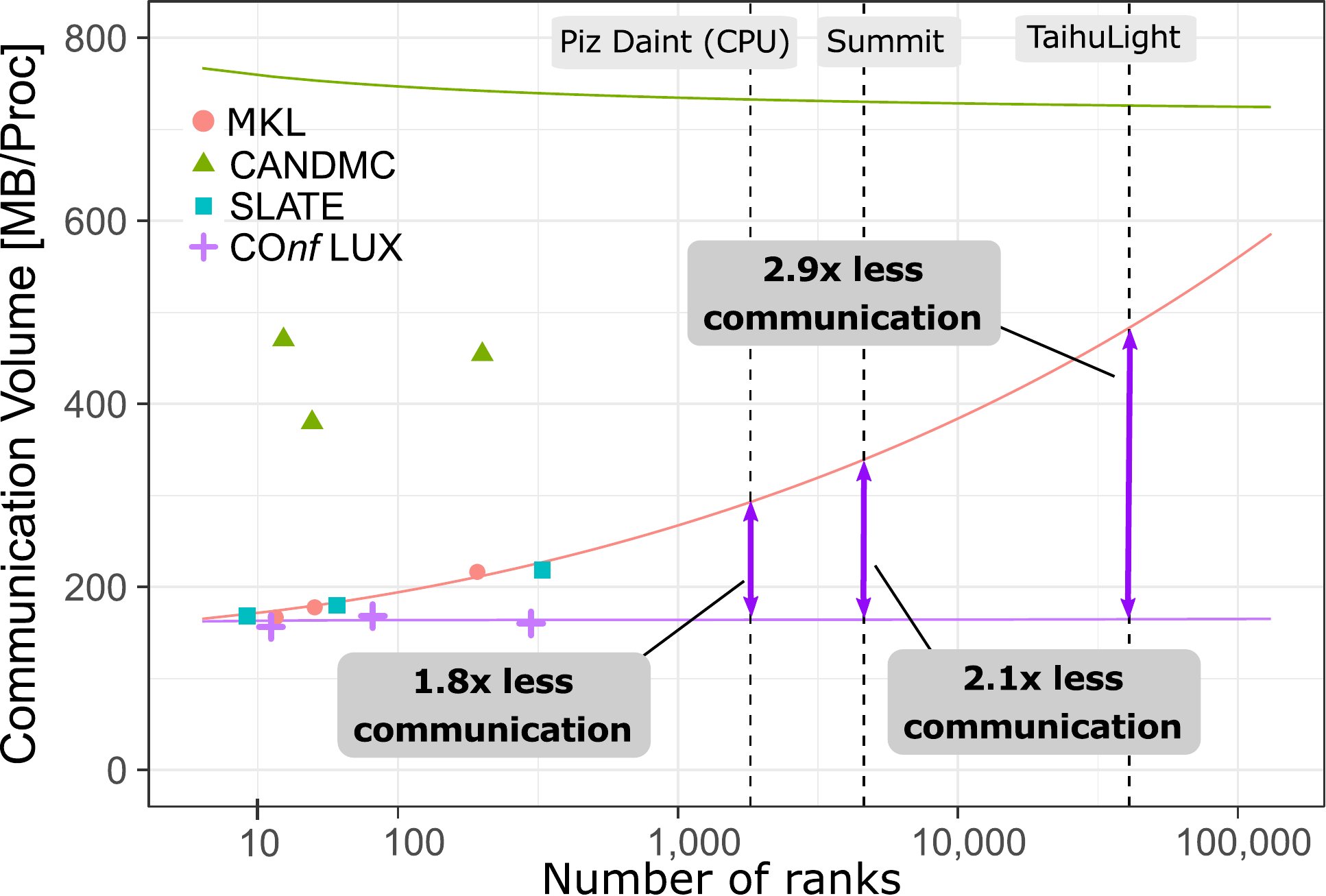}	
		\label{fig:square_weakp1}}
	\hfill
	\subfloat[Communication reduction vs. second-best 
	algorithm (M=MKL, 
	S=SLATE), for varying $P$, $N$, for both measured and 
	predicted 
	scenarios.]
	{\includegraphics[width=0.34
		\textwidth]{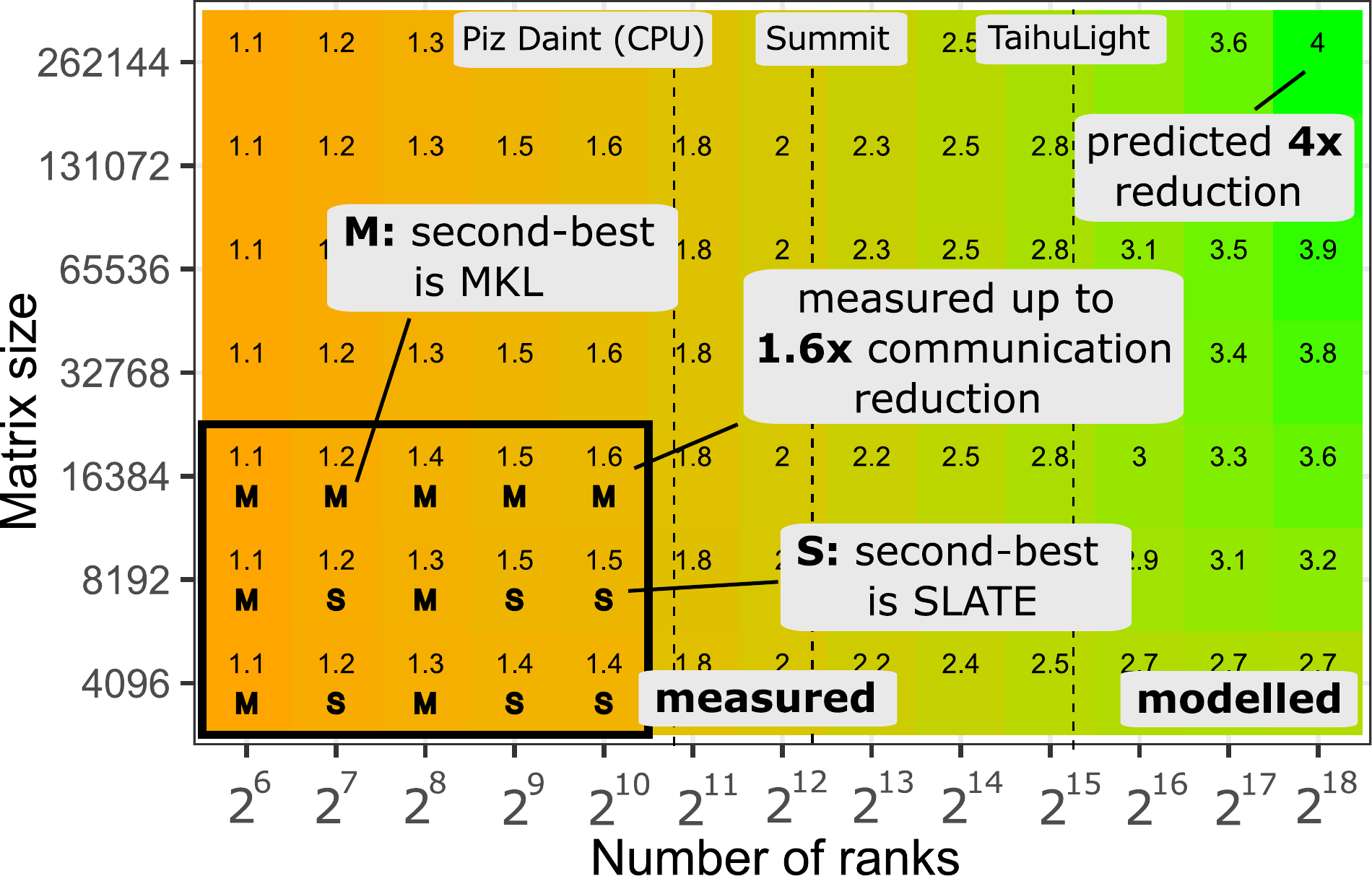}	
		\label{fig:heatmap}}
	\caption{
		{ Communication volume measurements across 
			different scenarios 
			for MKL, SLATE, CANDMC, and \conflux. In all 
			considered 
			scenarios, enough memory $M\ge {N^2}/P^{2/3}$ was 
			present 
			to allow for the maximum number of replications 
			$c = P^{1/3}$.}
	}
	\label{fig:commVolPlots}	
\end{figure*}

\section{Experimental Evaluation}
\label{sec:evaluation}

We compare \conflux and \chol with 
state-of-the-art implementations of corresponding
distributed matrix
factorizations.

\noindent
\macb{Measured values.} We measure both the I/O cost and total 
time-to-solution. For I/O, the 
aggregate communication volume in distributed runs is counted using the Score-P 
profiler~\cite{score-p}. We provide both measured 
values and theoretical cost models. Local \linebreak \texttt{std::chrono} calls 
are used for
time measurements and the maximum execution time among all ranks is reported.

\noindent
\macb{Infrastructure and Measurement.} 
We run our experiments on the XC40 partition of the CSCS Piz 
Daint supercomputer which 
comprises 1,813 CPU nodes equipped with Intel Xeon E5-2695 v4 
processors (2x18 
cores, 64 GiB DDR3 RAM), interconnected by the Cray Aries network with a 
Dragonfly network topology. Since the CPUs are dual-socket, two MPI ranks are
allocated per compute node.

\noindent
\macb{Comparison Targets.} We use 1) Intel 
MKL (v19.1.1.217). While 
the library is proprietary, our measurements 
reaffirm that, like ScaLAPACK, the implementation uses the suboptimal 2D 
processor decomposition; 2) SLATE~\cite{slate} --- a 
state-of-the-art distributed linear algebra framework 
targeted at exascale supercomputers; 3) the latest 
version of the CANDMC and CAPITAL 
libraries~\cite{candmccode, choleskycode}, which use an asymptotically-optimal 
2.5D
decomposition. The implementations and their characteristics are listed in 
Table 
\ref{tab:comparison}.

\noindent
\macb{Problem Sizes.} 
We evaluate the algorithms starting from 2 compute nodes (4 
MPI ranks) up to 512 nodes (1,024 ranks).
For each node count, matrix sizes range 
from $N=$ 2,048 to $N=2^{19}=$ 524,288, provided they fit into the 
allocated memory (e.g., LU or Cholesky factorization on a double-precision input matrix 
of dimension 262,144 
$\times$ 262,144 cannot be run on less than 32 nodes). Runs in which none of 
the libraries achieved more 
than 3\% of the hardware peak are discarded since by adding more nodes the 
performance 
starts to 
deteriorate.

Our benchmarks reflect real-world problems in scientific 
computing. 
{The High-Performance Linpack 
	benchmark uses a maximal size of 
	\mbox{$N=$ 16,473,600~\cite{top500}}. In quantum 
	physics, matrix size scales with $2^{\text{qubits}}$.
	In physical 
	chemistry or density 
	functional theory (DFT), simulations require factorizing 
	matrices of atom 
	interactions, yielding sizes ranging from \mbox{$N = $ 
	1,024} 
	up to \mbox{$N = $ 131,072~\cite{gb19, joost}}. In 
	machine 
	learning, matrix factorizations are used for inverting 
	Kronecker factors~\mbox{\cite{osawa2019large}} whose sizes are 
	usually around \mbox{$N=$ 4,096}. This motivates us to 
	focus not only on exascale problems, but also improve 
	performance for relatively small matrices (\mbox{$N \le 
	$100,000}).}

\noindent \macb{Communication Models.}
Together with empirical measurements, we put significant effort into 
understanding the underlying communication patterns of the compared LU 
factorization implementations. {Both MKL and SLATE base on the standard 
	partial pivoting algorithm using the 2D decomposition{~\cite{scalapack}}. 
	For 
	CANDMC and CAPITAL, the 
	models provided by the authors~\cite{2.5DLU,choleskyQRnew} are used. For 
	\conflux and \chol, we use the results from Section{~\ref{sec:conflux}}. 
	These 
	models 
	are summarized in Table{~\ref{tab:comparison}}. }

	\begin{figure*}[t]
		\vspace{2em}
		\centering
		\subfloat[Strong scaling, $N= 2^{17}=$ 131,072]
		{\includegraphics[width=\fw \textwidth]
			{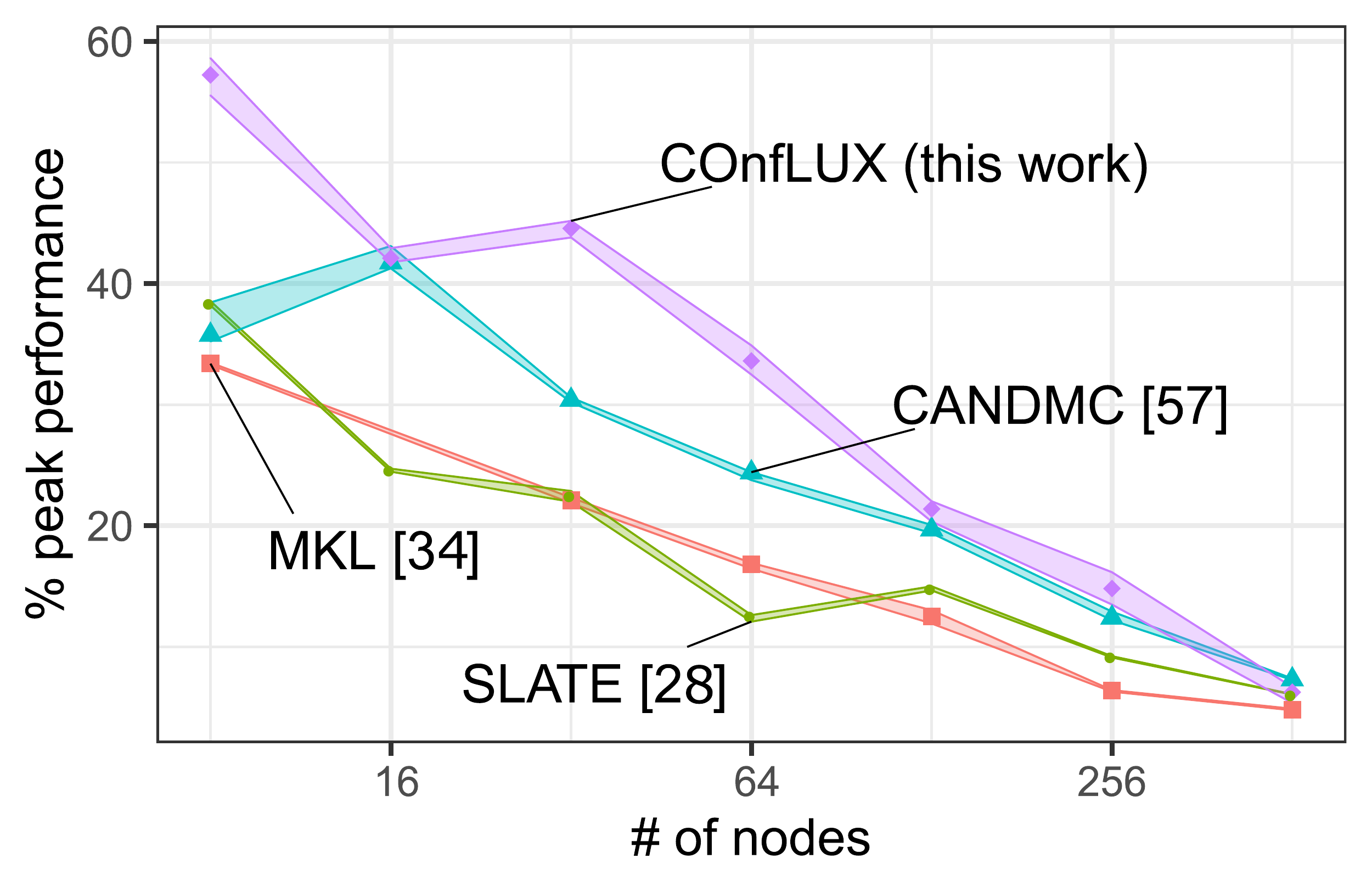}
			\label{fig:square_strong_flops}}
		\hfill
		\subfloat[Strong scaling, $N= 2^{14} = $ 
		16,384]{\includegraphics[width=\fw 
			\textwidth]			
			{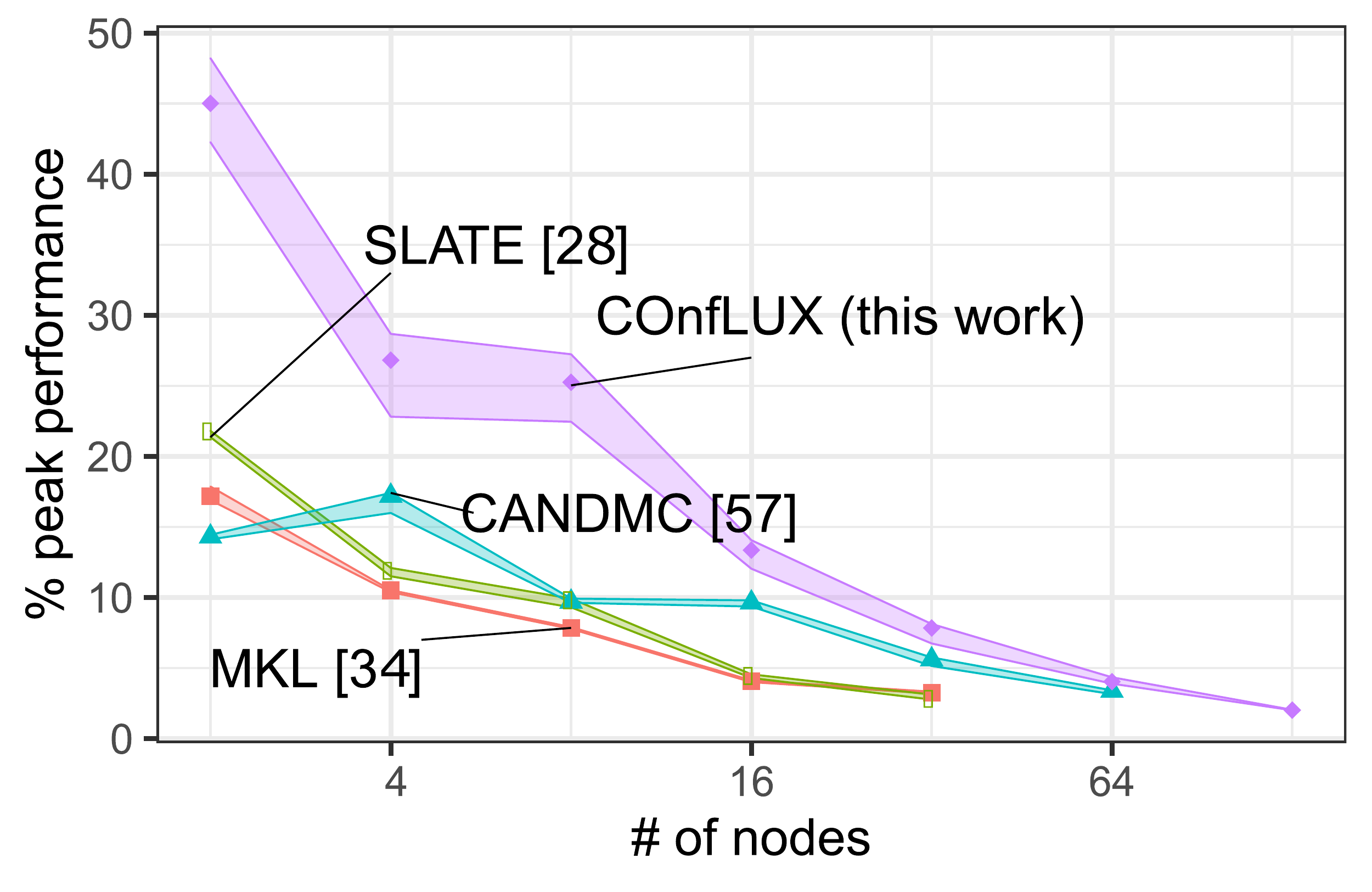}
			\label{fig:lu_strong_n16k_flops}}
		%
		\hfill
		\subfloat[Weak scaling, $N = $ 8,192$\cdot 
		\sqrt{P}$]{\includegraphics[width=\fw 
			\textwidth]			
			{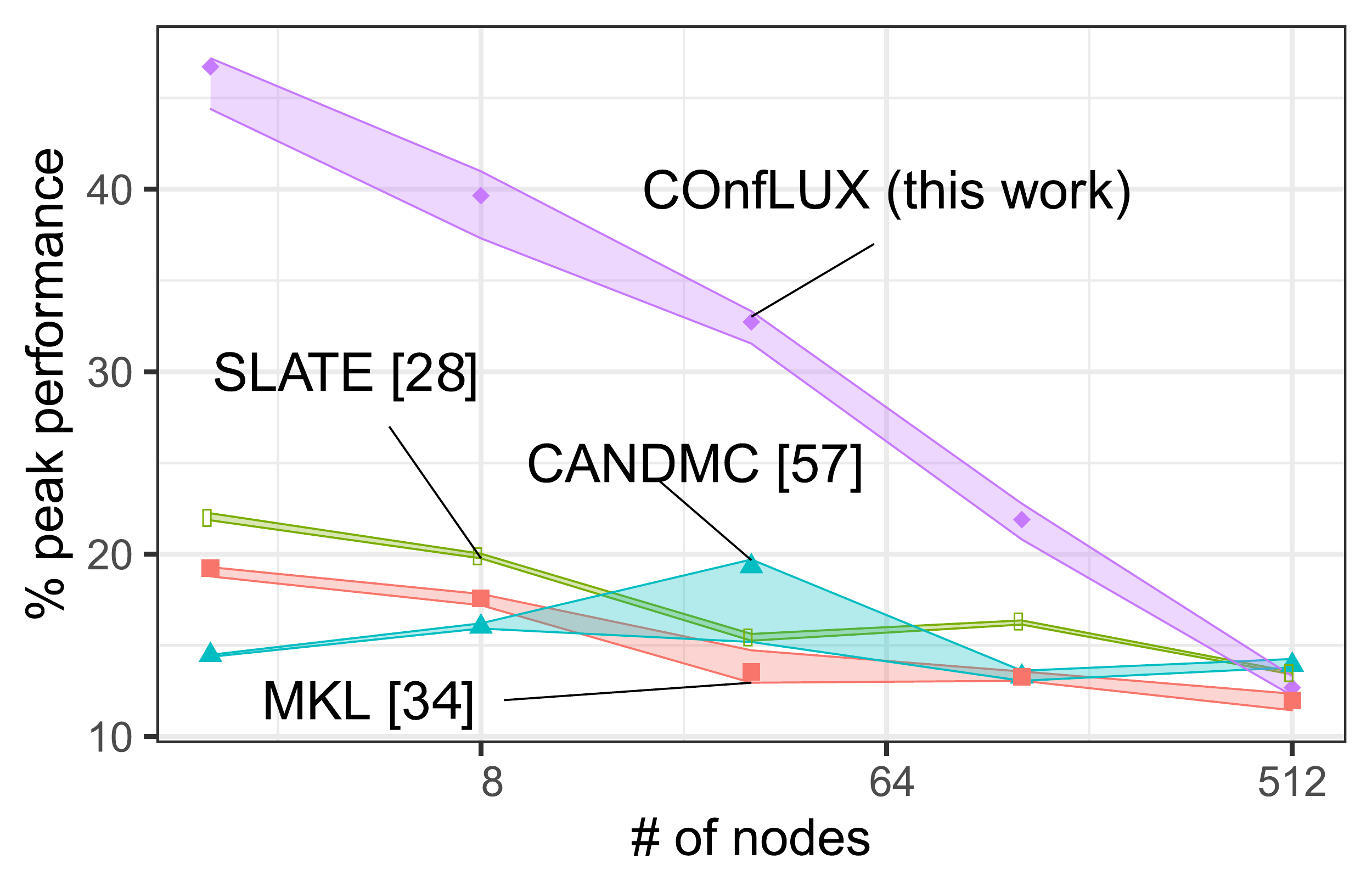}
			\label{fig:square_weakp2}}
		\caption{
			{{Achieved \% of peak performance for LU 
					factorization. We show 
					median 
					and 95\% confidence intervals.} }
		}
		%
		\label{fig:performancePlotsLU}
	\end{figure*}

\begin{figure*}[t]
		\vspace{2em}
	\centering
	\subfloat[Strong scaling, $N= 2^{17}=$ 
	131,072]{\includegraphics[width=\fw 
		\textwidth]	
		{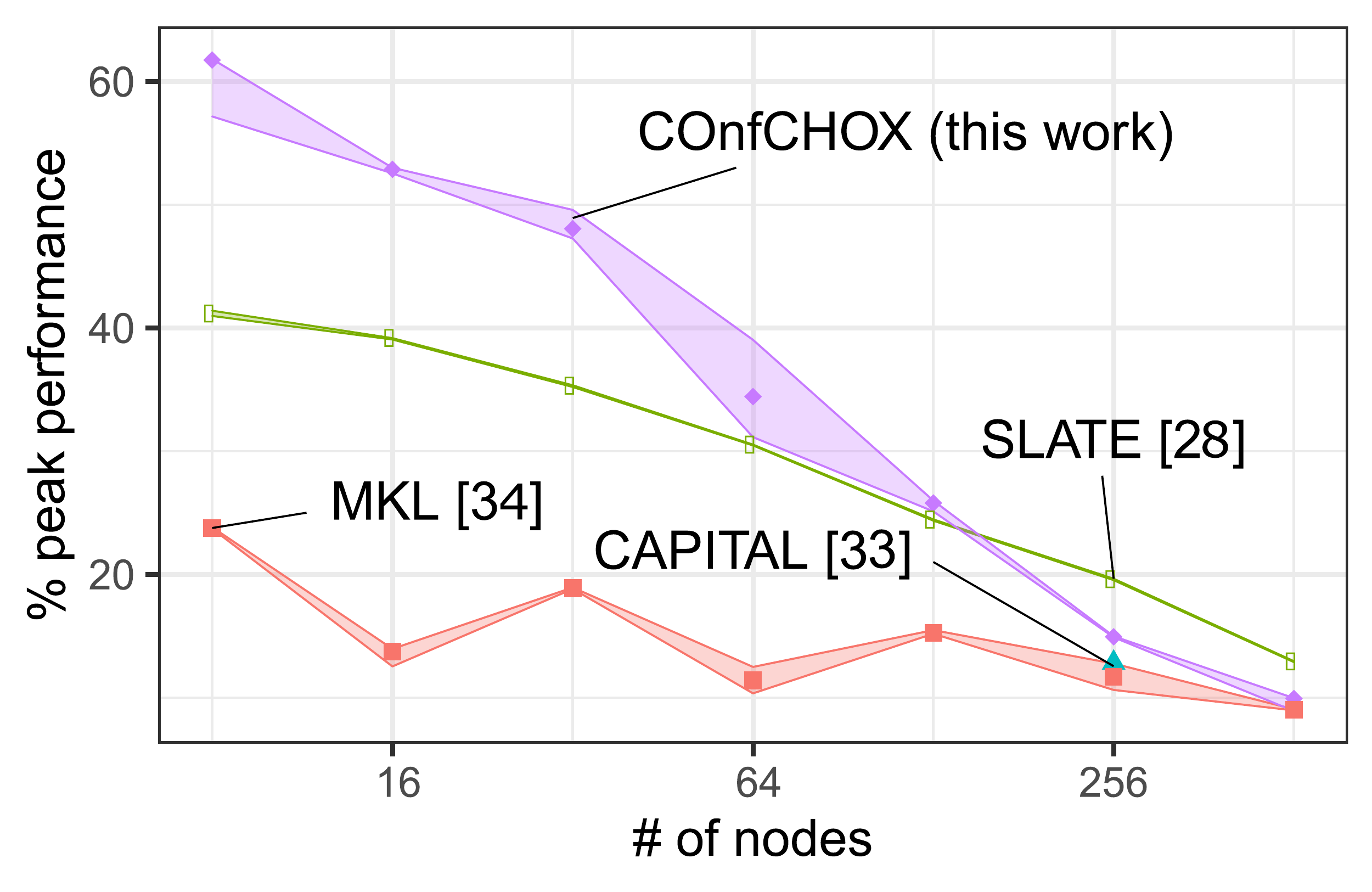}\label{fig:tall_strong}}
	\hfill
	\subfloat[Strong scaling, $N= 2^{14} = $ 
	16,384]{\includegraphics[width=\fw 
		\textwidth]	
		{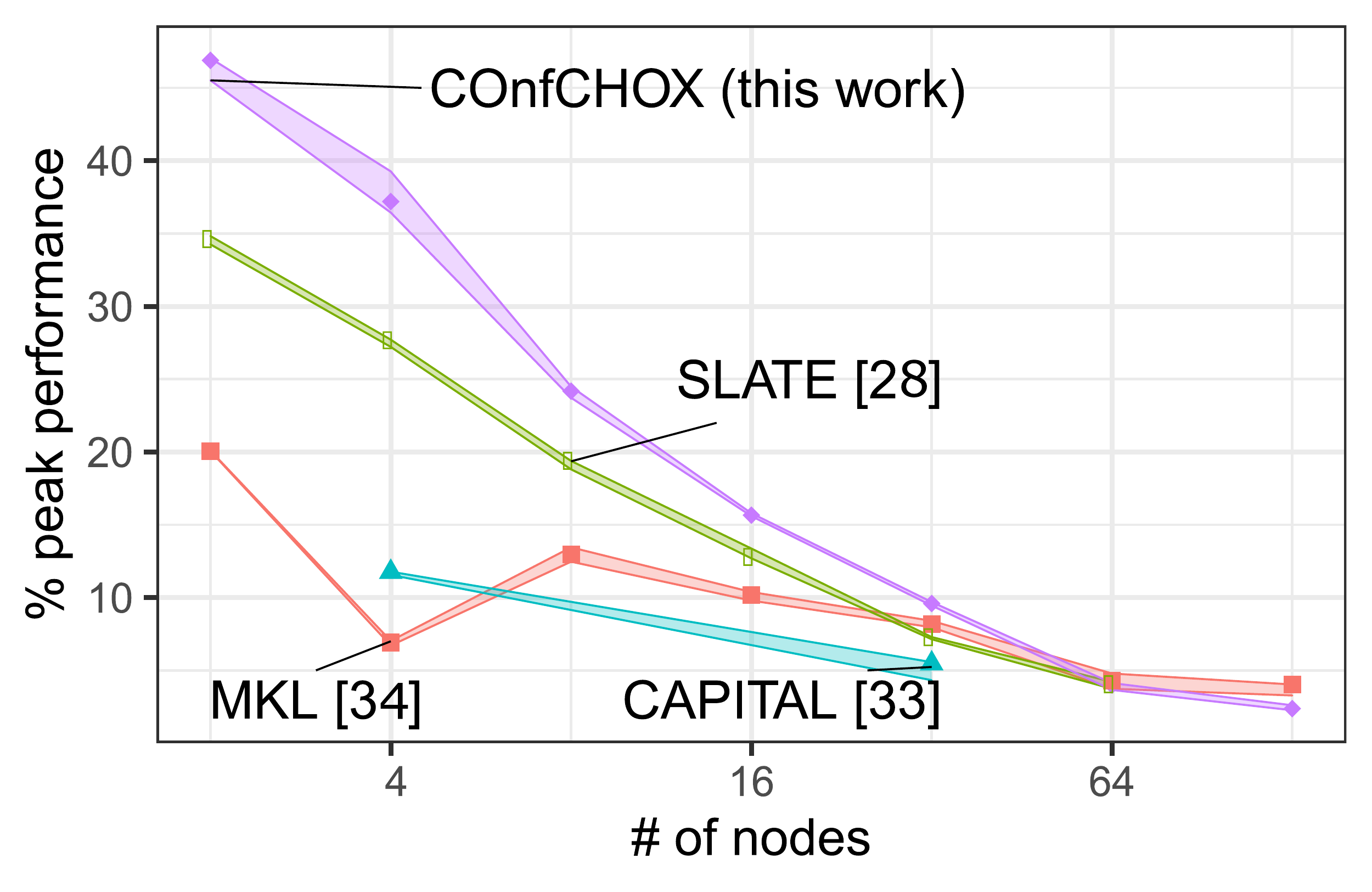}\label{fig:tall_weakp1}}
	%
	\hfill
	\subfloat[Weak scaling, $N = $ 8,192$\cdot 
	\sqrt{P}$]{\includegraphics[width=\fw \textwidth]	
		{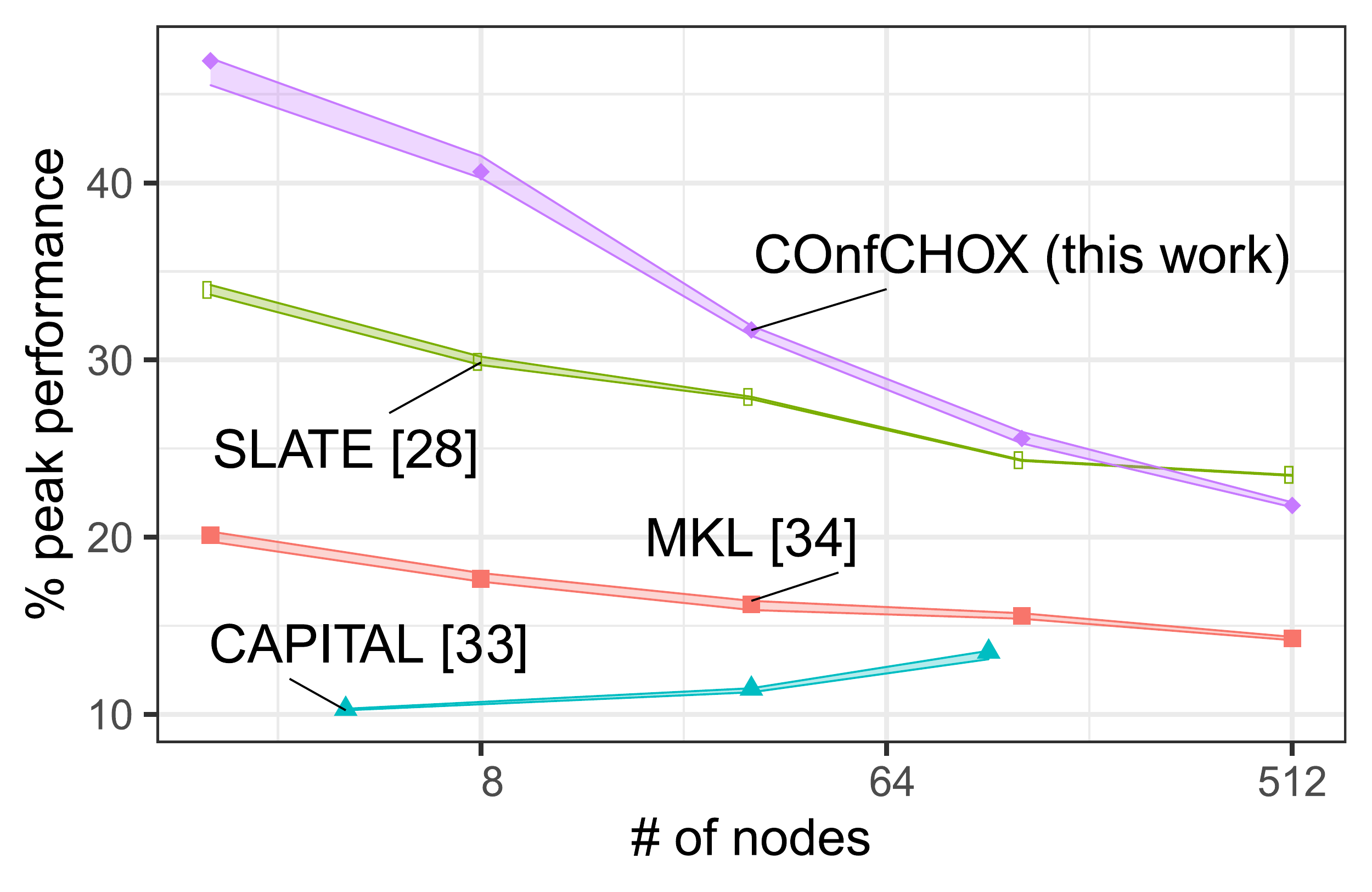}\label{fig:tall_weakp2}}
	%
	\caption{
		{{Achieved \% of peak performance for Cholesky 
				factorization. We 
				show 
				median 
				and 95\% confidence intervals.} }
	}
	\label{fig:performancePlotsCholesky}
	
\end{figure*}

\section{Results}
\label{sec:results}

	\begin{figure}
	\centering
	\subfloat
	{\hspace{-1.6em}
		\includegraphics[width=0.283 \textwidth]
		{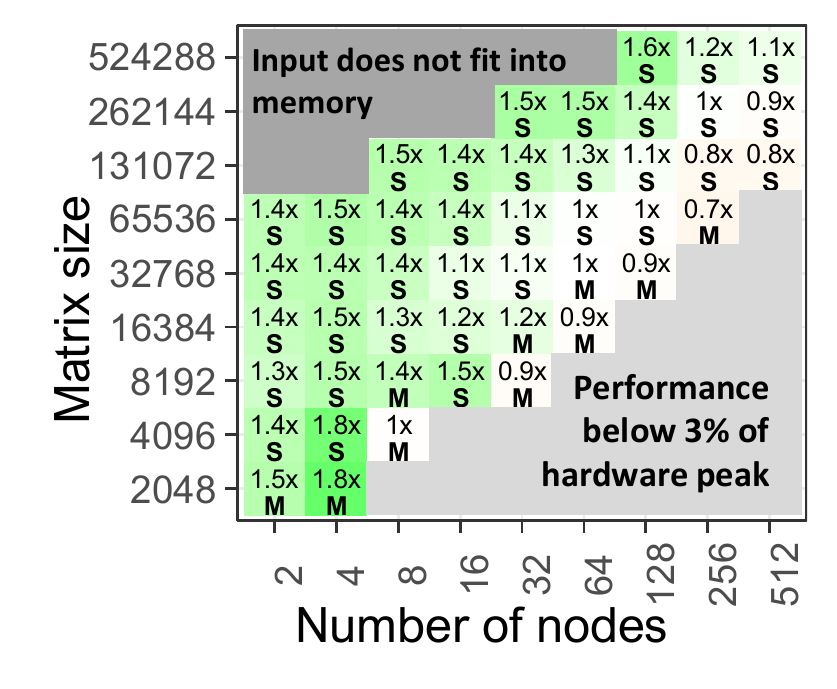}}
	%
	\subfloat
	{		
		\hspace{-1.0em}
		\includegraphics[width=0.234 \textwidth]
		{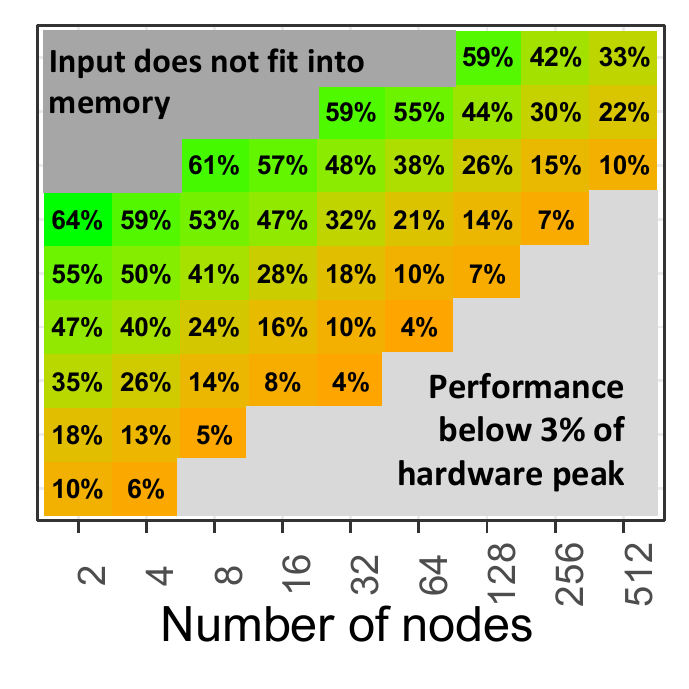}}
	\caption{
		{{\textbf{Left:} measured runtime speedup of \chol vs. 
				fastest state-of-the-art library (S=SLATE~\cite{slate}, 
				C=CAPITAL~\cite{choleskyQRnew}, 
				M=MKL~\cite{mkl}). \textbf{Right:} \chol's achieved \% of 
				machine peak 
				performance. } }
	}
	\label{fig:heatmaps_chol}
\end{figure}

Our experiments confirm advantages of \conflux and \chol in terms of 
both communication volume and time-to-solution over all other implementations 
tested.
A significant communication reduction can be observed (up to 1.42 times for \conflux 
compared with the second-best implementation for $P=$ 1,024). Moreover,
the performance models predict even greater benefits for larger runs
 (expected 
2.1 times communication reduction for a full-machine run on the Summit 
supercomputer -- Figure~\ref{fig:heatmap}). Most importantly, our 
implementations 
consistently outperform existing implementations (up to 
three times
 -- Figures~\ref{fig:heatmaps_lu} and~\ref{fig:performancePlotsLU}).

\noindent\macb{Communication volume.}
Fig.~\ref{fig:square_strong} presents the {measured }communication volume 
per node, as well as our derived cost models (Table{~\ref{tab:comparison}}) 
	presented with solid lines, 
for $N=$ 16,384. Observe that \conflux communicates the least for all values of 
$P$. 
Note that 
since 
both MKL and SLATE use similar 2D decompositions, their communication 
volumes are mostly equal, with a slight advantage for SLATE. 
In 
Fig.~\ref{fig:square_weakp1}, we show the weak scaling characteristics of the 
analyzed implementations. Observe that for a fixed amount of work per node, the 
2D 
algorithms - MKL and SLATE - scale sub-optimally. 
Figure~\ref{fig:heatmap} summarizes the communication volume reduction of 
\conflux compared with the second-best implementation, both for measurements 
and 
theoretical predictions. It can be seen that for all combinations of $P$ and 
$N$, \conflux always communicates the least. For all measured data points, 
the asymptotically optimal CANDMC performed worse than MKL 
or SLATE. 
The figure also presents the predicted communication cost of all considered 
implementations for up to $P=$ 262,144 based on our theoretical models.

\noindent\macb{Performance.} 
Our measurements show that both \conflux and \chol outperform all considered 
state-of-the art libraries in almost all scenarios 
(Figures~\ref{fig:heatmaps_lu} 
and~\ref{fig:heatmaps_chol}). Thanks to the optimized block data decomposition 
and efficient overlap of computation and communication, our implementations 
achieve high performance already on relatively small matrices (approx. 40\% of 
hardware peak for cases where \hltext{\mbox{$N^2/P > 2^{27}$}}). In cases where 
the local domain 
per processor becomes very small (\hltext{\mbox{$N^2 / P < 2^{27}$}}) our block 
decomposition does 
not add that much benefit, since the performance is mostly latency-bound, and 
not bandwidth-bound. \hltext{This is visible not only in strong scaling 
(\mbox{Figures~\ref{fig:performancePlotsLU} 
and~\ref{fig:performancePlotsCholesky}}, \textbf{a)} and \textbf{b)}), but also 
in weak scaling (\textbf{c)}), where the 
input size per processor \mbox{$N^2/P$} is constant. This is again caused by 
latency overheads of scattering data between 1D and 2.5D layouts. }

 However, as the local domains become larger and may be 
more efficiently pipelined and overlapped using asynchronous MPI routines and 
intra-node OpenMP parallelism, the advantage becomes significant 
(Figures~\ref{fig:performancePlotsLU} and~\ref{fig:performancePlotsCholesky}). 
\conflux 
outperforms existing libraries up to three times (for $P=4, N=4096$, 
second-best 
library is SLATE -- Figure~\ref{fig:heatmaps_lu}) and \chol achieves up to 1.8 
times speedup (e.g., $P=4, N= $ 4,096, second-best is again SLATE).

\noindent\macb{Implications for Exascale.} Both the communication models' 
predictions (Figure~\ref{fig:heatmap}) and measured speedups 
(Figures~\ref{fig:heatmaps_lu} and~\ref{fig:heatmaps_chol}) allow us to predict 
that when running our implementations on exascale machines, we can expect to 
see further performance improvements over state-of-the-art libraries. 
Furthermore, throughput-oriented hardware, such as GPUs and 
FPGAs, may benefit even more from the communication reduction of our schedules. 
Thus, \conflux and \chol not only outperform the state-of-the-art 
libraries at relatively small scales --- which are most common use 
cases in practice~\cite{osawa2019large, gb19, joost} --- but also promise 
speedups on full-scale 
performance runs on modern supercomputers.

\section{Related Work}

\begin{table*}[t]
	\vspace{3em}
	\footnotesize
	\begin{tabular}
		{p{1.1cm}p{5cm}p{5.05cm}p{5.15cm}}
		\toprule
		& \textbf{Pebbling}~\cite{sethi1975complete, 
			bruno1976code, 
			redblue, redbluewhite, COSMA} & 
		\textbf{Projection-based}~\cite{general_arrays, 
			demmel2, 
			demmel3, demmel4, 
			ballard2011minimizing, olivry2020automated} 
		& 
		\textbf{Problem specific}~\cite{aggarwal1988input, 
			benabderrahmane2010polyhedral, 
			mehta2014revisiting, darte1999complexity, gb19}
		\\
		\midrule
		\textbf{Scope} & 
		\faThumbsOUp \faThumbsOUp ~General cDAGs & 
		\makecell[tl] {\faThumbsOUp 
			~Programs Geometric
			structure of 
			iteration space}  & 
		\faThumbsDown ~Individually 
		tailored for given problem  \\
		\textbf{Key\newline Features} &
		\makecell[tl] {
			\faThumbsOUp ~General scope \\
			\faThumbsOUp ~Expresses complex data
			dependencies \\
			\faThumbsOUp ~Directly exposes schedules \\
			\faThumbsOUp ~Intuitive \\
			\faThumbsDown ~PSPACE-complete 
			in general case \\
			\faThumbsDown ~No guarantees that a solution  
			exists \\
			\faThumbsDown ~No well-established method
			how to \\
			\hspace{1.4em}automatically translate
			code to cDAGs 
		} 
		& 
		\makecell[tl] {
			\faThumbsOUp ~Well-developed theory and tools  \\
			\faThumbsOUp ~Guaranteed to find solution \\
			\hspace{1.4em}for given class of 
			programs \\
			\faThumbsDown ~Bounds are often not tight \\
			\faThumbsDown ~Fails to capture dependencies\\
			\hspace{1.4em}between statements \\
			\faThumbsDown ~Limited scope
		}
		& 
		\makecell[tl] {
			\faThumbsOUp ~Takes advantage of
			problem-specific  \\
			\hspace{1.4em}features\\
			\faThumbsOUp ~Tends to provide best 
			practical results \\
			\faThumbsDown ~Requires large manual effort \\
			\hspace{1.4em}for each algorithm separately \\
			\faThumbsDown ~Difficult to generalize \\
			\faThumbsDown ~Often based on heuristics\\
			\hspace{1.4em}with no guarantees on optimality 
		}
		\\
		\bottomrule
	\end{tabular}
	\caption{Overview of different approaches to modeling 
		data 
		movement.
	}
	
	\label{tab:stateoftheart}
\end{table*}

Previous 
work on I/O analysis can be categorized into three classes (see 
Table~\ref{tab:stateoftheart}): 
work based on \textbf{direct 
pebbling} or variants of it, such as Vitter's block-based 
model~\cite{vitter1998external}; works using \textbf{geometric arguments 
of projections} based on the Loomis-Whitney 
inequality~\cite{loomisWhitney}; and 
works applying optimizations limited to {specific structural 
properties such as \textbf{affine loops}~\cite{affineloops}, 
and more generally, \textbf{the polyhedral model program 
representation}~\cite{benabderrahmane2010polyhedral, 
	mehta2014revisiting, olivry2020automated}. Although the scopes of those 
	approaches 
significantly overlap --- for example, kernels like matrix multiplication can 
be captured by most of the models --- there are important 
differences both in methodology and the end-results they provide, as summarized 
in 
Table~\ref{tab:stateoftheart}.

Dense linear algebra operators are among the standard core kernels in 
scientific applications.
Ballard et al.{~\cite{ballard2011minimizing}} 
present a comprehensive overview of their asymptotic I/O lower bounds and 
I/O minimizing schedules, both for sparse and dense matrices. Recently, Olivry 
et al. introduced IOLB~\cite{olivry2020automated} --- a framework 
for assessing sequential lower bounds for polyhedral programs. However, their 
computational model disallows recomputation (cf. 
Section~\ref{sec:output_reuse}).

Matrix factorizations are included in most of linear solvers' libraries.
With regard to the parallelization strategy, these libraries may be 
categorized into three groups: \textbf{task-based: } SLATE~\cite{slate} 
(OpenMP tasks), DLAF~\cite{dlaf} (HPX tasks), DPLASMA~\cite{dplasma} (DaGuE 
scheduler), or 
CHAMELEON~\cite{chameleon} (StarPU tasks); \textbf{static 2D parallel: } 
MKL~\cite{mkl}, Elemental~\cite{poulson2013elemental}, or 
Cray 
LibSci~\cite{libsci}; \textbf{communication-minimizing 
2.5D parallel:} CANDMC~\cite{candmc} and CAPITAL~\cite{choleskyQRnew}.
In the last decade, heavy focus was placed on heterogeneous architectures. Most 
GPU vendors offer hardware-customized BLAS 
solvers~\cite{cusolver}. Agullo et. al~\cite{dongarra_gpu_LU} accelerated LU 
factorization using up to 
4 GPUs. Azzam et. al~\cite{dongarra_mixed_precision_LU} utilize NVDIA's GPU 
tensor cores to compute low-precision LU factorization and then iteratively 
refine the linear problem's solution. Moreover, some of the distributed memory 
libraries 
support GPU offloading for local computations~\cite{slate}.

\section{Conclusions}

{In this work, we present a method of analyzing I/O cost of 
DAAP --- a general 
	class of programs that covers many fundamental 
	computational motifs.}
We show, both theoretically and in practice, that our 
pebbling-based approach 
for deriving the I/O lower bounds is \textbf{more general:} 
programs with 
disjoint array accesses cover a wide variety of applications, 
\textbf{more powerful:} it can explicitly capture 
inter-statement dependencies, 
\textbf{more precise:} it 	derives tighter I/O bounds, and 
\textbf{more 
	constructive:} \xpart provides powerful hints for 
	obtaining parallel 
	schedules. 

When applying the approach to LU and Cholesky factorizations, we 
are able to derive 
new 
lower bounds, as well as new, communication-avoiding 
schedules. Not only do they communicate less than state-of-the-art 
2D \textit{and} 3D decompositions --- by a factor of up to 
1.6$\times$ --- but most importantly, they
outperform existing 
commercial libraries in a wide range of problem parameters (up to 3$\times$ for 
LU, up to 1.8$\times$ for Cholesky).
 Finally, our code is openly 
available, offering full ScaLAPACK layout compatibility.

\section{Acknowledgements}
This project received funding from the European Research Council (ERC) under 
the European 
Union’s Horizon
2020 programme (grant agreement DAPP, no. 678880), EPIGRAM-HS project (grant 
agreement no. 801039). Tal Ben-Nun is supported by 
the Swiss 
National Science Foundation (Ambizione Project \#185778). The authors wish to 
 acknowledge the support from the PASC program (Platform for Advanced 
 Scientific Computing), as well as the 
Swiss National Supercomputing Center (CSCS) for providing computing 
infrastructure.

\bibliography{refs}


\begin{thebibliography}{66}


\ifx \showCODEN    \undefined \def \showCODEN     #1{\unskip}     \fi
\ifx \showDOI      \undefined \def \showDOI       #1{#1}\fi
\ifx \showISBNx    \undefined \def \showISBNx     #1{\unskip}     \fi
\ifx \showISBNxiii \undefined \def \showISBNxiii  #1{\unskip}     \fi
\ifx \showISSN     \undefined \def \showISSN      #1{\unskip}     \fi
\ifx \showLCCN     \undefined \def \showLCCN      #1{\unskip}     \fi
\ifx \shownote     \undefined \def \shownote      #1{#1}          \fi
\ifx \showarticletitle \undefined \def \showarticletitle #1{#1}   \fi
\ifx \showURL      \undefined \def \showURL       {\relax}        \fi
\providecommand\bibfield[2]{#2}
\providecommand\bibinfo[2]{#2}
\providecommand\natexlab[1]{#1}
\providecommand\showeprint[2][]{arXiv:#2}

\bibitem[\protect\citeauthoryear{Aggarwal and Vitter}{Aggarwal and
  Vitter}{1988}]%
        {aggarwal1988input}
\bibfield{author}{\bibinfo{person}{Alok Aggarwal} {and} \bibinfo{person}{S
  Vitter, Jeffrey}.} \bibinfo{year}{1988}\natexlab{}.
\newblock \showarticletitle{The input/output complexity of sorting and related
  problems}.
\newblock \bibinfo{journal}{\emph{Commun. ACM}} \bibinfo{volume}{31},
  \bibinfo{number}{9} (\bibinfo{year}{1988}), \bibinfo{pages}{1116--1127}.
\newblock


\bibitem[\protect\citeauthoryear{Agullo, Augonnet, Dongarra, Faverge, Langou,
  Ltaief, and Tomov}{Agullo et~al\mbox{.}}{2011a}]%
        {dongarra_gpu_LU}
\bibfield{author}{\bibinfo{person}{Emmanuel Agullo},
  \bibinfo{person}{C{\'e}dric Augonnet}, \bibinfo{person}{Jack Dongarra},
  \bibinfo{person}{Mathieu Faverge}, \bibinfo{person}{Julien Langou},
  \bibinfo{person}{Hatem Ltaief}, {and} \bibinfo{person}{Stanimire Tomov}.}
  \bibinfo{year}{2011}\natexlab{a}.
\newblock \showarticletitle{LU factorization for accelerator-based systems}. In
  \bibinfo{booktitle}{\emph{2011 9th IEEE/ACS International Conference on
  Computer Systems and Applications (AICCSA)}}. IEEE,
  \bibinfo{pages}{217--224}.
\newblock


\bibitem[\protect\citeauthoryear{Agullo, Augonnet, Dongarra, Ltaief, Namyst,
  Thibault, and Tomov}{Agullo et~al\mbox{.}}{2010}]%
        {chameleon}
\bibfield{author}{\bibinfo{person}{Emmanuel Agullo},
  \bibinfo{person}{C{\'e}dric Augonnet}, \bibinfo{person}{Jack Dongarra},
  \bibinfo{person}{Hatem Ltaief}, \bibinfo{person}{Raymond Namyst},
  \bibinfo{person}{Samuel Thibault}, {and} \bibinfo{person}{Stanimire Tomov}.}
  \bibinfo{year}{2010}\natexlab{}.
\newblock \showarticletitle{{Faster, Cheaper, Better -- a Hybridization
  Methodology to Develop Linear Algebra Software for GPUs}}.
\newblock In \bibinfo{booktitle}{\emph{{GPU Computing Gems}}},
  \bibfield{editor}{\bibinfo{person}{Wen mei W.~Hwu}} (Ed.).
  Vol.~\bibinfo{volume}{2}. \bibinfo{publisher}{{Morgan Kaufmann}}.
\newblock
\urldef\tempurl%
\url{https://hal.inria.fr/inria-00547847}
\showURL{%
\tempurl}


\bibitem[\protect\citeauthoryear{Agullo, Dongarra, Hadri, Kurzak, Langou,
  Langou, Ltaief, Luszczek, and YarKhan}{Agullo et~al\mbox{.}}{2011b}]%
        {plasma}
\bibfield{author}{\bibinfo{person}{Emmanuel Agullo}, \bibinfo{person}{Jack
  Dongarra}, \bibinfo{person}{Bilel Hadri}, \bibinfo{person}{Jakub Kurzak},
  \bibinfo{person}{Julie Langou}, \bibinfo{person}{Julien Langou},
  \bibinfo{person}{Hatem Ltaief}, \bibinfo{person}{Piotr Luszczek}, {and}
  \bibinfo{person}{Asim YarKhan}.} \bibinfo{year}{2011}\natexlab{b}.
\newblock \showarticletitle{PLASMA Users’ Guide. Parallel Linear Algebra
  Software for Multicore Architectures}.
\newblock \bibinfo{journal}{\emph{Rapport technique, Innovative Computing
  Laboratory, University of Tennessee}} (\bibinfo{year}{2011}).
\newblock


\bibitem[\protect\citeauthoryear{Alwen and Serbinenko}{Alwen and
  Serbinenko}{2015}]%
        {parallelPebbling}
\bibfield{author}{\bibinfo{person}{Jo{\"e}l Alwen} {and}
  \bibinfo{person}{Vladimir Serbinenko}.} \bibinfo{year}{2015}\natexlab{}.
\newblock \showarticletitle{High parallel complexity graphs and memory-hard
  functions}. In \bibinfo{booktitle}{\emph{Proceedings of the forty-seventh
  annual ACM symposium on Theory of computing}}. \bibinfo{pages}{595--603}.
\newblock


\bibitem[\protect\citeauthoryear{Anderson, Bai, Bischof, Blackford, Dongarra,
  Du~Croz, Greenbaum, Hammarling, McKenney, and Sorensen}{Anderson
  et~al\mbox{.}}{1999}]%
        {lapack}
\bibfield{author}{\bibinfo{person}{Edward Anderson}, \bibinfo{person}{Zhaojun
  Bai}, \bibinfo{person}{Christian Bischof}, \bibinfo{person}{Susan Blackford},
  \bibinfo{person}{Jack Dongarra}, \bibinfo{person}{Jeremy Du~Croz},
  \bibinfo{person}{Anne Greenbaum}, \bibinfo{person}{Sven Hammarling},
  \bibinfo{person}{Alan McKenney}, {and} \bibinfo{person}{Danny Sorensen}.}
  \bibinfo{year}{1999}\natexlab{}.
\newblock \bibinfo{booktitle}{\emph{LAPACK Users' guide}}.
  Vol.~\bibinfo{volume}{9}.
\newblock \bibinfo{publisher}{Siam}.
\newblock


\bibitem[\protect\citeauthoryear{Ballard, Demmel, Holtz, and Schwartz}{Ballard
  et~al\mbox{.}}{2010}]%
        {cholesky1}
\bibfield{author}{\bibinfo{person}{Grey Ballard}, \bibinfo{person}{James
  Demmel}, \bibinfo{person}{Olga Holtz}, {and} \bibinfo{person}{Oded
  Schwartz}.} \bibinfo{year}{2010}\natexlab{}.
\newblock \showarticletitle{Communication-optimal parallel and sequential
  Cholesky decomposition}.
\newblock \bibinfo{journal}{\emph{SIAM Journal on Scientific Computing}}
  \bibinfo{volume}{32}, \bibinfo{number}{6} (\bibinfo{year}{2010}),
  \bibinfo{pages}{3495--3523}.
\newblock


\bibitem[\protect\citeauthoryear{Ballard, Demmel, Holtz, and Schwartz}{Ballard
  et~al\mbox{.}}{2011}]%
        {ballard2011minimizing}
\bibfield{author}{\bibinfo{person}{Grey Ballard}, \bibinfo{person}{James
  Demmel}, \bibinfo{person}{Olga Holtz}, {and} \bibinfo{person}{Oded
  Schwartz}.} \bibinfo{year}{2011}\natexlab{}.
\newblock \showarticletitle{Minimizing communication in numerical linear
  algebra}.
\newblock \bibinfo{journal}{\emph{SIAM J. Matrix Anal. Appl.}}
  \bibinfo{volume}{32}, \bibinfo{number}{3} (\bibinfo{year}{2011}),
  \bibinfo{pages}{866--901}.
\newblock


\bibitem[\protect\citeauthoryear{Benabderrahmane, Pouchet, Cohen, and
  Bastoul}{Benabderrahmane et~al\mbox{.}}{2010}]%
        {benabderrahmane2010polyhedral}
\bibfield{author}{\bibinfo{person}{Mohamed-Walid Benabderrahmane},
  \bibinfo{person}{Louis-No{\"e}l Pouchet}, \bibinfo{person}{Albert Cohen},
  {and} \bibinfo{person}{C{\'e}dric Bastoul}.} \bibinfo{year}{2010}\natexlab{}.
\newblock \showarticletitle{The polyhedral model is more widely applicable than
  you think}. In \bibinfo{booktitle}{\emph{International Conference on Compiler
  Construction}}. Springer, \bibinfo{pages}{283--303}.
\newblock


\bibitem[\protect\citeauthoryear{Blackford, Choi, Cleary, D'Azevedo, Demmel,
  Dhillon, Dongarra, Hammarling, Henry, Petitet, Stanley, Walker, and
  Whaley}{Blackford et~al\mbox{.}}{1997}]%
        {scalapack}
\bibfield{author}{\bibinfo{person}{L.~S. Blackford}, \bibinfo{person}{J. Choi},
  \bibinfo{person}{A. Cleary}, \bibinfo{person}{E. D'Azevedo},
  \bibinfo{person}{J. Demmel}, \bibinfo{person}{I. Dhillon},
  \bibinfo{person}{J. Dongarra}, \bibinfo{person}{S. Hammarling},
  \bibinfo{person}{G. Henry}, \bibinfo{person}{A. Petitet}, \bibinfo{person}{K.
  Stanley}, \bibinfo{person}{D. Walker}, {and} \bibinfo{person}{R.~C. Whaley}.}
  \bibinfo{year}{1997}\natexlab{}.
\newblock \bibinfo{booktitle}{\emph{{ScaLAPACK} Users' Guide}}.
\newblock \bibinfo{publisher}{Society for Industrial and Applied Mathematics},
  \bibinfo{address}{Philadelphia, PA}.
\newblock
\showISBNx{0-89871-397-8 (paperback)}


\bibitem[\protect\citeauthoryear{Bondhugula, Baskaran, Krishnamoorthy,
  Ramanujam, Rountev, and Sadayappan}{Bondhugula et~al\mbox{.}}{2008}]%
        {polyhedralModel}
\bibfield{author}{\bibinfo{person}{Uday Bondhugula}, \bibinfo{person}{Muthu
  Baskaran}, \bibinfo{person}{Sriram Krishnamoorthy}, \bibinfo{person}{J.
  Ramanujam}, \bibinfo{person}{Atanas Rountev}, {and} \bibinfo{person}{P.
  Sadayappan}.} \bibinfo{year}{2008}\natexlab{}.
\newblock \bibinfo{booktitle}{\emph{Automatic Transformations for
  Communication-Minimized Parallelization and Locality Optimization in the
  Polyhedral Model}}.
\newblock \bibinfo{publisher}{Springer Berlin Heidelberg},
  \bibinfo{address}{Berlin, Heidelberg}, \bibinfo{pages}{132--146}.
\newblock
\showISBNx{978-3-540-78791-4}
\urldef\tempurl%
\url{https://doi.org/10.1007/978-3-540-78791-4_9}
\showDOI{\tempurl}


\bibitem[\protect\citeauthoryear{{Bosilca}, {Bouteiller}, {Danalis}, {Faverge},
  {Haidar}, {Herault}, {Kurzak}, {Langou}, {Lemarinier}, {Ltaief}, {Luszczek},
  {YarKhan}, and {Dongarra}}{{Bosilca} et~al\mbox{.}}{2011}]%
        {dplasma}
\bibfield{author}{\bibinfo{person}{G. {Bosilca}}, \bibinfo{person}{A.
  {Bouteiller}}, \bibinfo{person}{A. {Danalis}}, \bibinfo{person}{M.
  {Faverge}}, \bibinfo{person}{A. {Haidar}}, \bibinfo{person}{T. {Herault}},
  \bibinfo{person}{J. {Kurzak}}, \bibinfo{person}{J. {Langou}},
  \bibinfo{person}{P. {Lemarinier}}, \bibinfo{person}{H. {Ltaief}},
  \bibinfo{person}{P. {Luszczek}}, \bibinfo{person}{A. {YarKhan}}, {and}
  \bibinfo{person}{J. {Dongarra}}.} \bibinfo{year}{2011}\natexlab{}.
\newblock \showarticletitle{Flexible Development of Dense Linear Algebra
  Algorithms on Massively Parallel Architectures with DPLASMA}. In
  \bibinfo{booktitle}{\emph{2011 IEEE International Symposium on Parallel and
  Distributed Processing Workshops and Phd Forum}}.
  \bibinfo{pages}{1432--1441}.
\newblock


\bibitem[\protect\citeauthoryear{Bruno and Sethi}{Bruno and Sethi}{1976}]%
        {bruno1976code}
\bibfield{author}{\bibinfo{person}{John Bruno} {and} \bibinfo{person}{Ravi
  Sethi}.} \bibinfo{year}{1976}\natexlab{}.
\newblock \showarticletitle{Code generation for a one-register machine}.
\newblock \bibinfo{journal}{\emph{Journal of the ACM (JACM)}}
  \bibinfo{volume}{23}, \bibinfo{number}{3} (\bibinfo{year}{1976}),
  \bibinfo{pages}{502--510}.
\newblock


\bibitem[\protect\citeauthoryear{Choi et~al\mbox{.}}{Choi
  et~al\mbox{.}}{1996}]%
        {scalapackLayout}
\bibfield{author}{\bibinfo{person}{J. Choi} {et~al\mbox{.}}}
  \bibinfo{year}{1996}\natexlab{}.
\newblock \showarticletitle{ScaLAPACK: a portable linear algebra library for
  distributed memory computers — design issues and performance}.
\newblock \bibinfo{journal}{\emph{Comp. Phys. Comm.}} (\bibinfo{year}{1996}).
\newblock


\bibitem[\protect\citeauthoryear{Christ, Demmel, Knight, Scanlon, and
  Yelick}{Christ et~al\mbox{.}}{2013}]%
        {general_arrays}
\bibfield{author}{\bibinfo{person}{Michael Christ}, \bibinfo{person}{James
  Demmel}, \bibinfo{person}{Nicholas Knight}, \bibinfo{person}{Thomas Scanlon},
  {and} \bibinfo{person}{Katherine Yelick}.} \bibinfo{year}{2013}\natexlab{}.
\newblock \showarticletitle{Communication lower bounds and optimal algorithms
  for programs that reference arrays--Part 1}.
\newblock \bibinfo{journal}{\emph{arXiv preprint arXiv:1308.0068}}
  (\bibinfo{year}{2013}).
\newblock


\bibitem[\protect\citeauthoryear{Cray}{Cray}{2020}]%
        {libsci}
\bibfield{author}{\bibinfo{person}{Cray}.} \bibinfo{year}{2020}\natexlab{}.
\newblock \bibinfo{title}{{LibSci}: Cray Scientific Libraries}.
\newblock   (\bibinfo{year}{2020}).
\newblock
\urldef\tempurl%
\url{https://olcf.ornl.gov/software_package/libsci/}
\showURL{%
\tempurl}


\bibitem[\protect\citeauthoryear{Darte}{Darte}{1999}]%
        {darte1999complexity}
\bibfield{author}{\bibinfo{person}{Alain Darte}.}
  \bibinfo{year}{1999}\natexlab{}.
\newblock \showarticletitle{On the complexity of loop fusion}. In
  \bibinfo{booktitle}{\emph{1999 International Conference on Parallel
  Architectures and Compilation Techniques (Cat. No. PR00425)}}. IEEE,
  \bibinfo{pages}{149--157}.
\newblock


\bibitem[\protect\citeauthoryear{Del~Ben et~al\mbox{.}}{Del~Ben
  et~al\mbox{.}}{2015}]%
        {joost}
\bibfield{author}{\bibinfo{person}{Mauro Del~Ben} {et~al\mbox{.}}}
  \bibinfo{year}{2015}\natexlab{}.
\newblock \showarticletitle{{Enabling simulation at the fifth rung of DFT:
  Large scale RPA calculations with excellent time to solution}}.
\newblock \bibinfo{journal}{\emph{Comp. Phys. Comm.}} (\bibinfo{year}{2015}).
\newblock


\bibitem[\protect\citeauthoryear{Del~Ben, Hutter, and VandeVondele}{Del~Ben
  et~al\mbox{.}}{2013}]%
        {rpa}
\bibfield{author}{\bibinfo{person}{Mauro Del~Ben}, \bibinfo{person}{Jurg
  Hutter}, {and} \bibinfo{person}{Joost VandeVondele}.}
  \bibinfo{year}{2013}\natexlab{}.
\newblock \showarticletitle{Electron correlation in the condensed phase from a
  resolution of identity approach based on the Gaussian and plane waves
  scheme}.
\newblock \bibinfo{journal}{\emph{Journal of chemical theory and computation}}
  \bibinfo{volume}{9}, \bibinfo{number}{6} (\bibinfo{year}{2013}),
  \bibinfo{pages}{2654--2671}.
\newblock


\bibitem[\protect\citeauthoryear{Demmel and Dinh}{Demmel and Dinh}{2018}]%
        {demmel3}
\bibfield{author}{\bibinfo{person}{James Demmel} {and} \bibinfo{person}{Grace
  Dinh}.} \bibinfo{year}{2018}\natexlab{}.
\newblock \showarticletitle{Communication-optimal convolutional neural nets}.
\newblock \bibinfo{journal}{\emph{arXiv preprint arXiv:1802.06905}}
  (\bibinfo{year}{2018}).
\newblock


\bibitem[\protect\citeauthoryear{Demmel and Rusciano}{Demmel and
  Rusciano}{2016}]%
        {demmel4}
\bibfield{author}{\bibinfo{person}{James Demmel} {and} \bibinfo{person}{Alex
  Rusciano}.} \bibinfo{year}{2016}\natexlab{}.
\newblock \showarticletitle{Parallelepipeds obtaining HBL lower bounds}.
\newblock \bibinfo{journal}{\emph{arXiv preprint arXiv:1611.05944}}
  (\bibinfo{year}{2016}).
\newblock


\bibitem[\protect\citeauthoryear{Dennard, Gaensslen, Yu, Rideout, Bassous, and
  LeBlanc}{Dennard et~al\mbox{.}}{1974}]%
        {dennard1974design}
\bibfield{author}{\bibinfo{person}{Robert~H Dennard}, \bibinfo{person}{Fritz~H
  Gaensslen}, \bibinfo{person}{Hwa-Nien Yu}, \bibinfo{person}{V~Leo Rideout},
  \bibinfo{person}{Ernest Bassous}, {and} \bibinfo{person}{Andre~R LeBlanc}.}
  \bibinfo{year}{1974}\natexlab{}.
\newblock \showarticletitle{Design of ion-implanted MOSFET's with very small
  physical dimensions}.
\newblock \bibinfo{journal}{\emph{IEEE Journal of Solid-State Circuits}}
  \bibinfo{volume}{9}, \bibinfo{number}{5} (\bibinfo{year}{1974}),
  \bibinfo{pages}{256--268}.
\newblock


\bibitem[\protect\citeauthoryear{Dinh and Demmel}{Dinh and Demmel}{2020}]%
        {demmel2}
\bibfield{author}{\bibinfo{person}{Grace Dinh} {and} \bibinfo{person}{James
  Demmel}.} \bibinfo{year}{2020}\natexlab{}.
\newblock \showarticletitle{Communication-Optimal Tilings for Projective Nested
  Loops with Arbitrary Bounds}.
\newblock \bibinfo{journal}{\emph{arXiv preprint arXiv:2003.00119}}
  (\bibinfo{year}{2020}).
\newblock


\bibitem[\protect\citeauthoryear{Dongarra, Faverge, Ltaief, and
  Luszczek}{Dongarra et~al\mbox{.}}{2014}]%
        {LUdongarra}
\bibfield{author}{\bibinfo{person}{Jack Dongarra}, \bibinfo{person}{Mathieu
  Faverge}, \bibinfo{person}{Hatem Ltaief}, {and} \bibinfo{person}{Piotr
  Luszczek}.} \bibinfo{year}{2014}\natexlab{}.
\newblock \showarticletitle{Achieving numerical accuracy and high performance
  using recursive tile LU factorization with partial pivoting}.
\newblock \bibinfo{journal}{\emph{Concurrency and Computation: Practice and
  Experience}} \bibinfo{volume}{26}, \bibinfo{number}{7}
  (\bibinfo{year}{2014}), \bibinfo{pages}{1408--1431}.
\newblock


\bibitem[\protect\citeauthoryear{Dongarra and Luszczek}{Dongarra and
  Luszczek}{2011}]%
        {TOP500_HPL}
\bibfield{author}{\bibinfo{person}{Jack Dongarra} {and} \bibinfo{person}{Piotr
  Luszczek}.} \bibinfo{year}{2011}\natexlab{}.
\newblock \bibinfo{booktitle}{\emph{TOP500}}.
\newblock \bibinfo{publisher}{Springer US}, \bibinfo{address}{Boston, MA},
  \bibinfo{pages}{2055--2057}.
\newblock
\showISBNx{978-0-387-09766-4}
\urldef\tempurl%
\url{https://doi.org/10.1007/978-0-387-09766-4_157}
\showDOI{\tempurl}


\bibitem[\protect\citeauthoryear{Elango et~al\mbox{.}}{Elango
  et~al\mbox{.}}{2013}]%
        {redbluewhite}
\bibfield{author}{\bibinfo{person}{V. Elango} {et~al\mbox{.}}}
  \bibinfo{year}{2013}\natexlab{}.
\newblock \bibinfo{booktitle}{\emph{Data access complexity: The red/blue pebble
  game revisited}}.
\newblock \bibinfo{type}{{T}echnical {R}eport}.
\newblock


\bibitem[\protect\citeauthoryear{Feautrier}{Feautrier}{1992}]%
        {affineloops}
\bibfield{author}{\bibinfo{person}{Paul Feautrier}.}
  \bibinfo{year}{1992}\natexlab{}.
\newblock \showarticletitle{Some efficient solutions to the affine scheduling
  problem. I. One-dimensional time}.
\newblock \bibinfo{journal}{\emph{International journal of parallel
  programming}} \bibinfo{volume}{21}, \bibinfo{number}{5}
  (\bibinfo{year}{1992}), \bibinfo{pages}{313--347}.
\newblock


\bibitem[\protect\citeauthoryear{Gates, Kurzak, Charara, YarKhan, and
  Dongarra}{Gates et~al\mbox{.}}{2019}]%
        {slate}
\bibfield{author}{\bibinfo{person}{Mark Gates}, \bibinfo{person}{Jakub Kurzak},
  \bibinfo{person}{Ali Charara}, \bibinfo{person}{Asim YarKhan}, {and}
  \bibinfo{person}{Jack Dongarra}.} \bibinfo{year}{2019}\natexlab{}.
\newblock \showarticletitle{SLATE: design of a modern distributed and
  accelerated linear algebra library}. In \bibinfo{booktitle}{\emph{Proceedings
  of the International Conference for High Performance Computing, Networking,
  Storage and Analysis}}. \bibinfo{pages}{1--18}.
\newblock


\bibitem[\protect\citeauthoryear{Grigori, Demmel, and Xiang}{Grigori
  et~al\mbox{.}}{2008}]%
        {tourn_pivot}
\bibfield{author}{\bibinfo{person}{Laura Grigori}, \bibinfo{person}{James~W
  Demmel}, {and} \bibinfo{person}{Hua Xiang}.} \bibinfo{year}{2008}\natexlab{}.
\newblock \showarticletitle{Communication avoiding Gaussian elimination}. In
  \bibinfo{booktitle}{\emph{SC'08: Proceedings of the 2008 ACM/IEEE Conference
  on Supercomputing}}. IEEE, \bibinfo{pages}{1--12}.
\newblock


\bibitem[\protect\citeauthoryear{Haidar, Tomov, Dongarra, and Higham}{Haidar
  et~al\mbox{.}}{2018}]%
        {dongarra_mixed_precision_LU}
\bibfield{author}{\bibinfo{person}{Azzam Haidar}, \bibinfo{person}{Stanimire
  Tomov}, \bibinfo{person}{Jack Dongarra}, {and} \bibinfo{person}{Nicholas~J
  Higham}.} \bibinfo{year}{2018}\natexlab{}.
\newblock \showarticletitle{Harnessing GPU tensor cores for fast FP16
  arithmetic to speed up mixed-precision iterative refinement solvers}. In
  \bibinfo{booktitle}{\emph{SC18: International Conference for High Performance
  Computing, Networking, Storage and Analysis}}. IEEE,
  \bibinfo{pages}{603--613}.
\newblock


\bibitem[\protect\citeauthoryear{Hoefler et~al\mbox{.}}{Hoefler
  et~al\mbox{.}}{2015}]%
        {mpi3-rma-overview}
\bibfield{author}{\bibinfo{person}{T. Hoefler} {et~al\mbox{.}}}
  \bibinfo{year}{2015}\natexlab{}.
\newblock \showarticletitle{{Remote Memory Access Programming in MPI-3}}.
\newblock \bibinfo{journal}{\emph{TOPC}} (\bibinfo{year}{2015}).
\newblock


\bibitem[\protect\citeauthoryear{Hutter}{Hutter}{[n. d.]}]%
        {choleskycode}
\bibfield{author}{\bibinfo{person}{Edward Hutter}.} \bibinfo{year}{[n.
  d.]}\natexlab{}.
\newblock \bibinfo{title}{Communication-Avoiding Parallelism-Increasing maTrix
  fActorization Library}.
\newblock   (\bibinfo{year}{[n. d.]}).
\newblock
\urldef\tempurl%
\url{https://github.com/huttered40/capital}
\showURL{%
\tempurl}


\bibitem[\protect\citeauthoryear{Hutter and Solomonik}{Hutter and
  Solomonik}{2019}]%
        {choleskyQRnew}
\bibfield{author}{\bibinfo{person}{Edward Hutter} {and} \bibinfo{person}{Edgar
  Solomonik}.} \bibinfo{year}{2019}\natexlab{}.
\newblock \showarticletitle{Communication-avoiding Cholesky-QR2 for rectangular
  matrices}. In \bibinfo{booktitle}{\emph{2019 IEEE International Parallel and
  Distributed Processing Symposium (IPDPS)}}. IEEE, \bibinfo{pages}{89--100}.
\newblock


\bibitem[\protect\citeauthoryear{Intel}{Intel}{2020}]%
        {mkl}
\bibfield{author}{\bibinfo{person}{Intel}.} \bibinfo{year}{2020}\natexlab{}.
\newblock \bibinfo{title}{Math Kernel Library}.
\newblock   (\bibinfo{year}{2020}).
\newblock
\urldef\tempurl%
\url{https://software.intel.com/en-us/mkl}
\showURL{%
\tempurl}


\bibitem[\protect\citeauthoryear{Invernizzi, Nikolov, Querciagrossa, and
  Solcà}{Invernizzi et~al\mbox{.}}{2021}]%
        {dlaf}
\bibfield{author}{\bibinfo{person}{Alberto Invernizzi}, \bibinfo{person}{Teodor
  Nikolov}, \bibinfo{person}{Lara Querciagrossa}, {and}
  \bibinfo{person}{Raffaele Solcà}.} \bibinfo{year}{2021}\natexlab{}.
\newblock \showarticletitle{Distributed Linear Algebra with (HPX) Futures
  (forthcoming)}. In \bibinfo{booktitle}{\emph{Proceedings of the Platform for
  Advanced Scientific Computing Conference}}.
\newblock


\bibitem[\protect\citeauthoryear{Irony et~al\mbox{.}}{Irony
  et~al\mbox{.}}{2004}]%
        {IronyMMM}
\bibfield{author}{\bibinfo{person}{Dror Irony} {et~al\mbox{.}}}
  \bibinfo{year}{2004}\natexlab{}.
\newblock \showarticletitle{Communication Lower Bounds for Distributed-memory
  Matrix Multiplication}.
\newblock \bibinfo{journal}{\emph{JPDC}} (\bibinfo{year}{2004}).
\newblock


\bibitem[\protect\citeauthoryear{Jia-Wei and Kung}{Jia-Wei and Kung}{1981}]%
        {redblue}
\bibfield{author}{\bibinfo{person}{Hong Jia-Wei} {and}
  \bibinfo{person}{Hsiang-Tsung Kung}.} \bibinfo{year}{1981}\natexlab{}.
\newblock \showarticletitle{I/O complexity: The red-blue pebble game}. In
  \bibinfo{booktitle}{\emph{STOC}}.
\newblock


\bibitem[\protect\citeauthoryear{Kabi{\'{c}}, Pintarelli, Kozhevnikov, and
  VandeVondele}{Kabi{\'{c}} et~al\mbox{.}}{2021}]%
        {costa2021}
\bibfield{author}{\bibinfo{person}{Marko Kabi{\'{c}}}, \bibinfo{person}{Simon
  Pintarelli}, \bibinfo{person}{Anton Kozhevnikov}, {and}
  \bibinfo{person}{Joost VandeVondele}.} \bibinfo{year}{2021}\natexlab{}.
\newblock \showarticletitle{COSTA: Communication-Optimal Shuffle and Transpose
  Algorithm with Process Relabeling}. In
  \bibinfo{booktitle}{\emph{International Conference on High Performance
  Computing}}. Springer, \bibinfo{pages}{217--236}.
\newblock


\bibitem[\protect\citeauthoryear{Karp}{Karp}{1988}]%
        {pram}
\bibfield{author}{\bibinfo{person}{Richard~M Karp}.}
  \bibinfo{year}{1988}\natexlab{}.
\newblock \showarticletitle{A survey of parallel algorithms for shared-memory
  machines}.
\newblock  (\bibinfo{year}{1988}).
\newblock


\bibitem[\protect\citeauthoryear{Kestor, Gioiosa, Kerbyson, and Hoisie}{Kestor
  et~al\mbox{.}}{2013}]%
        {kestor2013quantifying}
\bibfield{author}{\bibinfo{person}{Gokcen Kestor}, \bibinfo{person}{Roberto
  Gioiosa}, \bibinfo{person}{Darren~J Kerbyson}, {and} \bibinfo{person}{Adolfy
  Hoisie}.} \bibinfo{year}{2013}\natexlab{}.
\newblock \showarticletitle{Quantifying the energy cost of data movement in
  scientific applications}. In \bibinfo{booktitle}{\emph{2013 IEEE
  international symposium on workload characterization (IISWC)}}. IEEE,
  \bibinfo{pages}{56--65}.
\newblock


\bibitem[\protect\citeauthoryear{Kn{\"u}pfer, R{\"o}ssel, Mey, Biersdorff,
  Diethelm, Eschweiler, Geimer, Gerndt, Lorenz, Malony, Nagel, Oleynik,
  Philippen, Saviankou, Schmidl, Shende, Tsch{\"u}ter, Wagner, Wesarg, and
  Wolf}{Kn{\"u}pfer et~al\mbox{.}}{2012}]%
        {score-p}
\bibfield{author}{\bibinfo{person}{Andreas Kn{\"u}pfer},
  \bibinfo{person}{Christian R{\"o}ssel}, \bibinfo{person}{Dieter~an Mey},
  \bibinfo{person}{Scott Biersdorff}, \bibinfo{person}{Kai Diethelm},
  \bibinfo{person}{Dominic Eschweiler}, \bibinfo{person}{Markus Geimer},
  \bibinfo{person}{Michael Gerndt}, \bibinfo{person}{Daniel Lorenz},
  \bibinfo{person}{Allen Malony}, \bibinfo{person}{Wolfgang~E. Nagel},
  \bibinfo{person}{Yury Oleynik}, \bibinfo{person}{Peter Philippen},
  \bibinfo{person}{Pavel Saviankou}, \bibinfo{person}{Dirk Schmidl},
  \bibinfo{person}{Sameer Shende}, \bibinfo{person}{Ronny Tsch{\"u}ter},
  \bibinfo{person}{Michael Wagner}, \bibinfo{person}{Bert Wesarg}, {and}
  \bibinfo{person}{Felix Wolf}.} \bibinfo{year}{2012}\natexlab{}.
\newblock \showarticletitle{Score-P: A Joint Performance Measurement Run-Time
  Infrastructure for Periscope,Scalasca, TAU, and Vampir}. In
  \bibinfo{booktitle}{\emph{Tools for High Performance Computing 2011}},
  \bibfield{editor}{\bibinfo{person}{Holger Brunst},
  \bibinfo{person}{Matthias~S. M{\"u}ller}, \bibinfo{person}{Wolfgang~E.
  Nagel}, {and} \bibinfo{person}{Michael~M. Resch}} (Eds.).
  \bibinfo{publisher}{Springer Berlin Heidelberg}, \bibinfo{address}{Berlin,
  Heidelberg}, \bibinfo{pages}{79--91}.
\newblock
\showISBNx{978-3-642-31476-6}


\bibitem[\protect\citeauthoryear{Krishnamoorthy and Menon}{Krishnamoorthy and
  Menon}{2013}]%
        {krishnamoorthy2013matrix}
\bibfield{author}{\bibinfo{person}{Aravindh Krishnamoorthy} {and}
  \bibinfo{person}{Deepak Menon}.} \bibinfo{year}{2013}\natexlab{}.
\newblock \showarticletitle{Matrix inversion using Cholesky decomposition}. In
  \bibinfo{booktitle}{\emph{2013 signal processing: Algorithms, architectures,
  arrangements, and applications (SPA)}}. IEEE, \bibinfo{pages}{70--72}.
\newblock


\bibitem[\protect\citeauthoryear{Kuhn and Tucker}{Kuhn and Tucker}{2014}]%
        {kkt}
\bibfield{author}{\bibinfo{person}{Harold~W Kuhn} {and}
  \bibinfo{person}{Albert~W Tucker}.} \bibinfo{year}{2014}\natexlab{}.
\newblock \showarticletitle{Nonlinear programming}.
\newblock In \bibinfo{booktitle}{\emph{Traces and emergence of nonlinear
  programming}}. \bibinfo{publisher}{Springer}, \bibinfo{pages}{247--258}.
\newblock


\bibitem[\protect\citeauthoryear{K{\"u}hne, Iannuzzi, Del~Ben, Rybkin, Seewald,
  Stein, Laino, Khaliullin, Sch{\"u}tt, Schiffmann, et~al\mbox{.}}{K{\"u}hne
  et~al\mbox{.}}{2020}]%
        {cp2k}
\bibfield{author}{\bibinfo{person}{Thomas~D K{\"u}hne},
  \bibinfo{person}{Marcella Iannuzzi}, \bibinfo{person}{Mauro Del~Ben},
  \bibinfo{person}{Vladimir~V Rybkin}, \bibinfo{person}{Patrick Seewald},
  \bibinfo{person}{Frederick Stein}, \bibinfo{person}{Teodoro Laino},
  \bibinfo{person}{Rustam~Z Khaliullin}, \bibinfo{person}{Ole Sch{\"u}tt},
  \bibinfo{person}{Florian Schiffmann}, {et~al\mbox{.}}}
  \bibinfo{year}{2020}\natexlab{}.
\newblock \showarticletitle{CP2K: An electronic structure and molecular
  dynamics software package-Quickstep: Efficient and accurate electronic
  structure calculations}.
\newblock \bibinfo{journal}{\emph{The Journal of Chemical Physics}}
  \bibinfo{volume}{152}, \bibinfo{number}{19} (\bibinfo{year}{2020}),
  \bibinfo{pages}{194103}.
\newblock


\bibitem[\protect\citeauthoryear{Kwasniewski, Kabić, Besta, VandeVondele,
  Solcà, and Hoefler}{Kwasniewski et~al\mbox{.}}{2019}]%
        {COSMA}
\bibfield{author}{\bibinfo{person}{Grzegorz Kwasniewski},
  \bibinfo{person}{Marko Kabić}, \bibinfo{person}{Maciej Besta},
  \bibinfo{person}{Joost VandeVondele}, \bibinfo{person}{Raffaele Solcà},
  {and} \bibinfo{person}{Torsten Hoefler}.} \bibinfo{year}{2019}\natexlab{}.
\newblock \showarticletitle{{Red-Blue Pebbling Revisited: Near Optimal Parallel
  Matrix-Matrix Multiplication}}. In \bibinfo{booktitle}{\emph{Proceedings of
  the International Conference for High Performance Computing, Networking,
  Storage and Analysis (SC19)}}.
\newblock
\newblock
\shownote{Extended technical report available at
  https://arxiv.org/abs/1908.09606.}


\bibitem[\protect\citeauthoryear{Liu}{Liu}{2018}]%
        {redblueHard_}
\bibfield{author}{\bibinfo{person}{Quanquan Liu}.}
  \bibinfo{year}{2018}\natexlab{}.
\newblock \showarticletitle{Red-Blue and Standard Pebble Games : Complexity and
  Applications in the Sequential and Parallel Models}.
\newblock


\bibitem[\protect\citeauthoryear{Loomis and Whitney}{Loomis and
  Whitney}{1949}]%
        {loomisWhitney}
\bibfield{author}{\bibinfo{person}{L.~H. Loomis} {and} \bibinfo{person}{H.
  Whitney}.} \bibinfo{year}{1949}\natexlab{}.
\newblock \showarticletitle{An inequality related to the isoperimetric
  inequality}.
\newblock \bibinfo{journal}{\emph{Bull. Amer. Math. Soc.}}
  \bibinfo{volume}{55}, \bibinfo{number}{10} (\bibinfo{date}{10}
  \bibinfo{year}{1949}), \bibinfo{pages}{961--962}.
\newblock


\bibitem[\protect\citeauthoryear{Mehta, Lin, and Yew}{Mehta
  et~al\mbox{.}}{2014}]%
        {mehta2014revisiting}
\bibfield{author}{\bibinfo{person}{Sanyam Mehta}, \bibinfo{person}{Pei-Hung
  Lin}, {and} \bibinfo{person}{Pen-Chung Yew}.}
  \bibinfo{year}{2014}\natexlab{}.
\newblock \showarticletitle{Revisiting loop fusion in the polyhedral
  framework}. In \bibinfo{booktitle}{\emph{Proceedings of the 19th ACM SIGPLAN
  symposium on Principles and practice of parallel programming}}.
  \bibinfo{pages}{233--246}.
\newblock


\bibitem[\protect\citeauthoryear{Meyer}{Meyer}{2000}]%
        {meyer2000matrix}
\bibfield{author}{\bibinfo{person}{Carl~D Meyer}.}
  \bibinfo{year}{2000}\natexlab{}.
\newblock \bibinfo{booktitle}{\emph{Matrix analysis and applied linear
  algebra}}.
\newblock \bibinfo{publisher}{SIAM}.
\newblock


\bibitem[\protect\citeauthoryear{NVIDIA}{NVIDIA}{2020}]%
        {cusolver}
\bibfield{author}{\bibinfo{person}{NVIDIA}.} \bibinfo{year}{2020}\natexlab{}.
\newblock \bibinfo{title}{{CUSOLVER} Reference Guide}.
\newblock   (\bibinfo{year}{2020}).
\newblock
\urldef\tempurl%
\url{https://docs.nvidia.com/cuda/cusolver}
\showURL{%
\tempurl}


\bibitem[\protect\citeauthoryear{Olivry, Langou, Pouchet, Sadayappan, and
  Rastello}{Olivry et~al\mbox{.}}{2020}]%
        {olivry2020automated}
\bibfield{author}{\bibinfo{person}{Auguste Olivry}, \bibinfo{person}{Julien
  Langou}, \bibinfo{person}{Louis-No{\"e}l Pouchet}, \bibinfo{person}{P
  Sadayappan}, {and} \bibinfo{person}{Fabrice Rastello}.}
  \bibinfo{year}{2020}\natexlab{}.
\newblock \showarticletitle{Automated derivation of parametric data movement
  lower bounds for affine programs}. In \bibinfo{booktitle}{\emph{Proceedings
  of the 41st ACM SIGPLAN Conference on Programming Language Design and
  Implementation}}. \bibinfo{pages}{808--822}.
\newblock


\bibitem[\protect\citeauthoryear{Osawa, Tsuji, Ueno, Naruse, Yokota, and
  Matsuoka}{Osawa et~al\mbox{.}}{2019}]%
        {osawa2019large}
\bibfield{author}{\bibinfo{person}{Kazuki Osawa}, \bibinfo{person}{Yohei
  Tsuji}, \bibinfo{person}{Yuichiro Ueno}, \bibinfo{person}{Akira Naruse},
  \bibinfo{person}{Rio Yokota}, {and} \bibinfo{person}{Satoshi Matsuoka}.}
  \bibinfo{year}{2019}\natexlab{}.
\newblock \showarticletitle{Large-scale distributed second-order optimization
  using kronecker-factored approximate curvature for deep convolutional neural
  networks}. In \bibinfo{booktitle}{\emph{Proceedings of the IEEE/CVF
  Conference on Computer Vision and Pattern Recognition}}.
  \bibinfo{pages}{12359--12367}.
\newblock


\bibitem[\protect\citeauthoryear{Poulson, Marker, Van~de Geijn, Hammond, and
  Romero}{Poulson et~al\mbox{.}}{2013}]%
        {poulson2013elemental}
\bibfield{author}{\bibinfo{person}{Jack Poulson}, \bibinfo{person}{Bryan
  Marker}, \bibinfo{person}{Robert~A Van~de Geijn}, \bibinfo{person}{Jeff~R
  Hammond}, {and} \bibinfo{person}{Nichols~A Romero}.}
  \bibinfo{year}{2013}\natexlab{}.
\newblock \showarticletitle{Elemental: A new framework for distributed memory
  dense matrix computations}.
\newblock \bibinfo{journal}{\emph{ACM Transactions on Mathematical Software
  (TOMS)}} \bibinfo{volume}{39}, \bibinfo{number}{2} (\bibinfo{year}{2013}),
  \bibinfo{pages}{1--24}.
\newblock


\bibitem[\protect\citeauthoryear{Quintana-Ort{\'\i}, Quintana-Ort{\'\i}, Geijn,
  Zee, and Chan}{Quintana-Ort{\'\i} et~al\mbox{.}}{2009}]%
        {incrementalPivoting}
\bibfield{author}{\bibinfo{person}{Gregorio Quintana-Ort{\'\i}},
  \bibinfo{person}{Enrique~S Quintana-Ort{\'\i}}, \bibinfo{person}{Robert A
  Van~De Geijn}, \bibinfo{person}{Field G~Van Zee}, {and}
  \bibinfo{person}{Ernie Chan}.} \bibinfo{year}{2009}\natexlab{}.
\newblock \showarticletitle{Programming matrix algorithms-by-blocks for
  thread-level parallelism}.
\newblock \bibinfo{journal}{\emph{ACM Transactions on Mathematical Software
  (TOMS)}} \bibinfo{volume}{36}, \bibinfo{number}{3} (\bibinfo{year}{2009}),
  \bibinfo{pages}{1--26}.
\newblock


\bibitem[\protect\citeauthoryear{Rabenseifner and Tr{\"a}ff}{Rabenseifner and
  Tr{\"a}ff}{2004}]%
        {butterfly}
\bibfield{author}{\bibinfo{person}{Rolf Rabenseifner} {and}
  \bibinfo{person}{Jesper~Larsson Tr{\"a}ff}.} \bibinfo{year}{2004}\natexlab{}.
\newblock \showarticletitle{More efficient reduction algorithms for
  non-power-of-two number of processors in message-passing parallel systems}.
  In \bibinfo{booktitle}{\emph{European Parallel Virtual Machine/Message
  Passing Interface Users’ Group Meeting}}. Springer,
  \bibinfo{pages}{36--46}.
\newblock


\bibitem[\protect\citeauthoryear{Sethi}{Sethi}{1975}]%
        {sethi1975complete}
\bibfield{author}{\bibinfo{person}{Ravi Sethi}.}
  \bibinfo{year}{1975}\natexlab{}.
\newblock \showarticletitle{Complete register allocation problems}.
\newblock \bibinfo{journal}{\emph{SIAM journal on Computing}}
  \bibinfo{volume}{4}, \bibinfo{number}{3} (\bibinfo{year}{1975}),
  \bibinfo{pages}{226--248}.
\newblock


\bibitem[\protect\citeauthoryear{Solomonik}{Solomonik}{2014}]%
        {candmc}
\bibfield{author}{\bibinfo{person}{Edgar Solomonik}.}
  \bibinfo{year}{2014}\natexlab{}.
\newblock \emph{\bibinfo{title}{Provably efficient algorithms for numerical
  tensor algebra}}.
\newblock \bibinfo{thesistype}{Ph.D. Dissertation}. \bibinfo{school}{UC
  Berkeley}.
\newblock


\bibitem[\protect\citeauthoryear{Solomonik}{Solomonik}{2021}]%
        {candmccode}
\bibfield{author}{\bibinfo{person}{Edgar Solomonik}.}
  \bibinfo{year}{2021}\natexlab{}.
\newblock \bibinfo{title}{{Communication Avoiding Numerical Dense Matrix
  Computations}}.
\newblock   (\bibinfo{year}{2021}).
\newblock
\urldef\tempurl%
\url{{https://github.com/solomonik/CANDMC}}
\showURL{%
\tempurl}


\bibitem[\protect\citeauthoryear{Solomonik et~al\mbox{.}}{Solomonik
  et~al\mbox{.}}{2016}]%
        {edgarTradeoff}
\bibfield{author}{\bibinfo{person}{Edgar Solomonik} {et~al\mbox{.}}}
  \bibinfo{year}{2016}\natexlab{}.
\newblock \showarticletitle{Trade-Offs Between Synchronization, Communication,
  and Computation in Parallel Linear Algebra omputations}.
\newblock \bibinfo{journal}{\emph{TOPC}} (\bibinfo{year}{2016}).
\newblock


\bibitem[\protect\citeauthoryear{Solomonik et~al\mbox{.}}{Solomonik
  et~al\mbox{.}}{2017}]%
        {maciejBC}
\bibfield{author}{\bibinfo{person}{E. Solomonik} {et~al\mbox{.}}}
  \bibinfo{year}{2017}\natexlab{}.
\newblock \showarticletitle{{Scaling Betweenness Centrality using
  Communication-Efficient Sparse Matrix Multiplication}}. In
  \bibinfo{booktitle}{\emph{SC}}.
\newblock


\bibitem[\protect\citeauthoryear{Solomonik and Demmel}{Solomonik and
  Demmel}{2011}]%
        {2.5DLU}
\bibfield{author}{\bibinfo{person}{Edgar Solomonik} {and}
  \bibinfo{person}{James Demmel}.} \bibinfo{year}{2011}\natexlab{}.
\newblock \showarticletitle{Communication-Optimal Parallel {2.5D} Matrix
  Multiplication and {LU} Factorization Algorithms}.
\newblock In \bibinfo{booktitle}{\emph{Euro-Par 2011 Parallel Processing}},
  \bibfield{editor}{\bibinfo{person}{Emmanuel Jeannot},
  \bibinfo{person}{Raymond Namyst}, {and} \bibinfo{person}{Jean Roman}} (Eds.).
  \bibinfo{series}{Lecture Notes in Computer Science},
  Vol.~\bibinfo{volume}{6853}. \bibinfo{publisher}{Springer Berlin Heidelberg},
  \bibinfo{pages}{90--109}.
\newblock
\showISBNx{978-3-642-23396-8}
\urldef\tempurl%
\url{https://doi.org/10.1007/978-3-642-23397-5_10}
\showDOI{\tempurl}


\bibitem[\protect\citeauthoryear{{TOP500 list}}{{TOP500 list}}{2020}]%
        {top500}
\bibfield{author}{\bibinfo{person}{{TOP500 list}}.}
  \bibinfo{year}{2020}\natexlab{}.
\newblock \bibinfo{title}{{November 2019 TOP500 list}}.
\newblock \bibinfo{howpublished}{{https://www.top500.org/lists/2019/11/}
  (April. 2020)}.   (\bibinfo{year}{2020}).
\newblock


\bibitem[\protect\citeauthoryear{{Unat}, {Dubey}, {Hoefler}, {Shalf},
  {Abraham}, {Bianco}, {Chamberlain}, {Cledat}, {Edwards}, {Finkel},
  {Fuerlinger}, {Hannig}, {Jeannot}, {Kamil}, {Keasler}, {Kelly}, {Leung},
  {Ltaief}, {Maruyama}, {Newburn}, and {Pericás}}{{Unat}
  et~al\mbox{.}}{2017}]%
        {padal}
\bibfield{author}{\bibinfo{person}{D. {Unat}}, \bibinfo{person}{A. {Dubey}},
  \bibinfo{person}{T. {Hoefler}}, \bibinfo{person}{J. {Shalf}},
  \bibinfo{person}{M. {Abraham}}, \bibinfo{person}{M. {Bianco}},
  \bibinfo{person}{B.~L. {Chamberlain}}, \bibinfo{person}{R. {Cledat}},
  \bibinfo{person}{H.~C. {Edwards}}, \bibinfo{person}{H. {Finkel}},
  \bibinfo{person}{K. {Fuerlinger}}, \bibinfo{person}{F. {Hannig}},
  \bibinfo{person}{E. {Jeannot}}, \bibinfo{person}{A. {Kamil}},
  \bibinfo{person}{J. {Keasler}}, \bibinfo{person}{P.~H.~J. {Kelly}},
  \bibinfo{person}{V. {Leung}}, \bibinfo{person}{H. {Ltaief}},
  \bibinfo{person}{N. {Maruyama}}, \bibinfo{person}{C.~J. {Newburn}}, {and}
  \bibinfo{person}{M. {Pericás}}.} \bibinfo{year}{2017}\natexlab{}.
\newblock \showarticletitle{Trends in Data Locality Abstractions for HPC
  Systems}.
\newblock \bibinfo{journal}{\emph{IEEE Transactions on Parallel and Distributed
  Systems}} \bibinfo{volume}{28}, \bibinfo{number}{10} (\bibinfo{year}{2017}),
  \bibinfo{pages}{3007--3020}.
\newblock


\bibitem[\protect\citeauthoryear{Vitter}{Vitter}{1998}]%
        {vitter1998external}
\bibfield{author}{\bibinfo{person}{Jeffrey~Scott Vitter}.}
  \bibinfo{year}{1998}\natexlab{}.
\newblock \showarticletitle{External memory algorithms}. In
  \bibinfo{booktitle}{\emph{European Symposium on Algorithms}}. Springer,
  \bibinfo{pages}{1--25}.
\newblock


\bibitem[\protect\citeauthoryear{Zheng and Lafferty}{Zheng and
  Lafferty}{2016}]%
        {rectangularML}
\bibfield{author}{\bibinfo{person}{Qinqing Zheng} {and}
  \bibinfo{person}{John~D. Lafferty}.} \bibinfo{year}{2016}\natexlab{}.
\newblock \showarticletitle{Convergence Analysis for Rectangular Matrix
  Completion Using Burer-Monteiro Factorization and Gradient Descent}.
\newblock \bibinfo{journal}{\emph{CoRR}} (\bibinfo{year}{2016}).
\newblock


\bibitem[\protect\citeauthoryear{Ziogas, Ben-Nun, Fern{\'a}ndez, Schneider,
  Luisier, and Hoefler}{Ziogas et~al\mbox{.}}{2019}]%
        {gb19}
\bibfield{author}{\bibinfo{person}{Alexandros~Nikolaos Ziogas},
  \bibinfo{person}{Tal Ben-Nun}, \bibinfo{person}{Guillermo~Indalecio
  Fern{\'a}ndez}, \bibinfo{person}{Timo Schneider}, \bibinfo{person}{Mathieu
  Luisier}, {and} \bibinfo{person}{Torsten Hoefler}.}
  \bibinfo{year}{2019}\natexlab{}.
\newblock \showarticletitle{A data-centric approach to extreme-scale ab initio
  dissipative quantum transport simulations}. In
  \bibinfo{booktitle}{\emph{Proceedings of the International Conference for
  High Performance Computing, Networking, Storage and Analysis}}.
  \bibinfo{pages}{1--13}.
\newblock


\end{thebibliography}

\clearpage

%
%
%
%
%
%
%
%
%
%
%
%
%
%
%
%
%
%
%
%
%
%
%
%
%

\end{document}